\documentclass[authoryear,12pt]{article}

\usepackage[T1]{fontenc}
\usepackage[utf8]{inputenc}
\usepackage[english]{babel}

\usepackage[resetlabels]{multibib}
\newcites{oa}{References}
\usepackage[round]{natbib}

\usepackage{mathpazo}
\usepackage{setspace}
\usepackage[top=3cm, bottom=3cm, left=3cm, right=3cm]{geometry}
\usepackage{fancyhdr}
\usepackage{ragged2e}
\usepackage{changepage}
\usepackage{multicol}
\usepackage{rotating}
\usepackage{pdflscape}
\usepackage{afterpage}

\usepackage{graphicx}
\usepackage{pdfpages}
\usepackage{adjustbox}
\usepackage{pgfplots}
\pgfplotsset{compat=1.18}
\usepackage[tikz]{bclogo}

\usepackage{comment}
\usepackage{paralist}
\usepackage{enumitem}
\usepackage{listings}
\usepackage{nth}
\usepackage{lettrine}
\usepackage{textcomp}
\usepackage{pifont}
\usepackage{fancybox}

\usepackage[english,ruled,vlined,resetcount]{algorithm2e}

\SetKwComment{Comment}{/* }{ */}
\SetAlgoCaptionSeparator{}

\usepackage{amsmath}
\usepackage{amsthm}
\usepackage{mathtools}
\usepackage{amsfonts}
\usepackage{mathrsfs}
\usepackage{upgreek}
\usepackage{mathabx}
\usepackage{wasysym}
\usepackage{dsfont}
\usepackage{bbm}

\usepackage{caption}
\usepackage{multirow}
\usepackage{booktabs}
\usepackage{supertabular}
\usepackage{longtable}
\usepackage{tabularx}
\usepackage{threeparttable}
\usepackage{array}
\usepackage{colortbl}
\usepackage{dcolumn}
\usepackage{siunitx}

\newcolumntype{P}[1]{>{\centering\arraybackslash}p{#1}}
\newcolumntype{d}[1]{D..{#1}}

\usepackage{footnote}

\usepackage[dvipsnames]{xcolor}

\definecolor{deepblue}{RGB}{0,0,110}
\definecolor{deepred}{RGB}{110,0,0}
\definecolor{colari}{rgb}{0.7,0,0.7}
\definecolor{coland}{rgb}{0,0.7,0.4}

\usepackage{xurl}
\usepackage[
    pdftitle={Asymmetries in Peer Effects},
    pdfauthor={Aristide Houndetoungan, Mathieu Lambotte},
    colorlinks=true,
    allcolors=blue
]{hyperref}

\usepackage[open,openlevel=2,atend,numbered]{bookmark}
\usepackage[nameinlink]{cleveref}

\hypersetup{
    linkcolor=deepblue,
    citecolor=deepblue,
    urlcolor=deepblue
}

\usepackage{titling}
\usepackage{xpatch}
\usepackage{titlesec}
\usepackage{etoolbox}

\titleformat{\section}
    {\large\bfseries}
    {\thesection}
    {1em}
    {}

\titleformat{\subsection}
    {\large\bfseries}
    {\thesubsection}
    {1em}
    {}

\titleformat{\paragraph}
    {\normalfont\normalsize\bfseries}
    {\theparagraph}
    {0.7em}
    {}

\titlespacing*{\paragraph}
    {0pt}
    {1ex plus 1ex minus .2ex}
    {1pt}

\makeatletter

\patchcmd{\@afterheading}
    {\clubpenalty \@M}
    {\clubpenalty 0}
    {}{}

\patchcmd{\equation}
    {\@beginparpenalty\predisplaypenalty}
    {}
    {}{}

\newtheorem{proposition}{Proposition}
\newtheorem{lemma}{Lemma}

\newtheorem{assumption}{Assumption}

\newcommand{\neutralize}[1]{%
    \expandafter\let\csname c@#1\endcsname\count@
}

\let\endtitlepage\relax

\newenvironment{mytitlepage}
    {\begin{titlepage}\def\@thanks{}}
    {\end{titlepage}}

\xpatchcmd{\titlepage}
    {\setcounter{page}\@ne}
    {}
    {}{}

\xpatchcmd{\endtitlepage}
    {\setcounter{page}\@ne}
    {}
    {}{}

\newcommand{\ostar}{\mathbin{\mathpalette\make@circled\star}}

\newcommand{\make@circled}[2]{%
    \ooalign{%
        $\m@th#1\smallbigcirc{#1}$\cr
        \hidewidth$\m@th#1#2$\hidewidth\cr
    }%
}

\newcommand{\smallbigcirc}[1]{%
    \vcenter{%
        \hbox{%
            \scalebox{0.77778}{$\m@th#1\bigcirc$}%
        }%
    }%
}

\makeatother

\DeclareMathOperator{\plim}{plim}

\usepackage{authblk}
\newcommand{\TITLE}{Asymmetries in Peer Effects}
\title{\vspace{-1cm} \TITLE\footnote[1]{\fontsize{10pt}{10pt} We are grateful to Vincent Boucher for helpful comments and discussions.
We acknowledge financial support from the Social Sciences and Humanities Research Council (SSHRC) under Grant CH150174.\\
This research uses data from Add Health, a program directed by Kathleen Mullan Harris and designed by J. Richard Udry, Peter S. Bearman, and Kathleen Mullan Harris. Special acknowledgment is given to Ronald R. Rindfuss and Barbara Entwisle for assistance in the original design. Information on how to obtain Add Health data files is available on the Add Health website.\\\ \vspace{-.2cm}\\
Email addresses: \href{mailto:ahoundetoungan@ecn.ulaval.ca}{ahoundetoungan@ecn.ulaval.ca} (A. Houndetoungan),\\ \href{mailto:mathieu.lambotte@univ-rennes.fr}{mathieu.lambotte@univ-rennes.fr} (M. Lambotte)\\\ \vspace{-.2cm}\\
An R package, including all replication codes, is available at: \url{https://github.com/MathieuLambotte/AsyPeer}.} \vspace{-.5cm}}

\author[a]{Aristide Houndetoungan}
\author[b]{Mathieu Lambotte}
\affil[a]{\normalsize\emph{Laval University, Quebec, Canada}\vspace{-2pt}}
\affil[b]{\normalsize\emph{Rennes University, CNRS, CREM – UMR6211, F-35000 Rennes France}\vspace{-2pt}}

\date{\small August 2026}
\begin{document}
\setlength{\abovedisplayskip}{6pt}  % default is ~12pt
\setlength{\belowdisplayskip}{6pt}

\begin{mytitlepage}
\maketitle

\vspace{-1cm}

\begin{abstract}
\vspace{-0.5cm}
\singlespacing
\small \noindent 
Individuals are often influenced by their peers because deviating from prevailing behavior entails social costs. However, existing peer effects models typically assume that individuals respond similarly to peers who perform better or worse than they do. This paper introduces a novel structural model of asymmetric peer effects in which conformity incentives depend on whether individuals perform below or above each of their peers. We establish that the model admits a unique equilibrium and show that its parameters can be identified and estimated through simple moment conditions. Applying our method to several student outcomes, we uncover strong evidence of asymmetries in peer effects. We then demonstrate that these asymmetries are highly policy-relevant by studying targeted interventions under budget constraints. Ignoring asymmetries leads to inefficient treatment allocation and substantial welfare losses, reducing welfare to levels comparable to those in a benchmark without social interactions.

\vspace{0.5cm}

\textbf{Keywords}: Social interactions, peer effects, asymmetric conformity, welfare analysis, targeted interventions

\textbf{JEL classification}: C31, C72, D61, D85
\end{abstract}
\end{mytitlepage}

\clearpage
\section{Introduction}
Interactions with peers and friends shape many economic and social behaviors. These social influences, commonly referred to as peer effects, have long been central to economics because they help explain how behaviors spread and inform policy design \citep[see][]{Manski1993, de2017econometrics, bramoulle2020peer, zenou2025peer}. These influences often operate through social comparison, generating concerns about status, competition, social approval, or stigmatization \citep[see][]{bernheim1994theory,akerlof1997,luttmer2005neighbors,benabou2006incentives,akerlof2000economics}.

A central tenet of the social comparison literature is that the direction of comparison matters: individuals often react differently to peers who perform above them than to peers who perform below them \citep{festinger1954theory, wills1981downward, card2012inequality}. For instance, a student's study effort may depend on whether her friends study more or less than she does \citep{feld2017understanding}. Yet existing models of peer effects largely ignore this distinction by treating peers who outperform an individual and those who underperform her as exerting the same influence. The standard linear-in-means model, for example, treats deviations above and below peers symmetrically \citep{blume_linear_2015}. As a result, this model may fail to capture how individuals actually respond to their peers and may therefore provide misleading guidance for policy design. Recent contributions that introduce heterogeneity in peer effects through the distribution of peers' outcomes \citep[e.g.,][]{boucher2023, herstad2026identification} likewise do not accommodate asymmetries in peer effects.

In this paper, we bridge the gap between social comparison theory and the peer effects literature by introducing a new model in which individuals respond differently to peers who perform above them than to those who perform below them. We rationalize this asymmetry through a game of conformity in which upward and downward comparisons affect preferences differently. We establish the existence and uniqueness of equilibrium and derive an econometric framework to identify and estimate both asymmetric effects. We show that the standard linear-in-means peer effects model estimates a weighted average of the underlying asymmetric effects, with weights that may be negative. As a result, the estimated symmetric peer effect may fall outside the range defined by the two asymmetric effects. Applying the proposed model to several student outcomes, we find pervasive evidence of asymmetric peer influence: for some outcomes, peers who perform above the individual exert strong effects, while those who perform below have little or no influence; for other outcomes, the reverse occurs. We show that these asymmetries are highly policy-relevant by studying a treatment allocation problem under limited resources. We find that ignoring asymmetry in peer effects leads to an inefficient treatment allocation, resulting in substantial losses in spillovers and welfare.

In our model, individual preferences are characterized by a utility function in which the influence of each friend depends on the friend's \textit{status} as higher- or lower-performing relative to the individual's performance. We impose no restrictions on which type of peer exerts a stronger influence. Instead, we allow the direction of influence to be determined by the data. This flexibility, however, introduces several complications in the game. Unlike in many peer effects models, the outcome of individual $i$, denoted by $y_i$, cannot be written explicitly as a function of peer outcomes, because peer statuses depend not only on the distribution of peer outcomes, but also on $y_i$ itself. As a result, the best-response function is implicit. Moreover, this function is not everywhere differentiable, as peer statuses switch discretely when $y_i$ crosses the outcome of one of her peers. %Consequently, standard approaches for solving peer effect models do not apply directly to our setting. PLS, REMOVE MY CHANGES IF YOU DON'T LIKE IT
To show the uniqueness of the Nash equilibrium, we exploit the game's topological structure to establish that the best-response mapping remains a contraction under reasonable conditions.

We develop an econometric framework to identify both peer effect parameters. Our identification conditions extend those of the standard model \citep{Bramoulle2009} to accommodate asymmetry in peer effects. In practice, estimating our model requires exogenous predictions of peer statuses as instruments. We rely on flexible machine learning methods to construct such instruments. To establish consistency and asymptotic normality despite the use of generated instruments, our inference builds on recent developments in double/debiased machine learning \citep{Chernozhukov2018Double}. We conduct a Monte Carlo study showing that our approach performs well in finite samples. The simulations also illustrate, through concrete examples, how the standard symmetric model may estimate peer effects that fall outside the range of the asymmetric effects.

Using data from the National Longitudinal Study of Adolescent to Adult Health (Add Health), we show that many student activities exhibit asymmetric peer effects. For three of the four outcomes we study---namely, smoking, fighting, and optimism---we find significant asymmetry, although its direction differs across outcomes. For smoking, both types of peers exert strong effects, but peers who smoke more than the individual exert a larger influence. For fighting, individuals tend to conform strongly to more aggressive peers, while less aggressive peers exert little or no influence. For optimism, individuals tend to conform to more pessimistic peers, while more optimistic peers exert little or no influence. For drinking, by contrast, we find evidence that peer effects are approximately symmetric.

%Our empirical results provide new insights into the empirical literature. For socially undesirable or risky behaviors (smoking, fighting, and drinking) students tend to be more responsive to friends who exhibit higher levels of the behavior, although the asymmetry is not statistically significant for drinking. This pattern is consistent with peer-norm conformity or reference-group influence, whereby adolescents adjust their behavior toward salient peers to gain approval or avoid exclusion. For optimism, however, the stronger influence of more pessimistic friends is consistent with downward-comparison mechanisms, whereby exposure to less optimistic peers shapes self-assessments, expectations, or affect more strongly than exposure to more optimistic peers.

We further examine the policy implications of these empirical findings by studying targeting policies in which, because of limited resources, only a subset of students is treated, with the objective of reducing smoking, fighting, or drinking, or increasing optimism \citep{ballester2006,galeotti2020}. We find that ignoring asymmetry leads to inefficient treatment allocation, generating substantial losses in spillovers. Indeed, even when the symmetric model identifies a strong peer effect, using this model to select students for treatment may nevertheless generate little or no spillover effects, similar to those obtained in a benchmark without social interactions. This occurs particularly for fighting and optimism, where peer effects exhibit large asymmetries. By ignoring these asymmetries, the symmetric model often selects students who exert little influence under the true asymmetric model. The implemented policy therefore fails to fully leverage interactions among individuals. We show that this inefficiency generates substantial welfare losses.

This paper primarily contributes to the literature on how social interactions shape individual behavior \citep[see][]{de2017econometrics, bramoulle2020peer, zenou2025peer}. Recent developments highlight the importance of heterogeneity in peer effects, either through the distribution of peer outcomes \citep[see][]{carrell2010sex, boucher2023, herstad2026identification} or through predefined groups of peers \citep{comola2025heterogeneous, houndetoungan2026count}. We contribute to this literature by developing a new structural model that identifies distinct conformity effects from higher- and lower-performing friends. Our model differs from existing approaches because heterogeneity depends not only on the distribution of peer outcomes, but also on the individual's own outcome. This allows us to empirically test behavioral mechanisms that have long been emphasized in the theory of social comparison. Some papers also document heterogeneous peer effects by interacting individuals' performance types, such as high- or low-performing status, with those of their peers \citep[see, e.g.,][]{luttmer2005neighbors,feld2017understanding,carrell2013natural}. However, these analyses are reduced-form and are not derived from an underlying behavioral model. One exception is \cite{lambotte2025}, who estimates asymmetric peer effects in a binary-outcome game. In that case, however, the binary nature of the outcome makes the distinction between higher- and lower-performing peers equivalent to estimating separate effects for each of the two actions an individual can take.

We also contribute to the empirical literature on peer effects \citep{sacerdote2014experimental} by documenting asymmetric peer influence across several outcomes. Finally, by showing that ignoring asymmetry leads to suboptimal targeting decisions and substantial welfare losses, we contribute to the literature on optimal treatment allocation \citep{kitagawa2018should, galeotti2020, viviano2025policy}.

The remainder of the paper is organized as follows. Section \ref{sec:micro} presents the microfoundation of the structural model. Section \ref{sec:metrics} presents our identification approach and estimation method, and documents the bias of the standard symmetric model. Section \ref{sec:MCsimu} reports the results of the Monte Carlo study. Section \ref{sec:empirics} provides an empirical application using the Add Health data. Section \ref{sec:conclu} concludes.

\section{Microfoundations}\label{sec:micro}
\noindent We consider a complete information game with $n$ individuals who interact through a network, where $n \geq 2$. The network is characterized by an adjacency matrix $\mathbf{G} = [g_{ij}]_{i,j=1}^n$, where $g_{ij} \geq 0$ measures the strength of the link from $i$ to $j$. We allow for flexible network specifications: the network can be weighted or unweighted, and directed or undirected. For expositional simplicity, however, we focus on row-normalized unweighted networks; that is, $g_{ij} = \frac{1}{n_i}$ if $j$ is a friend (peer) of $i$, and $g_{ij} = 0$ otherwise, where $n_i$ denotes the number of friends of $i$. Additionally,  $g_{ii} = 0$ for all $i$, implying that individuals do not interact with themselves.

Individuals choose their actions (e.g., academic effort), measured by a continuous variable $y_i \in \mathbb{R}$, taking into account their friends' actions. Preferences are represented by the following utility function:
\begin{equation}\label{eq:utility}
 U_i(y_i, \mathbf y_{-i})=\alpha_i y_i - \dfrac{y_i^2}{2} - \mathcal{S}_i(y_i, \mathbf y_{-i}),
\end{equation}
where $\mathbf{y}_{-i} = (y_1, \dots, y_{i-1}, y_{i+1}, \dots, y_n)$ is a vector of other players' actions. 

The utility function \eqref{eq:utility} is additively separable into private and social components. The private component is given by $\alpha_i y_i - \frac{y_i^2}{2}$, where $\alpha_i$ denotes idiosyncratic productivity and $\frac{y_i^2}{2}$ represents the private cost of choosing $y_i$. The term $\mathcal{S}_i(y_i, \mathbf{y}_{-i})$ represents a social distance (or social cost) and captures the effect of differences between individual $i$'s effort and their friends' efforts \citep{akerlof1997}. We introduce asymmetry into the social distance function by allowing social pressure from friends to depend on whether they exert higher or lower levels of effort than $i$.

\subsection{Asymmetric Preferences for Conformity}
Most papers rely on a symmetric social distance to capture conformity peer effects \citep[see, e.g.,][]{blume_linear_2015, ushchev2020}. This social distance is defined as
\begin{equation}\label{eq:sycost}
\mathcal{S}^{sym}_i(y_i, \mathbf y_{-i}) = \dfrac{\beta}{2}\sum_{j \ne i} g_{ij}(y_i - y_j)^2.
\end{equation}

Under this specification, individuals have conformist preferences when $\beta$ is positive and anti-conformist preferences when $\beta$ is negative. This social cost implies that deviations above or below the effort exerted by a friend $j$ yield the same social penalty or benefit. 
However, the social penalty may depend on whether a peer performs better or worse than the individual. To account for this scenario, we introduce an asymmetric social cost:
\begin{equation}\label{eq:asycost}
    \mathcal{S}_i(y_i, \mathbf y_{-i}) = \dfrac{\beta^l}{2}\sum_{j \ne i: y_j \leq y_i} g_{ij}(y_i - y_j)^2 + \dfrac{\beta^h}{2}\sum_{j\ne i: y_j > y_i} g_{ij}(y_i - y_j)^2.
\end{equation}

The asymmetric social distance function allows for flexible mechanisms of peer effects. Individuals may conform more strongly to friends who exert higher levels of the behavior than they do ($\beta^h > \beta^l \geq 0$), or instead to friends who exert lower levels of the behavior ($\beta^l > \beta^h \geq 0$). Which form of asymmetry arises depends on the mechanism underlying the outcome, such as role-model effects, \textit{keeping up with the Joneses} dynamics, or norm formation. The model therefore does not impose \textit{a priori} which type of peer is more influential; instead, it allows the direction of influence to be determined empirically.

Furthermore, our model can also capture situations in which $\beta^h$ and $\beta^l$ have different signs. For example, in certain competitive environments, upward comparison can be discouraging while low-performing peers may boost confidence. This scenario occurs when $\beta^h < 0$ and $\beta^l > 0$.

The asymmetric specification nests the symmetric distance function when $\beta = \beta^h = \beta^l$. We can therefore empirically assess the validity of the symmetric specification by testing whether $\beta^h = \beta^l$. Under the asymmetric social distance, the influence exerted by a friend depends on both an individual's own action and those of her friends. This contrasts with recent models that introduce heterogeneity solely through the distribution of friends' outcomes \cite[e.g.,][]{boucher2023,herstad2026identification}.

\subsection{Equilibrium}
Individuals' optimal choices (effort levels) are determined by maximizing their utility functions. To ensure that this optimization problem has a unique finite solution, we impose a lower bound on $\beta^l$ and $\beta^h$. These conditions guarantee that the utility function is strictly concave.
\begin{assumption}\label{assumption:minbeta}
The parameters $\beta^l$ and $\beta^h$ satisfy $\beta^l > -\dfrac{1}{2}$ and $\beta^h > -\dfrac{1}{2}$.
\end{assumption}
As we show below, the total marginal effect of peers' outcomes on $y_i$ lies between $\frac{\beta^l}{1+\beta^l}$ and $\frac{\beta^h}{1+\beta^h}$. Thus, Assumption \ref{assumption:minbeta} implies that the magnitude of the total marginal peer effect is less than one, which is a standard stability condition in the literature \citep[see][]{blume_linear_2015}.

Under Assumption \ref{assumption:minbeta}, we establish several results in Appendix~\ref{proof:uniqueness}. First, we show in Lemma~\ref{lemma:differentiable} that $U_i(y_i, \mathbf y_{-i})$ is continuously differentiable in $y_i$ despite discrete changes in peer status in Equation \eqref{eq:asycost}.\footnote{We refer to a peer's \textit{status} as high- or low-performing depending on whether the peer's outcome exceeds or falls below that of the individual, respectively.} We also demonstrate in Lemma \ref{lemma:concave} that $U_i(y_i, \mathbf y_{-i})$ is strictly concave in $y_i$. Finally, in Lemma~\ref{lemma:BRFcontinuous}, we show that $i$'s strategy that maximizes their utility function given the choices of other individuals (i.e., their best-response function) is given by the following implicit function:

\begin{equation}\label{eq:yi1}
    y_i = \dfrac{\alpha_i + \beta^l \bar{y}_i^l + \beta^h \bar{y}_i^h}{1 + \beta^l g_i^l + \beta^h g_i^h}.
\end{equation}
In Equation~\eqref{eq:yi1}, $\bar{y}_i^l = \sum_{j \ne i} \mathbbm{1}\{y_j \leq y_i\} g_{ij} y_j$ and $\bar{y}_i^h = \sum_{j \ne i} \mathbbm{1}\{y_j > y_i\} g_{ij} y_j$, where $\mathbbm{1}\{\cdot\}$ denotes the indicator function. Moreover, $g_i^l = \sum_{j \ne i} \mathbbm{1}\{y_j \leq y_i\} g_{ij}$ and $g_i^h = \sum_{j \ne i} \mathbbm{1}\{y_j > y_i\} g_{ij}$ are the shares of low- and high-performing friends, respectively. The outcome $y_i$ is given in an implicit form because $\bar{y}_i^l$ and $g_i^l$ depend on $y_i$ itself. 

For each $i$, there are $2^{n_i}$ possible configurations of peer statuses, where $n_i$ is the number of peers of $i$. Since $\mathbbm{1}\{y_j \leq y_i\}$ is constant within a given configuration, the best-response function is linear within each configuration. Therefore, the best-response function is piecewise linear across configurations. 

The marginal effect of friends' outcomes on $y_i$ is $\frac{\beta^l}{1 + \beta^l}$ when all friends are low-performing, and $\frac{\beta^h}{1 + \beta^h}$ when they are all high-performing. For intermediate configurations, the marginal effect can take up to $2^{n_i}$ distinct values between $\frac{\beta^l}{1 + \beta^l}$ and $\frac{\beta^h}{1 + \beta^h}$. This introduces substantial heterogeneity, unlike in the standard symmetric model where the marginal effect is constant and equal to $\frac{\beta}{1 + \beta}$. However, this heterogeneity comes at the cost that the best-response function is not globally differentiable.

By exploiting the identities $\bar y_i = \bar{y}_i^l + \bar{y}_i^h$ and $g_i^l + g_i^h = 1$, which hold for individuals with at least one friend (non-isolated individuals), we can substitute $\bar{y}_i^l = \bar y_i - \bar{y}_i^h$ and $g_i^l = 1 - g_i^h$ into Equation~\eqref{eq:yi1}.\footnote{We assume that all individuals are non-isolated only for expositional convenience. Isolated individuals have an equilibrium outcome $y_i = \alpha_i$ and can be accommodated within our framework.} It follows that Equation~\eqref{eq:yi1} can be rewritten as:
\begin{equation}\label{eq:yi2}
    y_i = \dfrac{\alpha_i +  \beta^l\bar y_i + (\beta^h - \beta^l)\check y_i}{1 + \beta^l},
\end{equation}
where $\check{y}_{i} = \sum_{j \ne i} g_{ij}\mathbbm{1}\{y_{j} > y_{i}\}(y_{j} - y_{i})$. As we will show later in the econometric analysis, expressing the optimal outcome as in Equation~\eqref{eq:yi2} is important for deriving a reduced-form equation that is linear in parameters.

A strategy profile $\mathbf{y} = (y_i, \dots, y_n)^{\prime}$ is a Nash equilibrium if $y_i$ verifies Equation~\eqref{eq:yi2} for all $i$. We establish the following result. 
\begin{proposition}\label{propo:equilibrium}
Under Assumption \ref{assumption:minbeta}, the game described by the utility function \eqref{eq:utility} has a unique Nash equilibrium $\mathbf y^{\ast} = (y_1^{\ast}, \dots, y_n^{\ast})$, such that $y_i^{\ast}$ verifies Equations \eqref{eq:yi2}.% if individual $i$ is non-isolated and $y_i^{\ast} = \alpha_i$ otherwise. 
\end{proposition}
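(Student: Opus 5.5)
The plan is to write the equilibrium as a fixed point of the best-response map and show that this map is a contraction on $(\mathbb{R}^n,\|\cdot\|_\infty)$; the Banach fixed point theorem then gives existence and uniqueness together. By Lemmas~\ref{lemma:differentiable}--\ref{lemma:BRFcontinuous}, for each $\mathbf y_{-i}$ the strictly concave, continuously differentiable $U_i$ has a unique maximizer $b_i(\mathbf y_{-i})$, and $b_i$ is continuous. This maximizer is the unique root in $x$ of the first-order condition
\begin{equation*}
F_i(x,\mathbf y_{-i}) \equiv x-\alpha_i+\sum_{j\ne i} g_{ij}\,\phi(x-y_j)=0,
\qquad \phi(d)=\beta^l d\,\mathbbm{1}\{d\ge 0\}+\beta^h d\,\mathbbm{1}\{d<0\}.
\end{equation*}
This root is equivalent to Equation~\eqref{eq:yi2}. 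Let $B(\mathbf y)=(b_1(\mathbf y_{-1}),\dots,b_n(\mathbf y_{-n}))'$. A Nash equilibrium is exactly a fixed point of $B$.

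First, I compute the local slopes of $b_i$. Fix a configuration of peer statuses, that is, which $j$ satisfy $y_j\le y_i$. On the open region where this configuration holds strictly, $F_i$ is linear, and $\partial F_i/\partial x = 1+\sum_j g_{ij}a_j$ with $a_j\in\{\beta^l,\beta^h\}$. Under Assumption~\ref{assumption:minbeta} and $\sum_j g_{ij}=1$, this derivative exceeds $1/2>0$. The implicit function theorem then gives
\begin{equation*}
\frac{\partial b_i}{\partial y_j}=\frac{g_{ij}a_j}{1+\sum_k g_{ik}a_k},\qquad
\sum_{j}\Bigl|\frac{\partial b_i}{\partial y_j}\Bigr|=\frac{\sum_j g_{ij}|a_j|}{1+\sum_j g_{ij}a_j}.
\end{equation*}
This ratio is strictly below one if and only if $\sum_j g_{ij}(|a_j|-a_j)<1$, that is, if and only if $2\sum_{j:a_j<0}g_{ij}|a_j|<1$. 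That inequality holds because $|a_j|<1/2$ whenever $a_j<0$ and the weights sum to at most one. There are finitely many configurations, so taking the maximum over all $i$ and all configurations yields a uniform constant $\kappa<1$ with $\|\nabla b_i\|_1\le\kappa$ wherever $b_i$ is differentiable.

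The main obstacle is that $b_i$ is not differentiable at configuration switches, so a gradient bound does not immediately give a Lipschitz bound. To handle this, I take any $\mathbf y,\mathbf y'$ and restrict $b_i$ to the segment between them. Along the segment, $b_i$ is continuous by Lemma~\ref{lemma:BRFcontinuous}. It is also piecewise affine: within each configuration it is the linear formula \eqref{eq:yi1}. The switching points, where $b_i(\cdot)=y_j(\cdot)$ for some $j$, should be finitely many, because each such equation is a comparison of piecewise-affine functions. The case where $b_i$ coincides with some $y_j$ on a whole subinterval is harmless, since the formula is still one of the finitely many affine pieces there.

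Applying the fundamental theorem of calculus piece by piece then gives $|b_i(\mathbf y_{-i})-b_i(\mathbf y'_{-i})|\le\kappa\|\mathbf y-\mathbf y'\|_\infty$. Hence $\|B(\mathbf y)-B(\mathbf y')\|_\infty\le\kappa\|\mathbf y-\mathbf y'\|_\infty$, and Banach's theorem yields a unique $\mathbf y^\ast$ satisfying Equation~\eqref{eq:yi2} for every $i$. Isolated individuals, if present, contribute the constant $b_i=\alpha_i$ and do not affect the argument.
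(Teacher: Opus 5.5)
Your proposal is correct and follows essentially the same route as the paper. You bound the Lipschitz constant of the affine best response within each status configuration; your ratio $\sum_j g_{ij}\lvert a_j\rvert/(1+\sum_j g_{ij}a_j)$ is exactly the quantity the paper bounds by $\mu=\max_{k\in\{l,h\}}\lvert\beta^k\rvert/(1+\beta^k)$. You then extend the bound to arbitrary pairs by splitting the segment at cell boundaries and telescoping, and conclude with the contraction mapping theorem. The one loosely argued step, finiteness of the switching points, is cleanly justified by noting that each configuration's cell is a convex polyhedron, so the segment meets each of the finitely many cells in an interval.
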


The proof of Proposition~\ref{propo:equilibrium} is provided in Appendix~\ref{proof:uniqueness:theo}. We establish that the best-response mapping is a contraction; that is, for any strategy profiles $\mathbf{y}$ and $\tilde{\mathbf{y}}$ and any $i$, the uniform Lipschitz condition, $\lvert b_i(\mathbf{y}_{-i}) - b_i(\tilde{\mathbf{y}}_{-i})\rvert \leq \mu \lVert \mathbf{y} - \tilde{\mathbf{y}} \rVert_{\infty}$, holds for some constant $\mu < 1$, where $b_i$ denotes the best-response function of individual $i$.\footnote{For any $\mathbf a = (a_1, \dots, a_n)^{\prime} \in \mathbb R^n$, the infinity norm of $\mathbf a$ is defined as $\displaystyle \lVert \mathbf a \rVert_{\infty} = \max_{i} \lvert a_i \rvert$.} However, since statuses may differ across $\mathbf{y}$ and $\tilde{\mathbf{y}}$, achieving this result is not straightforward. We first partition the strategy space into cells (polyhedra). Within each cell, the statuses of friends remain fixed, allowing us to establish the uniform Lipschitz condition. We then extend this argument to the case in which $\mathbf{y}$ and $\tilde{\mathbf{y}}$ do not belong to the same cell. 

\subsection{Treatment Assignment and Spillover Effects} \label{microfoundation::socialmultiplier}
In the presence of social interactions, interventions that affect the productivity of treated individuals (e.g., targeted subsidy programs) can also affect nonrecipients, thereby generating social multiplier or spillover effects. In this section, we study the effects of productivity shocks in the asymmetric model. In particular, using a treatment allocation problem in which the treatment takes the form of a productivity shock, we show that ignoring asymmetries in peer effects can lead to inefficient treatment allocation. For simplicity, we assume that productivity shocks do not affect the network structure.

In the standard symmetric model, treating all individuals by uniformly increasing their productivity yields the same uniform increase in outcomes and thus generates no social multiplier effects \citep{boucherfortin2015,ushchev2020}. We first show that this result extends to the asymmetric specification.
\begin{proposition}\label{propo:socialmultiplier}
Assume a uniform shock $\bar \alpha$ to any individual's productivity $\alpha_i$, such that $\tilde{\alpha}_i = \alpha_i + \bar \alpha$ is the new productivity level. Let $y_i$ and $\tilde{y}_i$ be the equilibrium outcomes corresponding to $\alpha_i$ and $\tilde{\alpha}_i$, respectively. We have $\tilde{y}_i = y_i + \bar \alpha$ for all $i$.
\end{proposition}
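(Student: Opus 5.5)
The plan is a guess-and-verify argument combined with the uniqueness part of Proposition~\ref{propo:equilibrium}. Let $\mathbf y = (y_1,\dots,y_n)^{\prime}$ be the unique equilibrium under productivities $\alpha_i$, and define the candidate profile $\hat{\mathbf y}$ by $\hat y_i = y_i + \bar\alpha$ for all $i$. I will show that $\hat{\mathbf y}$ satisfies the equilibrium conditions~\eqref{eq:yi2} with $\alpha_i$ replaced by $\tilde\alpha_i = \alpha_i + \bar\alpha$. Proposition~\ref{propo:equilibrium} applies to the shifted game, since Assumption~\ref{assumption:minbeta} involves only $\beta^l$ and $\beta^h$ and not the $\alpha_i$. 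It then gives $\tilde{\mathbf y} = \hat{\mathbf y}$, which is the claim.

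The key step is to check that every ingredient of~\eqref{eq:yi2} transforms correctly under a common shift. First, statuses are invariant: for all $i \neq j$,
\[
\mathbbm{1}\{\hat y_j > \hat y_i\} = \mathbbm{1}\{y_j + \bar\alpha > y_i + \bar\alpha\} = \mathbbm{1}\{y_j > y_i\}.
\]
As a consequence, the upward-deviation term is unchanged:
\[
\check{\hat y}_i = \sum_{j\neq i} g_{ij}\mathbbm{1}\{\hat y_j > \hat y_i\}(\hat y_j - \hat y_i) = \check y_i .
\]
Next, row normalization for non-isolated individuals, $\sum_{j} g_{ij} = 1$, gives $\bar{\hat y}_i = \bar y_i + \bar\alpha$. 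Substituting these into the right-hand side of~\eqref{eq:yi2} with $\tilde\alpha_i$ yields
\[
\frac{\alpha_i + \bar\alpha + \beta^l(\bar y_i + \bar\alpha) + (\beta^h-\beta^l)\check y_i}{1+\beta^l}
= \frac{\alpha_i + \beta^l \bar y_i + (\beta^h-\beta^l)\check y_i}{1+\beta^l} + \bar\alpha
= y_i + \bar\alpha = \hat y_i .
\]
Here I use $1+\beta^l > 0$, which follows from Assumption~\ref{assumption:minbeta}. For isolated individuals, if they are included, the equilibrium condition is $y_i = \alpha_i$, and the shift is immediate.

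The main point of care is that~\eqref{eq:yi2} characterizes best responses only through the implicit fixed-point condition. To argue that satisfying it is enough, I will rely on Lemma~\ref{lemma:BRFcontinuous}: by strict concavity, the unique maximizer is exactly the solution of that implicit equation. Equivalently, one can avoid~\eqref{eq:yi2} altogether and argue directly on utilities. Since $\mathcal S_i$ depends only on the differences $y_i - y_j$, we have $\mathcal S_i(y_i+\bar\alpha, \mathbf y_{-i}+\bar\alpha) = \mathcal S_i(y_i,\mathbf y_{-i})$. In addition,
\[
\tilde\alpha_i (x+\bar\alpha) - \tfrac{(x+\bar\alpha)^2}{2} = \alpha_i x - \tfrac{x^2}{2} + \text{const},
\]
where the constant does not depend on $x$. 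Hence $x$ maximizes $U_i(\cdot, \mathbf y_{-i})$ if and only if $x+\bar\alpha$ maximizes the shifted utility against $\mathbf y_{-i}+\bar\alpha$. I expect no real obstacle beyond invoking uniqueness correctly.
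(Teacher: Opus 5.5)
Your proof is correct and follows the paper's route: the paper likewise checks that the shifted profile $y_i+\bar\alpha$ satisfies the implicit best-response equation and then invokes uniqueness from the contraction argument. The only differences are minor: the paper uses the form \eqref{eq:yi1} with $g_i^l, g_i^h$, so it does not need row normalization at that step, and your alternative observation that $\mathcal S_i$ depends only on differences (so best responses are translation-equivariant) is a slightly cleaner way to reach the same verification.
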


\noindent The proof of Proposition~\ref{propo:socialmultiplier} is provided in Appendix~\ref{append:proof:socialmultiplier}. The result is analogous to that obtained in the symmetric case, because treating all individuals does not alter peer statuses. Consequently, the asymmetric structure of peer effects does not play a role.

However, in many settings, only a subset of individuals can be treated due to budget constraints. This is the case for targeted policies, where interventions benefit only a limited number of individuals \cite[see][]{banerjee2019using, beaman2021can}. Such policies can improve cost-effectiveness when treatment is optimally allocated.

Assume that a social planner aims to maximize aggregate outcomes $\sum_{j=1}^n y_j$ by assigning treatment to a fixed number of individuals, where a budget constraint exogenously determines this number \citep[e.g.,][]{galeotti2020}. As we show below, in the standard symmetric model, the optimal treatment-assignment rule depends only on the network structure and the peer-effect parameter, but not on the initial productivity distribution  \citep[see][]{boucherfortin2015}.

Let $\mathbf I_n$ denote the identity matrix of order $n$ and $\mathbf 1_n$ an $n$-vector of ones. Define $\hat{\boldsymbol \alpha} = (\hat\alpha_1,\dots,\hat\alpha_n)'$, where $\hat\alpha_i = \alpha_i$ if $i$ is isolated and $\hat\alpha_i = \alpha_i/(1+\beta)$ otherwise. The equilibrium outcome of the symmetric model is given by $\mathbf y = \left(\mathbf I_n - \frac{\beta}{1+\beta}\mathbf G\right)^{-1}\hat{\boldsymbol\alpha}$ \citep[see][]{blume_linear_2015}. Consequently,
$$\sum_{j=1}^n y_j = \hat{\boldsymbol\alpha}^\prime \left(\mathbf I_n - \frac{\beta}{1+\beta}\mathbf G'\right)^{-1} \mathbf 1_n.$$
Let $\mathbf e_i \in \mathbb R^n$ denote a vector of zeros except for its $i$-th entry, which is equal to $1$ if individual $i$ is isolated and $1/(1+\beta)$ otherwise.\footnote{The parameter $\beta$ is the single peer effect parameter in the symmetric model (see Equation~\eqref{eq:sycost}).} The marginal effect of increasing $\alpha_i$ by one unit on aggregate outcomes is $\frac{\partial}{\partial \alpha_i}\sum_{j=1}^n y_j = \mathbf e_i' \left(\mathbf I_n - \frac{\beta}{1+\beta}\mathbf G'\right)^{-1} \mathbf 1_n$. Stacking these marginal effects for all individuals yields $\boldsymbol{\Delta} = \mathcal E \left(\mathbf I_n - \frac{\beta}{1+\beta}\mathbf G'\right)^{-1} \mathbf 1_n$, where $\mathcal E = (\mathbf e_1,\dots,\mathbf e_n)'$. Treatment should therefore be assigned according to the ranking induced by $\boldsymbol{\Delta}$, which depends only on $\mathbf G$ and $\beta$.

In contrast, in the asymmetric model, the optimal treatment assignment depends not only on the network structure and peer effects, but also on the entire productivity distribution. For instance, if conformity to low-performing peers is stronger (high $\beta^l$ and low $\beta^h$), targeting an individual who is generally a high-performing friend may be inefficient, even though this individual is central in the network under the symmetric model. In such cases, individuals’ relative positions in the outcome distribution play a key role in determining optimal assignment.

In Figure~\ref{fig:reflection}, we illustrate different targeting schemes using a 2-star network, where $i$ is the central node. Before the intervention (panel a), the aggregated outcome is $3.66$. Since $j$ and $k$ have no friends but are both connected to $i$, increasing the productivity of $j$ or $k$ yields the same aggregate increase in the symmetric model, regardless of $\beta$. Thus, a social planner who employs the symmetric model can assign the treatment to either $j$ or $k$, expecting the same shift in the outcome distribution. However, panels b and c show that targeting $j$ rather than $k$ is not optimal. Specifically, targeting $j$ increases the aggregate outcome by 1.12, i.e., a spillover effect of 12\%, whereas targeting $k$ almost doubles the spillover effect to 22\%.\footnote{The spillover effect is measured as the additional increase in the aggregate outcome relative to the increase in the aggregate outcome in a benchmark model without social interactions, expressed as a percentage. In this benchmark model, a one-unit increase in $\alpha_i$ raises the aggregate outcome by exactly one unit.} Panel d further shows that targeting $i$ is inefficient, as there are no spillover effects on $j$~and~$k$.

\begin{figure}[!htbp]
    \centering
    \begin{tikzpicture}[scale=0.5]
            \tikzstyle{node}=[circle,draw,fill=white,thick, inner sep=0.5pt]
            \tikzstyle{nodeh}=[circle,draw,fill=blue!40, thick, inner sep=0.5pt]
            \tikzstyle{nodel}=[circle,draw,fill=red!30, thick, inner sep=0.5pt]
            \tikzstyle{link}=[->, color=black!50, thick]
            \tikzstyle{targeted}=[circle,thick]

            \node[above, align=center] at (2.3, 3.4) {\footnotesize a) Before intervention};
            \node[node, label=180:$i$] (i1) at (0,0) {\footnotesize 1.16};
            \node[nodeh, label=0:$j$] (j1) at (3,2) {\footnotesize 1.50};
            \node[nodel, label=0:$k$] (k1) at (3,-2) {\footnotesize 1.00};

            \node[above, align=center, xshift=4cm] at (2.3, 3.4) {\footnotesize b) Targeted $j$};
            \node[node, label=180:$i$, xshift=4cm] (i2) at (0,0) {\footnotesize 1.29};
            \node[nodeh, targeted, label=0:$j$, xshift=4cm] (j2) at (3,2) {\footnotesize 2.50};
            \node[nodel, label=0:$k$, xshift=4cm] (k2) at (3,-2) {\footnotesize 1.00};

            \node[above, align=center, xshift=8cm] at (2.3, 3.4) {\footnotesize c) Targeted $k$};
            \node[node, label=180:$i$, xshift=8cm] (i3) at (0,0) {\footnotesize 1.38};
            \node[nodeh, label=0:$j$, xshift=8cm] (j3) at (3,2) {\footnotesize 1.50};
            \node[nodeh, targeted, label=0:$k$, xshift=8cm] (k3) at (3,-2) {\footnotesize 2.00};

            \node[above, align=center, xshift=12cm] at (2.3, 3.4) {\footnotesize d) Targeted $i$};
            \node[node, targeted, label=180:$i$, xshift=12cm] (i4) at (0,0) {\footnotesize 1.63};
            \node[nodel, label=0:$j$, xshift=12cm] (j4) at (3,2) {\footnotesize 1.50};
            \node[nodel, label=0:$k$, xshift=12cm] (k4) at (3,-2) {\footnotesize 1.00};
            
            \draw[link] (i1)--(j1);
            \draw[link] (i1)--(k1);

            \draw[link] (i2)--(j2);
            \draw[link] (i2)--(k2);

            \draw[link] (i3)--(j3);
            \draw[link] (i3)--(k3);

            \draw[link] (i4)--(j4);
            \draw[link] (i4)--(k4);

        \end{tikzpicture}
    \caption{Targeted policy}
    \label{fig:reflection}
    \justifying
    \footnotesize{\noindent Note: This figure illustrates a targeted intervention in a directed network of three nodes. The values indicated inside each node represent the individuals' performance (outcomes). Red nodes denote low-performing friends, and blue nodes denote high-performing friends. $\beta^l = 1.5$, $\beta^h = 0.5$, and before the intervention (panel a), $(\alpha_i, \alpha_j, \alpha_k) = (1.2, 1.5, 1)$. The intervention consists of increasing $\alpha$ of the targeted individual by one unit.}
\end{figure}

This result highlights a fundamental difference between the symmetric and asymmetric peer effect models. In the asymmetric model, the marginal social return to treatment depends on individuals' relative positions in the outcome distribution. The symmetric model ignores these positions and may lead to inefficient allocation. 

However, unlike the symmetric model, the asymmetric peer effect model does not admit a closed-form characterization of the optimal treatment-allocation ranking. Determining the optimal allocation requires an exhaustive search over all possible treatment sets, the number of which grows rapidly with both network size and the number of individuals to be treated. To make treatment assignment computationally feasible when peer effects are asymmetric, we develop algorithms in Section~\ref{application:policy} that approximate the optimal allocation. We use these algorithms in the empirical application to assign treatment in large networks.

\section{Econometric Framework} \label{sec:metrics}

In this section, we present the econometric model and study its identification and estimation. Since interactions induce dependence across individuals and may complicate identification and inference, we assume that the population is partitioned into $M$ nonoverlapping and independent networks. This assumption is commonly imposed and appropriate for many data sets, as samples typically consist of independent schools, villages, or markets \citep{Bramoulle2009, blume_linear_2015, boucher2023}. We denote by $n_m$ the number of individuals in network $m$. We assume that $n_m$ is bounded, while $M$ goes to infinity asymptotically. 

Throughout the paper, the subscript $m$ refers to variables defined for network $m$; for example, $\mathbf{G}_m$ denotes the interaction matrix in network $m$. A double subscript $m,i$ (e.g., $y_{m,i}$) denotes a variable associated with individual $i$ in network $m$, while a triple subscript $m,ij$ (e.g., $g_{m,ij}$) denotes a pairwise variable.

\subsection{Econometric Model}
Individual $i$ is characterized by a vector of observable characteristics $\mathbf{x}_{m,i} \in \mathbb{R}^{d_x}$, which includes control variables such as age, sex, etc. As in \cite{blume_linear_2015}, we model productivity $\alpha_{m,i}$ as:
\begin{equation}\label{eq:alpha}
    \alpha_{m,i} = c + \mathbf{x}_{m,i}^{\prime}\boldsymbol{\gamma}_1 + \bar{\mathbf{x}}_{m,i}^{\prime}\boldsymbol{\gamma}_2 + \varepsilon_{m,i},
\end{equation}
where $\bar{\mathbf{x}}_{m,i} = \sum_{j \ne i} g_{m,ij}\mathbf{x}_{m,j}$ is a vector of contextual variables defined as the average of exogenous characteristics among friends, and $\varepsilon_{m,i}$ is an unobserved preference shock. The parameters $\boldsymbol{\gamma}_1$ and $\boldsymbol{\gamma}_2$ capture the effects of individual characteristics and contextual variables, respectively, while $c$ is an intercept term. For ease of exposition, we exclude network fixed effects from Equation~\eqref{eq:alpha}. As we show below, although the model allows for a nonlinear relationship between individual and peer outcomes, we derive from Equation~\eqref{eq:yi2} a reduced-form specification that is linear in parameters. Therefore, any network fixed effects in Equation~\eqref{eq:alpha} can be partialled out by demeaning the reduced form within each network.

From Equation~\eqref{eq:yi2}, the reduced-form equation for non-isolated individuals can be written as:
\begin{equation}\label{eq:yred}
    y_{m,i} = \tilde c + \theta_1 \bar y_{m,i} + \theta_2 \check{y}_{m,i} + \mathbf{x}_{m,i}^{\prime}\boldsymbol{\theta}_3 + \bar{\mathbf{x}}_{m,i}^{\prime}\boldsymbol{\theta}_4 + \tilde \varepsilon_{m,i},
\end{equation}
where $\tilde c = \frac{c}{1 + \beta^l}$, $\theta_1 = \frac{\beta^l}{1 + \beta^l}$, $\theta_2 = \frac{\beta^h - \beta^l}{1 + \beta^l}$, $\boldsymbol{\theta}_3 = \frac{\boldsymbol{\gamma}_1}{1 + \beta^l}$, $\boldsymbol{\theta}_4 = \frac{\boldsymbol{\gamma}_2}{1 + \beta^l}$, and $\tilde \varepsilon_{m,i} = \frac{\varepsilon_{m,i}}{1 + \beta^l}$. For isolated individuals, the reduced form simplifies to:
\begin{equation}\label{eq:yrediso}
y_{m,i} = c + \mathbf{x}_{m,i}^{\prime}\boldsymbol{\gamma}_1 + \varepsilon_{m,i}.
\end{equation}

Equations \eqref{eq:yred} and \eqref{eq:yrediso} can be combined into a single linear equation by introducing a dummy variable that indicates whether individual $i$ is isolated. However, for ease of exposition, we assume that there are no isolated individuals and focus on Equation~\eqref{eq:yred} in the remainder of this section.

Equation~\eqref{eq:yred} includes two endogenous regressors: the conventional average outcome among peers, $\bar{y}_{m,i}$, and a new endogenous variable, $\check{y}_{m,i}$. This makes the endogeneity problem more complex than in the standard model, which includes only $\bar{y}_{m,i}$. Additionally, since $\check{y}_{m,i}$ is directly linked to the dependent variable $y_{m,i}$, failing to address this endogeneity can lead to substantial bias.

\subsection{Identification} 
Identifying peer effects is challenging because of the reflection problem, which arises when individuals and their peers influence one another simultaneously \citep{Manski1993}. In the standard linear model, \citet{Bramoulle2009} show that flexible restrictions on the network structure can be imposed to alleviate this problem. Unfortunately, such restrictions are generally unavailable when individual and peer behaviors are related nonlinearly. Most studies addressing this nonlinearity rely on standard rank conditions for identification, typically imposed within a generalized method of moments (GMM) framework \citep[see, e.g.,][]{brock2007identification,lee_binary_2014,boucher2023}. However, these conditions are difficult to verify empirically because they depend on unobserved variables.

In this paper, we establish identification under relatively mild conditions that differ from the standard GMM-based rank conditions. In addition to the restriction on the network structure proposed by \citet{Bramoulle2009}, our identification conditions require a novel exclusion restriction. We show that this restriction is likely to be satisfied because $\check{y}_{m,i}$ is nonlinearly related to peer outcomes.\footnote{Recall that $\check{y}_{m,i} = \sum_{j \ne i} g_{m,ij}\, \mathbbm{1}\{y_{m,j} > y_{m,i}\} (y_{m,j} - y_{m,i})$.} 

For notational ease, we denote by $\mathbb E_m$ the conditional expectation given $\mathbf X_m$ and $\mathbf G_m$. We also define $$\mathbf A_m = \left[\mathbb E_m(\check{\mathbf y}_m), \, \mathbf G_m \mathbb E_m(\check{\mathbf y}_m),\, \mathbf 1_{n_m},\, \mathbf X_m,\, \mathbf G_m \mathbf X_m, \,\mathbf G_m^2 \mathbf X_m\right],$$
where $\check{\mathbf y}_m = (\check y_{m,1}, \, \dots, \, \check y_{m,n_m})^{\prime}$, $\mathbf X_m = (\mathbf x_{m,1}, \, \dots, \, \mathbf x_{m,n_m})^{\prime}$.
\begin{assumption}\label{assumption:data}~\hfill~
\begin{enumerate}[label=(\roman*), ref=\theassumption(\roman*), align=left, leftmargin=7pt, itemsep=0pt, topsep=0pt]
    \item The sequence $(n_m, \mathbf y_m, \mathbf X_m, \mathbf G_m, \boldsymbol \varepsilon_m)_{m\ge1}$ is i.i.d. and satisfies equation \eqref{eq:yred} for any $m$. \label{assumption:data:process}
    \item For all $m$ and $i$, $\mathbb{E}[\varepsilon_{m,i} \mid \mathbf X_m, \mathbf G_m] = 0$. \label{assumption:data:exogeneity}
\end{enumerate}
\end{assumption}
\begin{assumption}\label{assumption:ident}~\hfill~
\begin{enumerate}[label=(\roman*), ref=\theassumption(\roman*), align=left, leftmargin=7pt, itemsep=0pt, topsep=0pt]
    \item The reduced form parameters satisfy $\theta_1\boldsymbol{\theta}_3 + \boldsymbol{\theta}_4 \ne \mathbf 0$. \label{assumption:ident:nonzero}
    \item The matrix $\frac{1}{M}\sum_{m = 1}^M \mathbf A_m^{\prime}\mathbf A_m$ is of full rank. \label{assumption:ident:fullrank}
\end{enumerate}
\end{assumption}

Assumption~\ref{assumption:data} imposes standard conditions on the data-generating process. Assumption~\ref{assumption:data:process} requires the data sequence to be i.i.d. across $m$, while allowing for flexible dependence structures within networks. We include $n_m$ in the sequence because network size may vary across observations. Specifically, the assumption implies that $(\mathbf y_m, \mathbf X_m, \mathbf G_m, \boldsymbol \varepsilon_m)$ is i.i.d. conditional on $n_m$, and that $n_m$ itself is i.i.d. This assumption allows us to establish the equivalence between empirical averages across networks and their population counterparts using standard laws of large numbers. Assumption~\ref{assumption:data:exogeneity} requires $\mathbf X_m$ and $\mathbf G_m$ to be exogenous. We maintain this assumption to abstract from endogenous network formation.

Assumption~\ref{assumption:ident} introduces our identification restrictions. The restriction in Assumption~\ref{assumption:ident:nonzero} holds in many settings when $\mathbf{x}_{m,i}$ includes multiple variables. This restriction is also required for the standard linear model \cite[see][]{Bramoulle2009}. Assumption~\ref{assumption:ident:fullrank} is our new rank condition. Since $\mathbf{A}_m$ does not include the standard average peer outcome, but instead $\check{\mathbf{y}}_m$, which is nonlinearly related to the outcome, this restriction is easier to motivate than classical GMM rank conditions.

Specifically, Assumption~\ref{assumption:ident:fullrank} can be decomposed into two parts. First, it requires $\mathbf 1_{n_m}$, $\mathbf X_m$, $\mathbf G_m \mathbf X_m$, and $\mathbf G_m^2 \mathbf X_m$ to be linearly independent, which can be verified from the data because these variables are all observed. This requirement is related to the identification of the standard model. It holds when many individuals have friends of friends who are not direct friends \citep[see][]{Bramoulle2009}. Second, Assumption~\ref{assumption:ident:fullrank} also requires that no nonzero linear combination of $\mathbb E_m(\check{\mathbf y}_m)$ and $\mathbf G_m \mathbb E_m(\check{\mathbf y}_m)$ lies in the linear span of $\mathbf 1_{n_m}$, $\mathbf X_m$, $\mathbf G_m \mathbf X_m$, and $\mathbf G_m^2 \mathbf X_m$. While it is difficult to provide low-level conditions ensuring this restriction, the presence of the indicator function in $\check{\mathbf y}_m$ makes it unlikely that $\mathbb E_m(\check{\mathbf y}_m)$ and $\mathbf G_m \mathbb E_m(\check{\mathbf y}_m)$ lie in this linear span.\footnote{Additionally, $\mathbb E_m(\check{\mathbf y}_m)$ and $\mathbf G_m \mathbb E_m(\check{\mathbf y}_m)$ must not be collinear. This condition also generally holds when some friends of friends are not direct friends, because $\mathbf G_m \mathbb E_m(\check{\mathbf y}_m)$ involves friends of friends, whereas $\mathbb E_m(\check{\mathbf y}_m)$ depends only on direct friends.}

We establish the following result.
\begin{proposition}\label{prop:ident}
    Under Assumptions \ref{assumption:minbeta}--\ref{assumption:ident}, $\beta^l$, $\beta^h$, $c$, $\boldsymbol\gamma_1$, and $\boldsymbol\gamma_2$ are point identified. 
\end{proposition}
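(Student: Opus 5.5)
The plan is to identify the reduced-form parameters $(\tilde c, \theta_1, \theta_2, \boldsymbol\theta_3, \boldsymbol\theta_4)$ in Equation~\eqref{eq:yred} and then invert the map to the structural parameters. The inversion is immediate. Assumption~\ref{assumption:minbeta} gives $1+\beta^l>1/2>0$, and $\theta_1=\beta^l/(1+\beta^l)<1$, so $\beta^l=\theta_1/(1-\theta_1)$. Then $1+\beta^l=1/(1-\theta_1)$, which yields $\beta^h=\beta^l+\theta_2(1+\beta^l)$, $c=\tilde c(1+\beta^l)$, $\boldsymbol\gamma_1=\boldsymbol\theta_3(1+\beta^l)$ and $\boldsymbol\gamma_2=\boldsymbol\theta_4(1+\beta^l)$. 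By Proposition~\ref{propo:equilibrium}, $\mathbf y_m$ is a well-defined measurable function of $(\mathbf X_m,\mathbf G_m,\boldsymbol\varepsilon_m)$, so $\mathbb E_m(\mathbf y_m)$ and $\mathbb E_m(\check{\mathbf y}_m)$ are well defined.

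To identify the reduced form, I take conditional expectations of Equation~\eqref{eq:yred} in vector form and use Assumption~\ref{assumption:data:exogeneity}:
\[
\mathbb E_m(\mathbf y_m)=\tilde c\,\mathbf 1+\theta_1\mathbf G_m\mathbb E_m(\mathbf y_m)+\theta_2\mathbb E_m(\check{\mathbf y}_m)+\mathbf X_m\boldsymbol\theta_3+\mathbf G_m\mathbf X_m\boldsymbol\theta_4 .
\]
The endogenous regressor $\bar{\mathbf y}_m$ is replaced by its conditional mean $\mathbf G_m\mathbb E_m(\mathbf y_m)$. Substituting the display into itself once gives $\mathbf G_m\mathbb E_m(\mathbf y_m)$ as a linear combination of $\mathbf G_m\mathbf 1=\mathbf 1$ (row normalization, no isolates), $\mathbf G_m^2\mathbb E_m(\mathbf y_m)$, $\mathbf G_m\mathbb E_m(\check{\mathbf y}_m)$, $\mathbf G_m\mathbf X_m$ and $\mathbf G_m^2\mathbf X_m$. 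In this expansion the coefficient on $\mathbf G_m^2\mathbf X_m$ is $\theta_1(\theta_1\boldsymbol\theta_3+\boldsymbol\theta_4)$. The $\mathbf G_m^2\mathbb E_m(\mathbf y_m)$ term remains after one step, so the cleaner route is a moment argument rather than a Neumann series (which would need $|\theta_1|<1$ and a bound on $\theta_2$).

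I would use the instrument matrix $\mathbf Z_m=[\mathbb E_m(\check{\mathbf y}_m),\mathbf 1,\mathbf X_m,\mathbf G_m\mathbf X_m,\mathbf G_m^2\mathbf X_m]$ for the regressor matrix $\mathbf R_m=[\bar{\mathbf y}_m,\check{\mathbf y}_m,\mathbf 1,\mathbf X_m,\mathbf G_m\mathbf X_m]$. Assumption~\ref{assumption:data:exogeneity} gives the moment conditions $\mathbb E[\mathbf Z_m'\tilde{\boldsymbol\varepsilon}_m]=0$. Identification then reduces to showing that $\mathbb E_m(\mathbf R_m)\boldsymbol\delta=\mathbf 0$ for almost all $m$ forces $\boldsymbol\delta=\mathbf 0$. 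Concretely, suppose two parameter vectors $\boldsymbol\theta\neq\boldsymbol\theta^\ast$ both satisfy the conditional-mean equation. Subtracting gives $\mathbf 0=\delta_1\mathbf G_m\mathbb E_m(\mathbf y_m)+\delta_2\mathbb E_m(\check{\mathbf y}_m)+\delta_0\mathbf 1+\mathbf X_m\boldsymbol\delta_3+\mathbf G_m\mathbf X_m\boldsymbol\delta_4$. If $\delta_1=0$, the full-rank condition Assumption~\ref{assumption:ident:fullrank} (columns $\mathbb E_m(\check{\mathbf y}_m),\mathbf 1,\mathbf X_m,\mathbf G_m\mathbf X_m$) immediately gives $\boldsymbol\delta=\mathbf 0$. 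If $\delta_1\neq0$, I solve for $\mathbf G_m\mathbb E_m(\mathbf y_m)$ as a combination of $\mathbb E_m(\check{\mathbf y}_m),\mathbf 1,\mathbf X_m,\mathbf G_m\mathbf X_m$. I then plug this, premultiplied by $\mathbf G_m$, back into $\mathbf G_m$ applied to the true conditional-mean equation. This produces a linear relation among the columns of $\mathbf A_m$ whose $\mathbf G_m^2\mathbf X_m$ coefficient is proportional to $\theta_1\boldsymbol\theta_3+\boldsymbol\theta_4\neq\mathbf 0$ by Assumption~\ref{assumption:ident:nonzero}. That relation is nontrivial and contradicts full rank. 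This mirrors the argument of \citet{Bramoulle2009}, with $\mathbb E_m(\check{\mathbf y}_m)$ and $\mathbf G_m\mathbb E_m(\check{\mathbf y}_m)$ carried along as extra columns, which is exactly why they appear in $\mathbf A_m$.

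The main obstacle is the bookkeeping in the $\delta_1\neq0$ case. One must check that the resulting combination of the columns of $\mathbf A_m$ is genuinely nonzero, in particular that the coefficient on $\mathbf G_m^2\mathbf X_m$ cannot cancel. It also has to be handled that $\mathbb E_m(\mathbf y_m)$ depends on the true parameters while $\check{\mathbf y}_m$'s expectation enters linearly. Finally, the full rank of the sample average $\frac1M\sum_m\mathbf A_m'\mathbf A_m$ must be turned into a statement holding on a positive-probability set of networks. Assumption~\ref{assumption:data:process} with the law of large numbers allows this, since a relation holding for almost all $m$ would make that matrix singular.
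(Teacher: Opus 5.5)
Your argument is correct and is essentially the paper's proof: multiplying $\mathbf G_m$ times the true conditional-mean equation by $\delta_1$ and substituting for $\delta_1\mathbf G_m\mathbb E_m(\mathbf y_m)$ and $\delta_1\mathbf G_m^2\mathbb E_m(\mathbf y_m)$ gives the same linear relation among the columns of $\mathbf A_m$ that the paper obtains by premultiplying the two reduced forms by $(\mathbf I_{n_m}-\theta_1\mathbf G_m)(\mathbf I_{n_m}-\theta^\ast_1\mathbf G_m)$. Your case split therefore only avoids inverting $\mathbf I_{n_m}-\theta_1\mathbf G_m$, which is harmless anyway, since $\lvert\theta_1\rvert<1$ under Assumption~\ref{assumption:minbeta} and no bound on $\theta_2$ is needed because $\mathbb E_m(\check{\mathbf y}_m)$ is treated as observed. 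On the bookkeeping you flagged, the $\mathbf G_m^2\mathbf X_m$ coefficient is $\delta_1\boldsymbol\theta_4-\theta_1\boldsymbol\delta_4$; it equals $\delta_1(\theta_1\boldsymbol\theta_3+\boldsymbol\theta_4)$ only after full rank has forced $\boldsymbol\delta_3=\mathbf 0$ and $\boldsymbol\delta_4=-\delta_1\boldsymbol\theta_3$ from the $\mathbf X_m$ and $\mathbf G_m\mathbf X_m$ coefficients, which is precisely how the paper combines its two conditions to obtain $\theta_1=\dot\theta_1$.
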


The proof of Proposition \ref{prop:ident} is provided in Appendix \ref{append:proof:Ident}. 

Let $\mathbf{V}_{m} = [\mathbf G_m \mathbf y_m, \, \check{\mathbf y}_{m}, \, \mathbf 1_{n_m}, \,\mathbf{X}_{m}, \,\mathbf G_m \mathbf{X}_{m}]$ denote the matrix of regressors in~\eqref{eq:yred}, and let $\mathbf{V}_{m}^e = \mathbb{E}_m(\mathbf{V}_{m})$ denote its conditional expectation given $\mathbf{X}_m$ and $\mathbf{G}_m$. As shown by \cite{chamberlain1987}, $\mathbf{V}_{m}^{e}$ corresponds to the optimal instrument for $\mathbf{V}_m$. However, $\mathbf{V}_{m}^{e}$ is infeasible because $\mathbb{E}_m(\mathbf y_{m})$ and $\mathbb{E}_m(\check{\mathbf y}_{m})$ are unobserved. In the next section, we discuss how to construct valid instruments for $\mathbf G_m \mathbf y_m $ and $\check{\mathbf y}_{m}$.

\subsection{Estimation} \label{sec:estim}
We propose an instrumental variable (IV) approach to estimate the model parameters. Since $\mathbb{E}_m(\bar{y}_{m,i})$ and $\mathbb{E}_m(\check{y}_{m,i})$ are unobserved and cannot be used as instruments, we instead construct their predictors that serve as instruments. Importantly, these predictors do not need to be consistent estimators. Any exogenous variables that are sufficiently informative about $\mathbb{E}_m(\bar{y}_{m,i})$ and $\mathbb{E}_m(\check{y}_{m,i})$ can serve as valid instruments. Throughout, we use the notation $\hat{\mathbb{E}}_m$ to denote a predicted counterpart of $\mathbb{E}_m$.

From Equation~\eqref{eq:yred}, for any network $m$, $\mathbb{E}_m(\mathbf{y}_m)$ can be written as:\footnote{See the derivation of $\mathbb{E}_m(\mathbf{y}_m)$ in Equation~\eqref{eq:AppendEys} in Appendix~\ref{append:proof:Ident}. Since $\lvert \theta_1 \rvert < 1$ (Assumption~\ref{assumption:minbeta}), we can express $(\mathbf{I}_{n_m} - \theta_1 \mathbf{G}_m)^{-1}$ using the Neumann series expansion $\sum_{k = 0}^{\infty} \theta_1^k \mathbf{G}_m^k$. Moreover, because there are no isolated individuals, $\mathbf{G}_m \mathbf{1}_s = \mathbf{1}_s$, which implies $\sum_{k = 0}^{\infty} \theta_1^k \mathbf{G}_m^k \mathbf{1}_s = \frac{1}{1 - \theta_1} \mathbf{1}_s$.}
\begin{equation}\label{eq:Eys}
  \mathbb E_m (\mathbf y_m) = \dfrac{\tilde c}{1 - \theta_1} \mathbf 1_{n_m}
  + \mathbf X_m\boldsymbol\theta_3
  + \sum_{k = 0}^{\infty} \theta_1^k \theta_2 \mathbf G_m^k \mathbb E_m (\check{\mathbf y}_m)
  + \sum_{k = 0}^{\infty} \theta_1^k \mathbf G_m^{k + 1} \mathbf X_m(\theta_1 \boldsymbol\theta_3 + \boldsymbol\theta_4).
\end{equation}
This representation suggests that $\check{\mathbf W}_m = [\mathbf X_m,~ \mathbf G_m \mathbf X_m,~ \mathbf G_m^2 \mathbf X_m]$ provides a natural set of predetermined predictors of $\mathbf E_m(\mathbf y_m)$.\footnote{Higher powers of $\mathbf G_m$ can be included in $\check{\mathbf W}_m$ for greater accuracy.} Let $\check{\mathbf w}_{m,i}^{\prime}$ denote the $i$-th row of $\check{\mathbf W}_m$.

We begin by predicting $\mathbb{E}_m(\check{y}_{m,i})$. Let $\Delta^y_{m,ji} = y_{m,j} - y_{m,i}$ and let $\mathbb{P}_m$ denote the probability measure conditional on $\mathbf{X}_m$ and $\mathbf{G}_m$. We rely on the following decomposition:
\begin{equation}\label{eq:Eycheck}
    \mathbb{E}_m(\check{y}_{m,i}) 
    = \sum_{j \ne i} g_{m,ij}\, 
    \mathbb{P}_m\{\Delta^y_{m,ji} > 0\} \, 
    \mathbb{E}_m(\Delta^y_{m,ji} \mid \Delta^y_{m,ji} > 0),
\end{equation}
where $\mathbb{P}_m\{\Delta^y_{m,ji} > 0\}$ can be interpreted as an extensive margin and $\mathbb{E}_m(\Delta^y_{m,ji} \mid \Delta^y_{m,ji} > 0)$ is the corresponding intensive margin. We adopt this decomposition because the two margins are easier to predict than $\mathbb{E}_m(\check{y}_{m,i})$ directly. Moreover, because both margins are defined for each pair of connected students, we can exploit a large number of dyadic observations to estimate them, which improves prediction accuracy. To construct $\hat{\mathbb{E}}_m(\check{y}_{m,i})$, we replace each margin with its predicted counterpart. We use flexible supervised machine learning methods to predict each margin. Throughout the paper, we employ random forests, although any supervised learning method could be used. For both prediction tasks, $\check{\mathbf w}_{m,j} - \check{\mathbf w}_{m,i}$ serves as the predictor vector, whereas the target variable is $\mathbbm{1}\{\Delta^y_{m,ji} > 0\}$ for the extensive margin and $\Delta^y_{m,ji}$ for the intensive margin.

To ensure that the resulting predictions are exogenous, we use a cross-fitting procedure across networks \citep{Chernozhukov2018Double}. We randomly split the networks into $L \geq 2$ folds, $\mathcal{F}_1, \dots, \mathcal{F}_L$. For any fold $\mathcal{F}_l$, the random forest model is trained using observations from $\mathcal{F}_{-l}$.\footnote{We use $\mathcal{F}_{-l}$ to denote the subsample of all observations excluding those in $\mathcal{F}_l$.} Once the model is trained, only predetermined predictors from $\mathcal{F}_l$ are used to generate predictions for $\mathcal{F}_l$. Thus, no endogenous variables from $\mathcal{F}_l$ enter either the training or prediction steps for that fold. Since the networks are independent and randomly split, this procedure ensures that the predictions for any network $m$ are independent of $\boldsymbol{\varepsilon}_m$.\footnote{Furthermore, for the intensive margin, we train the random forest on the subsample satisfying $y_{m',j} > y_{m',i}$, where $m' \in \mathcal{F}_{-l}$. This allows the prediction to be interpreted as a conditional expectation given $y_{m,j} > y_{m,i}$, consistent with the definition of the intensive margin.} For completeness, we provide a detailed description of the prediction procedure in Supplemental Appendix~\ref{append:instrument}.

For the prediction of $\mathbb{E}_m(\bar{y}_{m,i})$, we define
$$
\bar{\mathbf{W}}_m = [\mathbf{G}_m \mathbf{X}_m,\, \mathbf{G}_m^2 \mathbf{X}_m,\, \mathbf{G}_m \hat{\mathbb{E}}_m(\check{\mathbf{y}}_m),\, \mathbf{G}_m^2 \hat{\mathbb{E}}_m(\check{\mathbf{y}}_m)].
$$
We also denote by $\bar{\mathbf{w}}_{m,i}^{\prime}$ the $i$-th row of $\bar{\mathbf{W}}_m$. Premultiplying Equation~\eqref{eq:Eys} by $\mathbf{G}_m$ implies that $\bar{\mathbf{w}}_{m,i}^{\prime}$ is a natural set of predetermined predictors for $\mathbb{E}_m(\bar{y}_{m,i})$. Notably, $\bar{\mathbf{W}}_m$ includes the standard instruments $\mathbf{G}_m \mathbf{X}_m$ and $\mathbf{G}_m^2 \mathbf{X}_m$ used in the linear model \citep[see][]{Bramoulle2009}, as well as additional predictors to accommodate asymmetries in peer effects. We predict $\mathbb{E}_m(\bar{y}_{m,i})$ using random forests, where $\bar{\mathbf{w}}_{m,i}$ serves as the predictor and the target variable is $\bar{y}_{m,i}$. We also use cross-fitting across networks so that the resulting predictions are independent of $\boldsymbol{\varepsilon}_m$.

Equipped with the predictions $\hat{\mathbb{E}}_m(\check{y}_{m,i})$ and $\hat{\mathbb{E}}_m(\bar{y}_{m,i})$, we define the matrix of instruments as
$\hat{\mathbf{Z}}_m = \big[ \hat{\mathbb{E}}_m(\check{\mathbf{y}}_{m}), \, \hat{\mathbb{E}}_m(\bar{\mathbf{y}}_{m}), \, \mathbf 1_{n_m}, \, \mathbf{X}_m, \, \mathbf{G}_m \mathbf{X}_m \big]$.
The model parameters are then estimated by solving the moment condition
$$
\mathbb E \left( \hat{\mathbf{Z}}_m^{\prime} \boldsymbol{\varepsilon}_m \right)= \mathbf{0},
$$
where $\boldsymbol{\varepsilon}_m = (\varepsilon_{m,1}, \, \dots, \, \varepsilon_{m,n_m})^{\prime}$.

However, since $\hat{\mathbb{E}}_m(\check{\mathbf{y}}_{m})$ and $\hat{\mathbb{E}}_m(\bar{\mathbf{y}}_{m})$ are generated instruments, they act as nuisance parameters, and their uncertainty must be taken into account in inference. We show that the moment function is orthogonal to these nuisance parameters, and therefore their influence can be ignored. This is a particularly strong form of orthogonality, as it holds at any order \citep[see][]{mackey2018orthogonal, bonhomme2026higher}, thereby allowing the use of highly flexible machine learning methods without requiring convergence rates. As a result, our inference relies on weak regularity conditions stated in Assumption~\ref{assumption:iv} below.

Let $\check{f}_m$ and $\bar{f}_m$ be two nonstochastic functions mapping $(\mathbf X_m, \mathbf G_m)$ into $\mathbb R^{n_m}$. Let $$\mathbf Z_m = [\check f_m(\mathbf X_m, \mathbf G_m),\, \bar f_m(\mathbf X_m, \mathbf G_m),\, \mathbf 1_{n_m},\, \mathbf X_m,\, \mathbf G_m \mathbf X_m]$$ denote the asymptotic equivalent of $\hat{\mathbf Z}_m$.
\begin{assumption}\label{assumption:iv}~\hfill~
\begin{enumerate}[label=(\roman*), ref=\theassumption(\roman*), align=left, leftmargin=7pt, itemsep=0pt, topsep=0pt]
    \item $\displaystyle \max_{m} \lVert \hat{\mathbb E}_m(\check{\mathbf y}_{m}) - \check f_m(\mathbf X_m, \mathbf G_m)\rVert = o_p(1)$ and $\displaystyle \max_{m} \lVert \hat{\mathbb E}_m(\bar{\mathbf y}_{m}) - \bar f_m(\mathbf X_m, \mathbf G_m)\rVert = o_p(1)$. \label{assumption:iv:ml}
    \item The matrix $\frac{1}{M} \sum_{m = 1}^M \mathbf Z_m^{\prime}\mathbf V_m$ is asymptotically nonsingular and nonstochastic. \label{assumption:iv:fullrank} 
    \item For some $\nu > 0$, $\displaystyle \max_m \lVert \mathbb E_m(\boldsymbol \varepsilon_m \boldsymbol \varepsilon_m^{\prime}) \rVert^{2 + \nu} < \infty$. \label{assumption:iv:variance} 
\end{enumerate}
\end{assumption}

Assumption~\ref{assumption:iv:ml} requires that $\hat{\mathbb{E}}_m(\check{\mathbf{y}}_{m})$ and $\hat{\mathbb{E}}_m(\bar{\mathbf{y}}_{m})$ converge to nonstochastic quantities, $\check{f}_m(\mathbf{X}_m, \mathbf{G}_m)$ and $\bar{f}_m(\mathbf{X}_m, \mathbf{G}_m)$, respectively, conditional on $\mathbf{X}_m$ and $\mathbf{G}_m$. Notably, no rate of convergence is required, and $\check{f}_m(\mathbf{X}_m, \mathbf{G}_m)$ and $\bar{f}_m(\mathbf{X}_m, \mathbf{G}_m)$ need \textit{not} coincide with the estimands $\mathbb{E}_m(\bar{y}_{m,i})$ or $\mathbb{E}_m(\check{y}_{m,i})$, so the random forest predictions may be asymptotically biased. Nevertheless, the predictions must be sufficiently correlated with the endogenous variables $\bar{y}_{m,i}$ and $\check{y}_{m,i}$ to avoid weak instrument problems. This requirement is reflected in the nonsingularity condition in Assumption~\ref{assumption:iv:fullrank}. In practice, we assess instrument strength using standard weak instrument tests. We also examine whether $\frac{1}{M} \sum_{m=1}^M \hat{\mathbf{Z}}_m^{\prime} \mathbf{V}_m$ is nonsingular using rank tests \citep[e.g.,][]{kleibergen200}. Finally, Assumption~\ref{assumption:iv:variance} imposes a standard finite-moment condition required for the central limit theorem.

Let $\boldsymbol\theta = (\theta_1, \, \theta_2, \,  \tilde c, \, \boldsymbol{\theta}_3^{\prime}, \, \boldsymbol{\theta}_4^{\prime})^{\prime}$ be the vector of reduced-form parameters and $\boldsymbol{\hat \theta}$ be its estimator. Let also $\boldsymbol{\theta}_0$ denote the true value of $\boldsymbol{\theta}$. We establish the following result.
\begin{proposition}\label{prop:consistency}
     Under Assumptions~\ref{assumption:minbeta}--\ref{assumption:iv}, $\boldsymbol{\hat \theta}$ converges in probability to $\boldsymbol{\theta}_0$, and $\sqrt{M}(\boldsymbol{\hat \theta} - \boldsymbol{\theta}_0) \overset{d}{\to} N(\mathbf 0, \, \boldsymbol\Sigma)$, where $\boldsymbol\Sigma$ is given in Appendix~\ref{append:prop:consistency}.
\end{proposition}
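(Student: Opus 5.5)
The estimator is the just-identified IV estimator
\[
\hat{\boldsymbol\theta} = \Big(\sum_{m=1}^M \hat{\mathbf Z}_m'\mathbf V_m\Big)^{-1}\sum_{m=1}^M \hat{\mathbf Z}_m'\mathbf y_m,
\]
and the plan is to compare it with the infeasible estimator built from the limiting instruments $\mathbf Z_m$. Substituting Equation~\eqref{eq:yred} gives
\[
\hat{\boldsymbol\theta} - \boldsymbol\theta_0 = \Big(\tfrac{1}{M}\sum_m \hat{\mathbf Z}_m'\mathbf V_m\Big)^{-1}\tfrac{1}{M}\sum_m \hat{\mathbf Z}_m'\tilde{\boldsymbol\varepsilon}_m .
\]
Both consistency and asymptotic normality will follow from three facts:
\begin{enumerate}[label=(\alph*)]
\item $\tfrac{1}{M}\sum_m (\hat{\mathbf Z}_m - \mathbf Z_m)'\mathbf V_m = o_p(1)$;
\item $\tfrac{1}{\sqrt M}\sum_m (\hat{\mathbf Z}_m - \mathbf Z_m)'\tilde{\boldsymbol\varepsilon}_m = o_p(1)$;
\item $\tfrac{1}{\sqrt M}\sum_m \mathbf Z_m'\tilde{\boldsymbol\varepsilon}_m \overset{d}{\to} N(\mathbf 0,\boldsymbol\Omega)$.
\end{enumerate}
Given these, Assumption~\ref{assumption:iv:fullrank} and Slutsky's lemma yield $\boldsymbol\Sigma = \mathbf Q^{-1}\boldsymbol\Omega\mathbf Q^{-1\prime}$. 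Here $\mathbf Q = \lim \tfrac1M\sum_m \mathbf Z_m'\mathbf V_m$ and $\boldsymbol\Omega = \lim \tfrac1M\sum_m \mathbb E(\mathbf Z_m'\mathbb E_m(\tilde{\boldsymbol\varepsilon}_m\tilde{\boldsymbol\varepsilon}_m')\mathbf Z_m)$.

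For (a), the bounded network size $n_m$ and the uniform bound $\max_m\lVert\hat{\mathbf Z}_m-\mathbf Z_m\rVert = o_p(1)$ from Assumption~\ref{assumption:iv:ml} give
\[
\Big\lVert \tfrac1M\sum_m(\hat{\mathbf Z}_m-\mathbf Z_m)'\mathbf V_m\Big\rVert \le o_p(1)\,\tfrac1M\sum_m\lVert \mathbf V_m\rVert .
\]
The average on the right is $O_p(1)$ by the i.i.d.\ structure (Assumption~\ref{assumption:data:process}) and a law of large numbers. This requires finite first moments of $\mathbf y_m$ and $\check{\mathbf y}_m$. I would derive them from the contraction bound in Proposition~\ref{propo:equilibrium}, which gives $\lVert \mathbf y_m\rVert_\infty \lesssim \lVert\boldsymbol\alpha_m\rVert_\infty$, together with the moment condition on $\boldsymbol\varepsilon_m$.

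For (c), the term $\mathbf Z_m'\tilde{\boldsymbol\varepsilon}_m$ is i.i.d.\ across $m$. It has mean zero because $\mathbf Z_m$ is a nonstochastic function of $(\mathbf X_m,\mathbf G_m)$ and $\mathbb E_m(\boldsymbol\varepsilon_m)=\mathbf 0$ by Assumption~\ref{assumption:data:exogeneity}. A Lyapunov CLT applies using the $2+\nu$ moment bound in Assumption~\ref{assumption:iv:variance}.

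Step (b) is the main obstacle, because the generated instruments converge at no particular rate. The key is cross-fitting. For $m\in\mathcal F_l$, the difference $\hat{\mathbf Z}_m - \mathbf Z_m$ is a function only of $(\mathbf X_m,\mathbf G_m)$ and of data from $\mathcal F_{-l}$, which is independent of network $m$. Conditional on $\mathcal F_{-l}$ and on $\{(\mathbf X_m,\mathbf G_m)\}_{m\in\mathcal F_l}$, the summands $(\hat{\mathbf Z}_m-\mathbf Z_m)'\tilde{\boldsymbol\varepsilon}_m$, $m\in\mathcal F_l$, are therefore independent with mean zero. This is the orthogonality property, and it holds exactly rather than only to first order. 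Their conditional second moment is bounded by
\[
\tfrac1M\sum_{m\in\mathcal F_l}\lVert\hat{\mathbf Z}_m-\mathbf Z_m\rVert^2\,\lVert\mathbb E_m(\tilde{\boldsymbol\varepsilon}_m\tilde{\boldsymbol\varepsilon}_m')\rVert \le \max_m\lVert\hat{\mathbf Z}_m-\mathbf Z_m\rVert^2\cdot O(1) = o_p(1).
\]
A conditional Chebyshev inequality, followed by dominated convergence to remove the conditioning, then shows that each fold's contribution is $o_p(1)$. Summing over the finitely many $L$ folds gives (b).

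The subtle points I expect to need care with are two. First, the instrument $\hat{\mathbb E}_m(\bar{\mathbf y}_m)$ uses $\hat{\mathbb E}_m(\check{\mathbf y}_m)$ as a predictor, so the independence argument must be applied to the composite prediction. I would handle this by cross-fitting both stages on the same fold partition. Second, the uniform convergence in Assumption~\ref{assumption:iv:ml} must hold conditionally, so that the conditional-variance bound can be converted into an unconditional $o_p(1)$ statement.
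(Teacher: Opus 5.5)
Your proposal is correct and is essentially the paper's proof: you use the same split of $\frac{1}{\sqrt M}\sum_m \hat{\mathbf Z}_m'\boldsymbol\varepsilon_m$ into an infeasible i.i.d.\ term handled by a CLT and a generated-instrument error that cross-fitting makes mean zero with vanishing variance, fold by fold, and your step (a) and your conditional Chebyshev plus bounded-convergence argument fill in points the paper only asserts (that argument also shows the unconditional $o_p(1)$ of Assumption~\ref{assumption:iv:ml} suffices, so your second concern is moot). One caution on your first concern: cross-fitting both stages on the same partition is exactly what the paper does, and it does not by itself make $\hat{\mathbb E}_m(\bar{\mathbf y}_m)$, $m\in\mathcal F_l$, depend only on $(\mathbf X_m,\mathbf G_m)$ and $\mathcal F_{-l}$ data, because $\bar h_l$ is trained on features $\hat{\mathbb E}_{m'}(\check{\mathbf y}_{m'})$ with $m'\in\mathcal F_{l'}$, $l'\neq l$, and these come from first-stage forests fit on data that include $\mathcal F_l$; the exact independence you use in (b) therefore also requires generating those training features without $\mathcal F_l$ (e.g., by nested cross-fitting within $\mathcal F_{-l}$).
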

\noindent The proof of Proposition~\ref{prop:consistency} is provided in Appendix~\ref{append:prop:consistency}. This result also extends to the structural parameters. The estimators are consistent and asymptotically normally distributed. The asymptotic variance can be obtained using the Delta method.

\subsection{Bias of the Standard Symmetric Model}\label{sec:biasSymmetric}
Given that the standard symmetric model is widely used in empirical studies, we investigate what its single peer effect parameter, $\beta$, measures when the data exhibits asymmetric peer effects. One might expect the estimate of $\beta$ to be a simple weighted average of the estimates of $\beta^l$ and $\beta^h$, which would imply that $\beta$ lies asymptotically between $\beta^l$ and $\beta^h$. However, we show that the weights in this average can be negative, implying that $\beta$ may lie outside the range defined by $\beta^l$ and $\beta^h$. This issue is common in settings where a single parameter aggregates heterogeneous effects. A well-known example is the heterogeneous treatment effects problem studied by \cite{de2020two}.

In the symmetric model \citep{blume_linear_2015}, the outcome is modeled as:
\begin{equation}\label{eq:LIM}
  \mathbf y_{m} = \delta_1 \bar{\mathbf y}_{m}
  + \mathbf{R}_{m} \boldsymbol{\delta}_2
  + \boldsymbol\eta_{m},
\end{equation}
where $\delta_1 = \frac{\beta}{1+\beta}$, $\mathbf R_m = (\mathbf 1_{n_m}, \, \mathbf X_m, \, \mathbf G_m \mathbf X_m)$ denotes the matrix of control variables, and $\boldsymbol\eta_{m}$ is an error term. The parameters $\beta$ and $\boldsymbol{\delta}_2$ can be estimated by GMM, where $\bar{\mathbf y}_{m}$ is instrumented by $\hat{\mathbb E}_m(\bar{\mathbf y}_{m})$, or by any other valid instrument, such as $\mathbf G_m^2 \mathbf X_m$ \citep[see][]{Bramoulle2009}. The resulting moment condition is
$$
\mathbb E\big(\tilde{\mathbf{Z}}_m^{\prime} (\mathbf y_{m} - \delta_1 \bar{\mathbf y}_{m} - \mathbf{R}_{m} \boldsymbol{\delta}_2)\big) = \mathbf{0},
$$
where $\tilde{\mathbf{Z}}_m = \big[\hat{\mathbb{E}}_m(\bar{\mathbf{y}}_{m}), \, \mathbf R_m \big]$ is the matrix of instruments.

For the rest of this section, we add a superscript $h$ to $\check y_{m,i}$; that is, we write $\check y_{m,i}^h = \bar y_{m,i}^h - g_{m,i}^h y_{m,i}$. We also define the analogous variable $\check y_{m,i}^l = \bar y_{m,i}^l - g_{m,i}^l y_{m,i}$, where $\bar{y}_{m,i}^l = \sum_{j \ne i} \mathbbm{1}\{y_j \leq y_i\} g_{ij} y_j$ and $g_{m,i}^l = \sum_{j \ne i} \mathbbm{1}\{y_j \leq y_i\} g_{ij}$. Let $\check{\mathbf y}_m^h = (\check y_{m,1}^h, \, \dots, \, \check y_{m,n_m}^h)^{\prime}$ and $\check{\mathbf y}_m^l = (\check y_{m,1}^l, \, \dots, \, \check y_{m,n_m}^l)^{\prime}$. 

We consider the following auxiliary moment conditions:
\allowdisplaybreaks
\begin{align}\label{eq:auxiliaryIV}
\begin{split}
    \mathbb E \left( \tilde{\mathbf{Z}}_m^{\prime} \big( \check{\mathbf y}_m^h - \delta^h_1 \bar{\mathbf y}_m - \mathbf R_m\boldsymbol{\delta}^h_2\big) \right) &= \mathbf{0},\\
    \mathbb E \left( \tilde{\mathbf{Z}}_m^{\prime} \big( \check{\mathbf y}_m^l - \delta^l_1 \bar{\mathbf y}_m - \mathbf R_m\boldsymbol{\delta}^l_2 \big)\right) &= \mathbf{0},
\end{split}
\end{align}
for some unknown parameters $\delta^h_1$, $\boldsymbol{\delta}^h_2$, $\delta^l_1$, and $\boldsymbol{\delta}^l_2$. Specifically, $\delta^h_1$ is the coefficient from the IV regression of $\check{\mathbf y}_m^h$ on $\bar{\mathbf y}_m$, partialling out $\mathbf R_m$. Similarly, $\delta^l_1$ is the coefficient from the IV regression of $\check{\mathbf y}_m^l$ on $\bar{\mathbf y}_m$, partialling out $\mathbf R_m$.

\begin{proposition}\label{prop:biasSymmetric}
Under Assumptions~\ref{assumption:minbeta}--\ref{assumption:iv}, we have:
$$
\frac{\beta}{1+\beta} = \frac{\beta^l}{1+\beta^l} + \delta^h_1 \dfrac{\beta^h - \beta^l}{1+\beta^l}
\quad \text{and} \quad
\frac{\beta}{1+\beta} = \frac{\beta^h}{1+\beta^h} + \delta^l_1 \dfrac{\beta^l - \beta^h}{1+\beta^h}.
$$
\end{proposition}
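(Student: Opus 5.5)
The idea is to substitute the true structural equation into the symmetric model's moment condition and use linearity of the IV estimand. The symmetric estimand $\delta_1 = \beta/(1+\beta)$ is the IV coefficient of $\mathbf y_m$ on $\bar{\mathbf y}_m$, with instruments $\tilde{\mathbf Z}_m$ and controls $\mathbf R_m$. By Frisch–Waugh, it is the ratio
\[
\delta_1 = \frac{\mathbb E\big(\hat{\mathbb E}_m(\bar{\mathbf y}_m)^{\perp\prime}\mathbf y_m\big)}{\mathbb E\big(\hat{\mathbb E}_m(\bar{\mathbf y}_m)^{\perp\prime}\bar{\mathbf y}_m\big)},
\]
where $\perp$ denotes the population residual after projecting on $\mathbf R_m$. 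The map from a dependent variable to its IV coefficient on $\bar{\mathbf y}_m$ is linear. So the plan is to write $\mathbf y_m$ as a linear combination of $\bar{\mathbf y}_m$, $\check{\mathbf y}^h_m$ (or $\check{\mathbf y}^l_m$), terms in the span of $\mathbf R_m$, and the structural error. The coefficient of each piece then transfers directly.

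\textbf{First identity.} From \eqref{eq:yred},
\[
\mathbf y_m = \theta_1\bar{\mathbf y}_m + \theta_2\check{\mathbf y}^h_m + \mathbf R_m(\tilde c,\boldsymbol\theta_3',\boldsymbol\theta_4')' + \tilde{\boldsymbol\varepsilon}_m,
\]
since $\check{\mathbf y}_m = \check{\mathbf y}^h_m$. Plug this into the symmetric moment condition. Each piece contributes as follows:
\begin{itemize}
\item the $\bar{\mathbf y}_m$ term contributes exactly $\theta_1$;
\item the $\mathbf R_m$ term contributes zero, because it is partialled out;
\item the error term contributes zero, because $\mathbb E(\tilde{\mathbf Z}_m'\tilde{\boldsymbol\varepsilon}_m)=\mathbf 0$;
\item the $\check{\mathbf y}^h_m$ term contributes $\theta_2\delta^h_1$, by the definition of $\delta^h_1$ in \eqref{eq:auxiliaryIV}.
\end{itemize}
This gives $\delta_1 = \theta_1 + \theta_2\delta^h_1$, which is the first equality after substituting $\theta_1 = \beta^l/(1+\beta^l)$ and $\theta_2 = (\beta^h-\beta^l)/(1+\beta^l)$.

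The error term is the one step needing care. Its zero contribution requires $\mathbb E(\hat{\mathbb E}_m(\bar{\mathbf y}_m)'\boldsymbol\varepsilon_m)=0$. This holds because cross-fitting makes the generated instrument independent of $\boldsymbol\varepsilon_m$ given $(\mathbf X_m,\mathbf G_m)$, and Assumption~\ref{assumption:data:exogeneity} then applies. Asymptotically one works with $\bar f_m$ under Assumption~\ref{assumption:iv:ml}, exactly as in the proof of Proposition~\ref{prop:consistency}. That proof also guarantees the denominator above is nonzero, so $\delta_1$, $\delta^h_1$ and $\delta^l_1$ are well defined.

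\textbf{Second identity.} Here the main task is to rewrite the best response \eqref{eq:yi1} with the roles of $l$ and $h$ swapped. Use $\bar y^h_i = \bar y_i - \bar y^l_i$ and $g^h_i = 1-g^l_i$ (no isolated individuals), and multiply out \eqref{eq:yi1}:
\[
(1+\beta^l g^l_i+\beta^h g^h_i)\,y_i = \alpha_i + \beta^h\bar y_i + (\beta^l-\beta^h)\bar y^l_i .
\]
Rearranging gives
\[
(1+\beta^h)\,y_i = \alpha_i + \beta^h\bar y_i + (\beta^l-\beta^h)\big(\bar y^l_i - g^l_i y_i\big) = \alpha_i + \beta^h\bar y_i + (\beta^l-\beta^h)\,\check y^l_i .
\]
Dividing by $1+\beta^h$, which is positive by Assumption~\ref{assumption:minbeta}, yields
\[
\mathbf y_m = \frac{\beta^h}{1+\beta^h}\bar{\mathbf y}_m + \frac{\beta^l-\beta^h}{1+\beta^h}\check{\mathbf y}^l_m + \mathbf R_m\tilde{\boldsymbol\delta} + \frac{\boldsymbol\varepsilon_m}{1+\beta^h},
\]
with $\alpha_i$ expanded via \eqref{eq:alpha} into the span of $\mathbf R_m$ plus the error. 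Repeating the linearity argument with $\delta^l_1$ in place of $\delta^h_1$ gives the second identity.

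I expect no serious obstacle. The main points to verify are the algebraic reparametrization in terms of $\check y^l$, and the orthogonality of the generated instrument to the structural error.
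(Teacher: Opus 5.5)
Your proposal is correct and follows essentially the paper's proof. You substitute the two representations of $\mathbf y_m$ (via $\check{\mathbf y}_m^h$ over $1+\beta^l$, and via $\check{\mathbf y}_m^l$ over $1+\beta^h$) into the symmetric moment condition, drop the error by exogeneity, and absorb the $\check{\mathbf y}$ term through the auxiliary conditions~\eqref{eq:auxiliaryIV}; your Frisch--Waugh ratio plays the role of the paper's appeal to linear independence of $\mathbb E(\tilde{\mathbf Z}_m'\bar{\mathbf y}_m)$ and $\mathbb E(\tilde{\mathbf Z}_m'\mathbf R_m)$.

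One caveat, which you share with the paper: your nonzero denominator is equivalent to nonsingularity of $\mathbb E(\tilde{\mathbf Z}_m'[\bar{\mathbf y}_m,\mathbf R_m])$. That matrix is the submatrix of the one in Assumption~\ref{assumption:iv:fullrank} obtained by deleting the $\check f_m$ row and the $\check{\mathbf y}_m$ column, and nonsingularity of the full matrix does not imply it. It should therefore be stated as the implicit condition under which $\beta$, $\delta^h_1$ and $\delta^l_1$ are well defined, rather than credited to the proof of Proposition~\ref{prop:consistency}.
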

The proof of Proposition~\ref{prop:biasSymmetric} is provided in Appendix~\ref{append:prop:biasSymmetric}.

A direct implication of Proposition~\ref{prop:biasSymmetric} is that the necessary and sufficient conditions for $\beta$ to lie in the range defined by $\beta^l$ and $\beta^h$ are that $\delta^l_1 \geq 0$ and $\delta^h_1 \geq 0$.\footnote{Since $f(\beta) = \frac{\beta}{1+\beta}$ is strictly increasing in $\beta$, saying that $\beta$ lies in the range defined by $\beta^l$ and $\beta^h$ is equivalent to saying that $f(\beta)$ lies in the range defined by $f(\beta^l)$ and $f(\beta^h)$ } For example, if $\beta^l < \beta^h$, then $\frac{\beta^l}{1+\beta^l} \leq \frac{\beta}{1+\beta} \leq \frac{\beta^h}{1+\beta^h}$ if and only if $\delta^l_1 \geq 0$ and $\delta^h_1 \geq 0$. A similar argument applies when $\beta^l > \beta^h$. 

If $\delta^l_1$ or $\delta^h_1$ is negative, then $\beta$ falls outside the range defined by $\beta^l$ and $\beta^h$, meaning that $\beta$ is a weighted average of $\beta^l$ and $\beta^h$ with negative weights.\footnote{Proposition~\ref{prop:biasSymmetric} implies that $\dfrac{\beta}{1 + \beta} = \rho \dfrac{\beta^l}{1 + \beta^l} + (1 - \rho)\dfrac{\beta^h}{1 + \beta^h}$, where $\rho = \dfrac{\delta^l_1(1 + \beta^l)}{\delta^l_1(1 + \beta^l) + \delta^h_1(1 + \beta^h)}$. Since $f(\beta)$ is strictly increasing in $\beta$, this means that $\beta$ can also be written as an average of $\beta^l$ and $\beta^h$.} To see when this can happen, note, for example, that a negative $\delta^l_1$ implies that an increase in $\bar{y}_{m,i}$ has a larger effect on $g_{m,i}^l y_{m,i}$ than on $\bar y_{m,i}^l$. This suggests that this increase is mainly driven by $\bar y_{m,i}^h$, while $\bar y_{m,i}^l$ remains nearly unchanged. This situation arises in certain networks where higher-performing peers are predominant, and there are few or no (direct or indirect) connections from these peers to lower-performing ones. This is, for example, the case in certain segregated networks where friendships are not always reciprocated and often run from lower-status individuals to higher-status individuals \citep[e.g.,][]{ball2013friendship}. We consider such network structures in our simulation study and show that $\beta$ lies outside the range defined by $\beta^l$ and $\beta^h$.

\section{Monte Carlo Simulations}\label{sec:MCsimu}

We conduct a simulation study to assess the finite-sample performance of our estimation method. We consider a sample of $M = 50$ networks, each comprising $n_m = 50$ individuals. Each individual $i$ in network $m$ is characterized by two exogenous variables, $x_{1,m,i}$ and $x_{2,m,i}$. Productivity is defined as in Equation~\eqref{eq:alpha}, except that we introduce an unobserved network-level shock $c_m$:
\begin{equation*}
    \alpha_{m,i} = c_m + \gamma_{1,1} x_{1,m,i} + \gamma_{1,2} x_{2,m,i} + \gamma_{2,1} \bar{x}_{1,m,i} + \gamma_{2,2} \bar{x}_{2,m,i} + \varepsilon_{m,i},
\end{equation*}
where $c_m \sim \mathcal{N}(10, \,1)$ and $\varepsilon_{m,i} \sim \mathcal{N}(0, \,1)$. We simulate $x_{1,m,i}$ to be correlated with $c_m$, so that $c_m$ acts as a network fixed effect.

We consider both fully random and segregated network formation. In the case of random networks, the degree (number of friends) is randomly drawn from the empirical distribution observed in the Add Health data used in the empirical application. Individuals can have up to 10 friends, with an average of 3.47. Given the degree, links are formed uniformly at random within the network. We simulate $x_{1,m,i}$ and $x_{2,m,i}$ from $\text{Uniform}(c_m - 2, \,c_m + 2)$ and $\text{Poisson}(2)$, respectively, and set $\gamma_{1,1} = 1.4$, $\gamma_{1,2} = -0.8$, $\gamma_{2,1} = 0.7$, and $\gamma_{2,2} = -0.5$.

Under the segregated network, to easily handle peers’ relative status, we employ a model without contextual effects for both the data-generating process (DGP) and the estimation specification. We thus impose that $\gamma_{2,1} = 0$ and $\gamma_{2,2} = 0$, and keep the other parameters at their values in the random network case. To introduce segregation, we also change the way we simulate $x_{1,m,i}$. We randomly split the 50 individuals in each network $m$ into three groups, $\mathcal{G}_{m,1}$, $\mathcal{G}_{m,2}$, and $\mathcal{G}_{m,3}$, comprising 10, 10, and 30 individuals, respectively. For individuals in $\mathcal{G}_{m,1}$ and $\mathcal{G}_{m,2}$, we simulate $x_{1,m,i} \sim \text{Uniform}(c_m - 2, \, c_m + 2)$, while for those in $\mathcal{G}_{m,3}$, we simulate $x_{1,m,i} \sim \text{Uniform}(c_m + 10, \, c_m + 20)$.

Individuals in $\mathcal{G}_{m,1}$ form reciprocal links with those in $\mathcal{G}_{m,2}$. Individuals in $\mathcal{G}_{m,2}$ further form non-reciprocal links with those in $\mathcal{G}_{m,3}$, whereas individuals in $\mathcal{G}_{m,3}$ have no friends. For simplicity, the degrees from $\mathcal{G}_{m,1}$ to $\mathcal{G}_{m,2}$ and from $\mathcal{G}_{m,2}$ to $\mathcal{G}_{m,3}$ are independent and randomly drawn from the empirical Add Health distribution. Generating networks in this way ensures that higher-performing peers---who include most friends from $\mathcal{G}_{m,3}$---are predominant and have fewer connections with lower-performing peers. We therefore expect $\delta_1^l$ in Proposition~\ref{prop:biasSymmetric} to be negative for $\beta$ to fall outside the range defined by $\beta^l$ and $\beta^h$.

For each type of network, we consider four DGPs with different values of $\beta^l$ and $\beta^h$. In DGP 1, $\beta^l = \beta^h = 1$, i.e., peer effects are symmetric. In DGP 2, $\beta^l = 0.4$ and $\beta^h = 2.6$, while in DGP 3, $\beta^l = 2.6$ and $\beta^h = 0.4$. In DGP 4, we set $\beta^l = -0.4$ and $\beta^h = 2.6$. We allow for anti-conformity toward low-performing peers in DGP 4 because this pattern arises in our empirical application, although it is not statistically significant. For each DGP, we estimate both the asymmetric and the symmetric models. Table~\ref{tab:MonteCarlo} summarizes the simulation results.

\begin{table}[htbp]
\centering
\setlength{\tabcolsep}{4pt}
\renewcommand{\arraystretch}{1.18}
\begin{threeparttable}
\caption{Simulation Results}
\label{tab:MonteCarlo}
\footnotesize
\begin{tabular}{ld{4}ld{4}ld{4}ld{4}l}
\toprule
                   & \multicolumn{4}{c}{Random Network}                                       & \multicolumn{4}{c}{Segregated Network}                               \\
Model              & \multicolumn{2}{c}{Asymmetry}       & \multicolumn{2}{c}{Symmetry}       & \multicolumn{2}{c}{Asymmetry}     & \multicolumn{2}{c}{Symmetry}     \\
\midrule
\multicolumn{9}{c}{DGP 1: $\beta^l = 1.00$,   $\beta^h = 1.00$}  \\[1ex]
$\beta^l$      & 1.005         & (0.045)       &              &               & 1.009         & (0.091)       &              &               \\
$\beta^h$      & 0.998         & (0.134)       &              &               & 1.001         & (0.019)       &              &               \\
$\beta$        &               &               & 1.003        & (0.035)       &               &               & 0.999        & (0.014)       \\
$\gamma_{1,1}$ & 1.401         & (0.031)       & 1.401        & (0.029)       & 1.400         & (0.005)       & 1.400        & (0.005)       \\
$\gamma_{1,2}$ & -0.801        & (0.023)       & -0.801       & (0.021)       & -0.801        & (0.017)       & -0.800       & (0.017)       \\
$\gamma_{2,1}$ & 0.699         & (0.036)       & 0.699        & (0.035)       &               &               &              &               \\
$\gamma_{2,2}$ & -0.500        & (0.028)       & -0.500       & (0.028)       &               &               &              &               \\
\midrule
\multicolumn{9}{c}{DGP 2: $\beta^l = 0.40$,   $\beta^h = 2.60$}  \\[1ex]
$\beta^l$      & 0.403         & (0.030)       &              &               & 0.411         & (0.086)       &              &               \\
$\beta^h$      & 2.600         & (0.170)       &              &               & 2.603         & (0.045)       &              &               \\
$\beta$        &               &               & 0.757        & (0.036)       &               &               & 3.683        & (0.106)       \\
$\gamma_{1,1}$ & 1.401         & (0.031)       & 1.271        & (0.030)       & 1.400         & (0.005)       & 1.401        & (0.005)       \\
$\gamma_{1,2}$ & -0.801        & (0.021)       & -0.716       & (0.021)       & -0.801        & (0.017)       & -0.811       & (0.018)       \\
$\gamma_{2,1}$ & 0.700         & (0.036)       & 0.648        & (0.041)       &               &               &              &               \\
$\gamma_{2,2}$ & -0.500        & (0.028)       & -0.448       & (0.033)       &               &               &              &               \\
\midrule
\multicolumn{9}{c}{DGP 3: $\beta^l = 2.60$,   $\beta^h = 0.40$}  \\[1ex]
$\beta^l$      & 2.610         & (0.085)       &              &               & 2.609         & (0.123)       &              &               \\
$\beta^h$      & 0.395         & (0.119)       &              &               & 0.400         & (0.009)       &              &               \\
$\beta$        &               &               & 1.817        & (0.063)       &               &               & 0.327        & (0.008)       \\
$\gamma_{1,1}$ & 1.402         & (0.035)       & 1.414        & (0.033)       & 1.400         & (0.005)       & 1.392        & (0.005)       \\
$\gamma_{1,2}$ & -0.800        & (0.027)       & -0.819       & (0.024)       & -0.800        & (0.018)       & -0.739       & (0.016)       \\
$\gamma_{2,1}$ & 0.699         & (0.036)       & 0.816        & (0.046)       &               &               &              &               \\
$\gamma_{2,2}$ & -0.500        & (0.029)       & -0.581       & (0.040)       &               &               &              &               \\
\midrule
\multicolumn{9}{c}{DGP 4: $\beta^l = -0.40$,   $\beta^h = 2.60$} \\[1ex]
$\beta^l$      & -0.398        & (0.022)       &              &               & -0.384        & (0.071)       &              &               \\
$\beta^h$      & 2.590         & (0.167)       &              &               & 2.606         & (0.043)       &              &               \\
$\beta$        &               &               & -0.077       & (0.021)       &               &               & 4.152        & (0.178)       \\
$\gamma_{1,1}$ & 1.398         & (0.034)       & 0.956        & (0.031)       & 1.401         & (0.005)       & 1.402        & (0.005)       \\
$\gamma_{1,2}$ & -0.800        & (0.023)       & -0.532       & (0.022)       & -0.802        & (0.017)       & -0.815       & (0.018)       \\
$\gamma_{2,1}$ & 0.700         & (0.036)       & 0.791        & (0.048)       &               &               &              &               \\
$\gamma_{2,2}$ & -0.500        & (0.028)       & -0.500       & (0.037)       &               &               &              &               \\ \bottomrule
\end{tabular}
\begin{tablenotes}[para,flushleft]\footnotesize
Notes: The models are simulated and estimated 1,000 times. Values reported without parentheses correspond to mean estimates, whereas values in parentheses denote standard deviations. To generate the instruments, we use random forests with 5-fold cross-fitting (see Supplemental Appendix \ref{append:instrument}) and 1,000 trees. For each training step, 20\% of the sample is used to tune the remaining hyperparameters.
\end{tablenotes}
\end{threeparttable}
\end{table}

The results provide strong evidence that, under the asymmetric specification, our estimation method successfully recovers the structural parameters in finite samples for both network formation processes and across all DGPs. For DGP 1, the symmetric specification also performs well and is, unsurprisingly, more precise. Fortunately, once the asymmetric model is estimated, the null hypothesis $\beta^l = \beta^h$ can be easily tested using standard tests. If the null hypothesis is not rejected, the symmetric model can be used to improve efficiency.

For DGPs 2--4, the estimates of the symmetric peer-effect parameter lie between $\beta^l$ and $\beta^h$ when friendships form randomly. Yet, these estimates are overly simplistic and mask important features of the heterogeneous structure of peer effects. For instance, in DGP 4, where there is anticonformity toward lower-performing peers only, the symmetric specification implies anticonformity toward all peers.  For the segregated network, the estimates of $\beta$ always fall outside the range defined by $\beta^l$ and $\beta^h$, as expected. Across all DGPs, either $\beta < \beta^h < \beta^l$ or $\beta > \beta^h > \beta^l$, implying that $\beta$ assigns a negative weight to $\beta^l$ and a weight greater than one to $\beta^h$. This is consistent with $\delta_1^l < 0$ in Proposition~\ref{prop:biasSymmetric}.

Remarkably, the simulation results also reveal that the coefficients of the exogenous variables are biased under the symmetric specification. This bias is particularly pronounced in the randomly formed network, likely due to the inclusion of contextual effects in the specification.\footnote{Following a similar approach as in the proof of Proposition~\ref{prop:biasSymmetric}, we can derive expressions for the bias of $\boldsymbol \gamma_1$ and $\boldsymbol \gamma_2$. However, this bias is less intuitive, as it depends on how $\mathbf{x}_{m,i}$ and $\bar{\mathbf{x}}_{m,i}$ relate to $\check y_{m,i}$.} This result suggests that our asymmetric peer effect model is effective not only in estimating endogenous peer effects but also in estimating exogenous effects, such as treatment effects, in the presence of network interference when asymmetry matters.

\section{Empirical Application} \label{sec:empirics}
This section presents an empirical application using Wave I of the Add Health data. We analyze multiple outcomes to uncover different forms of asymmetry in peer effects that cannot be captured by the standard symmetric model.

\subsection{Add Health Data}
Wave I of the Add Health survey provides nationally representative and detailed information on students in grades 7--12 from 144 schools in the United States during the 1994--1995 school year. Approximately 90,000 students completed an in-school questionnaire covering demographics, family background, academic performance, health-related behaviors, and friendship networks. Each respondent could also nominate up to five male and five female best friends within the same school.

We analyze four outcomes: smoking, fighting, optimism, and drinking. Smoking and drinking are defined as the number of days per week that a student smokes tobacco or consumes alcohol, respectively. Fighting is measured as the number of times per year that a student engages in a physical fight. Optimism is constructed as the average of indicators that reflect whether students believe they will graduate from college and live to age 35. These indicators range from 0 to 8, with 8 indicating certainty about the outcome.

Although Add Health is one of the most comprehensive datasets for studying peer effects, it has some limitations. First, the observed degree may be censored because students can nominate at most 10 friends. Second, some nominated friends cannot be matched to valid student identifiers and are therefore excluded from the network, as is common in studies using this dataset.

Using the standard symmetric model, several studies have shown that degree censoring in this dataset induces only limited attenuation bias, while unmatched links may imply substantial bias \citep[see][]{griffith2022name, boucher2025estimating}. Although formally addressing the missing links problem is beyond the scope of this paper, it is important to note that the unmatched links may lead us to incorrectly classify an individual as isolated when none of their nominated friends can be matched. Because the outcome specifications differ between isolated and non-isolated individuals (see Equations~\eqref{eq:yred} and \eqref{eq:yrediso}), we conduct a robustness check in which we exclude individuals for whom none of their nominated friends can be matched (i.e., false isolates).\footnote{False isolates may still be nominated as friends by other individuals. We therefore do not remove them entirely from the network but exclude them only as focal observations. Consequently, this procedure does not generate additional missing links.} The results, reported in Table~\ref{tab:estimates_wofakeiso} in Supplemental Appendix~\ref{append:empirics}, are similar to those obtained using the full sample.

We control for a rich set of exogenous characteristics, including age, grade, sex, race, Hispanic ethnicity, and mother’s education and employment status. To mitigate the impact of missing links, we also include the number of unmatched friends and the number of matched friends as control variables. Additionally, we control for contextual variables defined as the averages of students' characteristics among friends.

After excluding observations with missing outcome values, our final sample consists of approximately 75{,}000 students from 140 schools. The average number of friends per student is 3.6, and 22\% of students have no friends. Summary statistics of the variables are reported in Supplemental Appendix~\ref{append:empirics}.

\subsection{Empirical Results}

\begin{table}[htbp]
\centering
\footnotesize
\begin{threeparttable}
\caption{Estimation Results}
\label{tab:SumEstimates}
\begin{tabular}{lcccc}
\toprule
\textbf{Outcome} & \multicolumn{2}{c}{\textbf{Smoking}} &
\multicolumn{2}{c}{\textbf{Fighting}} \\
Model & Asymmetric & Symmetric & Asymmetric & Symmetric \\
\midrule
$\beta^l$ & 1.219 & & -0.065 & \\
~ & (0.298) & & (0.087) & \\[1ex]
$\beta^h$ & 3.448 & & 1.115 & \\
~ & (0.449) & & (0.179) & \\[1ex]
$\beta$ & & 2.839 & & 0.232 \\
~ & & (0.458) & & (0.059) \\[1ex]
$\beta^h-\beta^l$ & 2.229 & & 1.180 & \\
~ & (0.281) & & (0.232) & \\[2.5ex]
Total Peer Effect & [0.549, 0.775] & 0.740 &
[-0.069, 0.527] & 0.188 \\[2.5ex]
\nth{1} stage F stat. ($\bar y$) & 88.899 & 171.597 &
42.730 & 64.382 \\
\nth{1} stage F stat. ($\check y$) & 62.752 & &
23.597 & \\[1ex]
KP LM test p-value & 0.000 & 0.000 &
0.002 & 0.000 \\
\end{tabular}
\begin{tabular}{lcccc}
\toprule
\textbf{Outcome} & \multicolumn{2}{c}{\textbf{Optimism}} &
\multicolumn{2}{c}{\textbf{Drinking}} \\
Model & Asymmetric & Symmetric & Asymmetric & Symmetric \\
\midrule
$\beta^l$ & 3.694 & & 0.842& \\
~ & (0.625) & & (0.184) & \\[1ex]
$\beta^h$ & -0.201 & & 1.206 & \\
~ & (0.162) & & (0.296) & \\[1ex]
$\beta$ & & 0.623 & & 0.897 \\
~ & & (0.093) & & (0.172) \\[1ex]
$\beta^h-\beta^l$ & -3.896 & & 0.364 & \\
~ & (0.737) & & (0.285) & \\[2.5ex]
Total Peer Effect & [-0.252, 0.787] & 0.384 &
[0.457, 0.547] & 0.473 \\[2.5ex]
\nth{1} stage F stat. ($\bar y$)  & 84.574 & 103.343 &
25.228 & 40.992
 \\
\nth{1} stage F stat. ($\check y$) & 22.324 & & 18.574 &  \\[1ex]
KP LM test p-value & 0.000 & 0.000 &
0.000 & 0.000 \\
\bottomrule
\end{tabular}
\begin{tablenotes}[para,flushleft]\footnotesize
Notes: Estimates are reported without parentheses, with standard errors (clustered at the network level) shown in parentheses. To generate the instruments, we use random forests with 5-fold cross-fitting (see Supplemental Appendix~\ref{append:instrument}) and 1,500 trees. For each training step, 20\% of the sample is used to tune the remaining hyperparameters. The row labeled ``Total Peer Effect'' corresponds to the range of marginal peer effects, defined by $\frac{\beta^l}{1 + \beta^l}$ and $\frac{\beta^h}{1 + \beta^h}$ (see Equation \eqref{eq:yi1}). The row labeled ``KP LM test p-value” reports the p-value of the LM test of \cite{kleibergen200}. The full table, including the coefficients on own and contextual control variables, is reported in the Supplemental Appendix (Table~\ref{tab:estimates}).
\end{tablenotes}
\end{threeparttable}
\end{table}

We estimate both the asymmetric and symmetric peer effects models for each outcome. For brevity, Table \ref{tab:SumEstimates} reports the estimates of the endogenous peer effects parameters, while the detailed estimation results, including the coefficients on the control variables, are reported in Supplemental Appendix \ref{append:empirics}. To assess instrument strength, we report the F-statistics from the first-stage regressions. We also report the Kleibergen--Paap (KP) rank LM test for underidentification \citep{kleibergen200}. The null hypothesis is that $\frac{1}{M} \sum_{m=1}^M \hat{\mathbf{Z}}_m^{\prime} \mathbf{V}_m$ is rank-deficient (see Assumption \ref{assumption:iv:fullrank}). Overall, these tests confirm that the instruments are relevant and rule out underidentification across all specifications and outcomes.

For three of the outcomes studied---namely, smoking, fighting, and optimism---peer effects exhibit different forms of asymmetry. For smoking, both types of peers exert strong conformity effects, although the influence of friends who smoke more than the individual is larger. Depending on peer composition, a one-unit increase in peers' smoking consumption implies an increase in the individual's smoking consumption ranging from 0.549 to 0.775 units.\footnote{The range of marginal peer effects is defined by $\frac{\beta^l}{1 + \beta^l}$ and $\frac{\beta^h}{1 + \beta^h}$ (see Equation \eqref{eq:yi1}).} The lower bound corresponds to a situation in which all friends smoke less than the individual, whereas the upper bound corresponds to a situation in which all friends smoke more than the individual. Exposure to friends who smoke more than the individual therefore has a larger effect than exposure to friends who smoke less, and this difference is statistically significant. In contrast, the standard symmetric model primarily captures the effect of exposure to friends who smoke more. Given the estimated standard errors, we cannot rule out the possibility that $\beta$ falls outside $[\beta^l,\,\beta^h]$.

Asymmetry is more pronounced for fighting and optimism, with stronger conformity toward higher- and lower-performing friends, respectively. In the case of fighting, the asymmetric model indicates that a one-unit increase in friends' fighting behavior induces peer effects ranging from $-0.069$ to $0.527$, where the lower bound is not statistically different from zero and corresponds to the effects from lower-performing peers. Students imitate friends who are more aggressive than themselves but show little or no response to less aggressive friends. This means that, depending on the status composition of the peer group, students may be either completely unresponsive or strongly responsive to peer behavior. The pattern reverses for optimism, where a one-unit increase in friends' optimism behavior index induces peer effects ranging from $-0.252$ to $0.787$. Here again, the lower bound is not statistically different from zero, but corresponds to the effects from higher-performing peers. Students show no response to friends who are more optimistic than themselves but conform strongly to more pessimistic friends. 

However, for both fighting and optimism, the symmetric model masks this heterogeneity by indicating that all types of peers exert the same influence. In particular, this model suggests that students strongly conform to less aggressive friends and to more optimistic friends, although these friends actually exert little or no influence. Moreover, it underestimates the effects of the friends who actually exert strong influence, namely more aggressive friends in the case of fighting and more pessimistic friends in the case of optimism.

Lastly, for drinking, the estimated effects of friends who drink more and friends who drink less are similar in magnitude and statistically indistinguishable. In this case, the standard model provides an adequate description of peer influence, and the symmetry restriction cannot be statistically rejected.

Interestingly, all socially undesirable behaviors---namely smoking, fighting, and drinking---exhibit stronger peer effects from higher-performing peers, although the asymmetry is not statistically significant in the case of drinking. These results suggest that, for such outcomes, the most active individuals are likely the most effective targets for amplifying the effects of policy interventions aimed at mitigating these behaviors through the network. This finding provides additional insight into the results of \citet{lee2021}, who show that key players in networks involving risky behaviors, such as juvenile delinquency, are not necessarily the most active individuals. Because their analysis relies on a standard symmetric model, it does not account for the possibility that the most active individuals, although not central nodes in the network, may exert stronger influence on their peers.

Furthermore, the asymmetry in peer effects has the opposite targeting implication for optimism, a socially desirable outcome. Improving optimism may be important for fostering aspirations and self-esteem, both of which are key determinants of long-term welfare \citep{bernard2026future,genicot2020aspirations}. Our results indicate that, in this case, policy interventions should target the most pessimistic students to maximize their effectiveness.

\subsection{Policy Intervention} \label{application:policy}
We study the policy implications of the empirical results. We consider an intervention in which a social planner seeks to improve or reduce the aggregate outcome, $\sum_{i = 1}^{n_m} y_{m,i}$, in some network $m$, depending on whether the outcome is socially desirable or undesirable, respectively. This is achieved by treating students, where treatment consists of increasing productivity $\alpha_{m,i}$ by one unit when the outcome is socially desirable, or decreasing $\alpha_{m,i}$ by one unit when the outcome is socially undesirable.\footnote{Our results are not driven by the specific choice of increasing or decreasing productivity by one unit. Similar results are obtained under alternative treatment intensities.} As in \cite{galeotti2020}, we assume that resources are limited, so that only $\kappa$ students can be treated. We vary $\kappa$ from 1 to $n_m$ and show that ignoring asymmetry may lead to inefficient allocations. This misallocation can result in substantial losses in aggregate outcomes and may even render the policy counterproductive.

\subsubsection{Treatment Allocation in the Asymmetric Model}
As shown in Section \ref{microfoundation::socialmultiplier}, the optimal order in which students should be treated under the symmetric model is determined by the ranking induced by the vector $\boldsymbol{\Delta}$, which admits a closed form. Consequently, for any budget $\kappa$, the optimal set of students can be identified straightforwardly under this model. By contrast, because the asymmetric model does not admit a closed-form solution for the outcome, determining the optimal allocation is considerably more challenging. It requires an exhaustive search over all possible combinations of $\kappa$ students from a school of size $n_m$. Even for moderate values of $\kappa$ and $n_m$, the number of possible allocations quickly reaches several billions, rendering such a search computationally infeasible.\footnote{For example, treating half of the students in a school of 100 yields more than $10^{29}$ possible combinations.}

To circumvent this challenge, we propose two sequential algorithms---referred to as the ``forward'' and ``backward'' algorithms---that approximate the \textit{oracle} optimal treatment allocation. For a budget $\kappa$, we denote by $\mathcal T_{m,\kappa}$ the set of treated students selected by a given approximation algorithm.

The ``forward'' algorithm first addresses the case $\kappa = 1$, which is straightforward because it involves only $n_m$ possible allocations. It then proceeds recursively. Given $\kappa < n_m$ and $\mathcal T_{m,\kappa}$, the allocation for $\kappa + 1$ is obtained under the assumption that $\mathcal T_{m,\kappa} \subset \mathcal T_{m,\kappa+1}$. This assumption reduces the search space to $n_m - \kappa$ candidate allocations, since one only needs to identify the single student to add to $\mathcal T_{m,\kappa}$ to obtain $\mathcal T_{m,\kappa+1}$. 

The ``backward'' algorithm proceeds similarly, but in reverse. It first addresses the case of $\kappa = n_m$, which involves only one possible allocation. Then, for any $\kappa > 1$, the allocation for $\kappa - 1$ is obtained under the assumption that $\mathcal T_{m,\kappa - 1} \subset \mathcal T_{m,\kappa}$, reducing the search space to $\kappa$ candidate allocations. This consists of identifying the student to remove from $\mathcal T_{m,\kappa}$ to obtain $\mathcal T_{m,\kappa - 1}$.

Implementing these algorithms for all values of $\kappa$, from 1 to $n_m$, reduces the computational complexity from $2^{n_m}$ to $n_m^2$, making the problem tractable even for large networks. However, the two approximations generally do not yield the same solution and may differ from the oracle allocation because the latter does not necessarily satisfy $\mathcal T_{m,\kappa} \subset \mathcal T_{m,\kappa+1}$. Unfortunately, it is challenging to quantify the regret of these approximations relative to the oracle allocation. Nevertheless, the degree of agreement between the two algorithms provides a useful diagnostic. When the allocations they produce are similar, the planner may have greater confidence in the robustness of the resulting recommendation. In practice, for each value of $\kappa$, the planner can select the algorithm that delivers a better-performing allocation.

\subsubsection{Intervention Results}
We consider the cases where the students to target are selected according to either the symmetric model or the forward or backward approximation of the asymmetric model. Once the selected students are treated, we evaluate the resulting aggregate outcome using the asymmetric model, which we regard as the true data-generating process because it is more general. For each outcome, we present results for two randomly selected schools (see Figure~\ref{fig:intervention}). Panel A presents the spillover generated by each targeting rule. For a socially desirable outcome, the spillover is defined as the difference between the change in the aggregate outcome and $\kappa$, expressed as a percentage of $\kappa$.\footnote{For example, for a socially desirable outcome, the spillover is 50\% when the aggregate outcome increases by 15 while 10 students are treated.} For socially undesirable outcomes, it is defined analogously with the opposite sign. Panel B reports the loss in spillover associated with the symmetric allocation relative to the asymmetric allocations.

\afterpage{\begin{landscape}
\begin{figure}
    \centering
    \includegraphics[scale = 0.58]{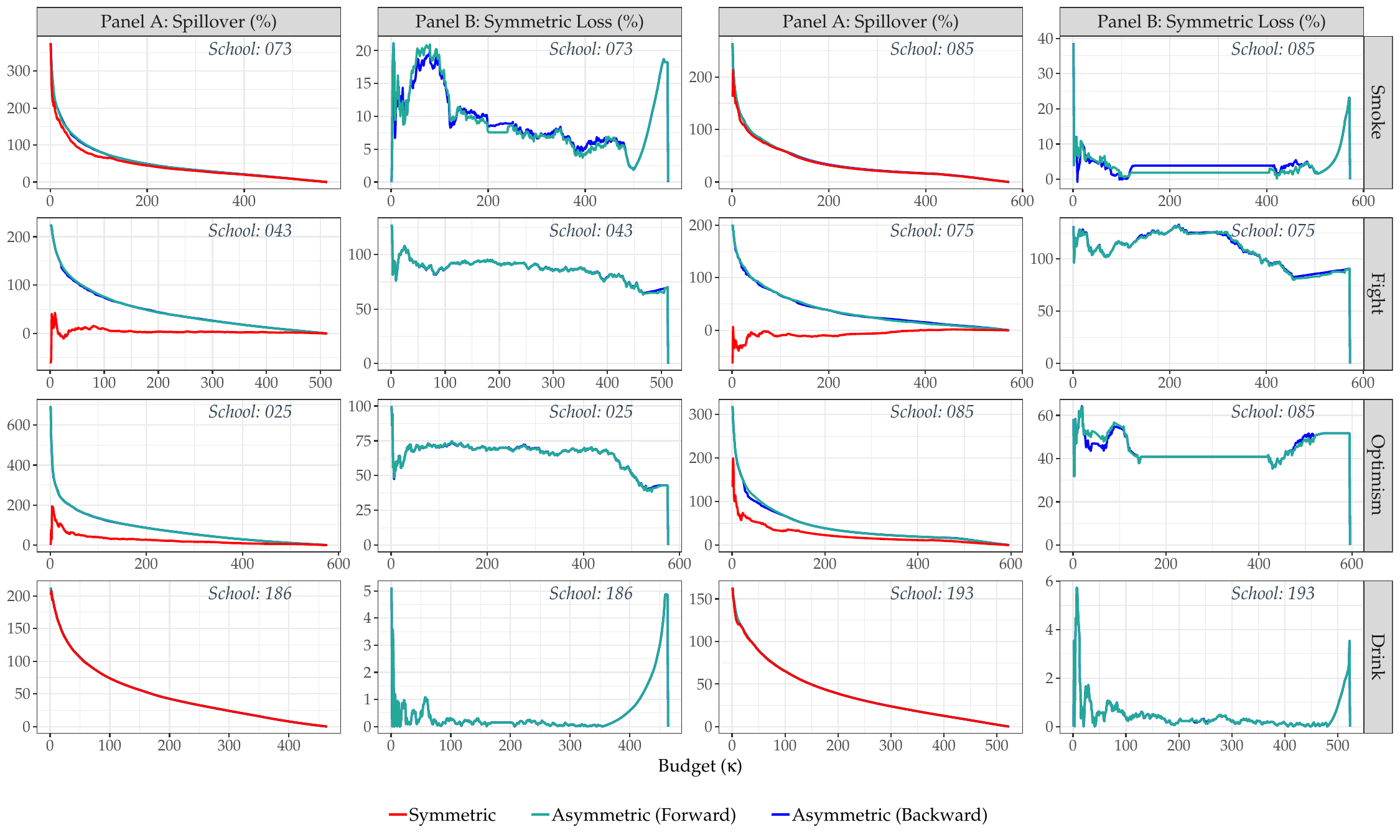}
    \caption{Spillover losses from using the symmetric model to allocate a treatment}
    \justifying
    \footnotesize{\noindent Notes: For each outcome, two schools are randomly selected from among those with $n_m \in [\bar n - 100, \, \bar n + 100]$, where $\bar n$ is the average school size. For each school, Panel A plots the spillover effect of the intervention as a function of the intervention budget $\kappa$. Panel B displays the loss in spillover associated with the symmetric allocation relative to the forward and backward asymmetric allocations.}
    \label{fig:intervention}
\end{figure}
\end{landscape}}

In general, both targeting rules of the asymmetric model yield similar results, with the forward approximation outperforming the backward approximation for certain values of $\kappa$, and vice versa. This similarity provides reassurance regarding the robustness of the approximations.

Although asymmetry matters for smoking, the difference in spillover effects under the symmetric targeting rule is not large. This is because both higher- and lower-performing peers exert strong influence, so that ignoring asymmetry generates only limited inefficiency. Using the symmetric model for treatment assignment leads to a 5\% to 15\% loss in spillovers. Yet, even such a modest loss may translate into substantial monetary costs when the policy is implemented in a large school.

For fighting and optimism, however, asymmetry is substantially more important. The results indicate that using the symmetric model may generate no spillover effects, or may even lead to negative spillovers, which is worse than what would arise in a benchmark where students do not interact. In the case of fighting, the symmetric model assigns treatment to students who are less aggressive friends, while these students exert little or no influence. This generates no spillovers or slightly negative spillovers. In this case, the loss in the aggregate outcome when the planner uses the symmetric model exceeds 70\% and may reach 100\%. The results are similar for optimism. The symmetric model assigns treatment to more optimistic friends, even though these friends exert little or no influence. As a result, the loss in the aggregate outcome generally ranges from 40\% to 75\%.

Finally, for drinking, using the symmetric model generates virtually no loss. This is expected because peer effects are approximately symmetric for this outcome. The loss is generally below 1\% and is likely not statistically significant.

\section{Conclusion} \label{sec:conclu}

This paper develops a model of asymmetric peer effects in which individuals respond differently to peers who outperform them and to those who underperform them. We propose an econometric approach to identify and estimate both effects and show that this asymmetry is empirically pervasive and consequential for policy. Standard peer effects models overlook this distinction and instead recover an aggregate effect that is not simply the average of the two asymmetric effects. In fact, this aggregate effect can lie outside the range defined by the underlying asymmetric effects. Consequently, policy interventions based on the symmetric model may be inefficient or even harmful. Our empirical application demonstrates the relevance of these asymmetric effects across several student outcomes. In targeted interventions, we find that ignoring this asymmetry may generate substantial losses in spillovers.

This paper contributes to the recent literature on heterogeneous peer effects, where heterogeneity is often introduced through the distribution of peer outcomes \citep{boucher2023}. Although this form of heterogeneity is relevant, our results show that it does not fully capture how individual decisions are formed. Decisions also depend on how individuals evaluate themselves relative to their peers. As a result, even when facing the same distribution of peer outcomes, individuals may respond differently depending on their own performance. Since overlooking this distinction may lead to counterproductive interventions, standard peer effects models should be interpreted with caution. More broadly, our results emphasize that policy-oriented research on peer effects should rely on flexible models to better understand the mechanisms through which social interactions shape behavior.
\clearpage
\newpage
\appendix
\numberwithin{equation}{section}
\numberwithin{figure}{section}
\numberwithin{table}{section}
\numberwithin{lemma}{section}
\numberwithin{assumption}{section}
\numberwithin{proposition}{section}

\section{Proofs}
\subsection{Equilibrium Uniqueness} \label{proof:uniqueness}
Let  $\mathcal{A}_i = \{y_j : g_{ij} > 0\} \cup \{-\infty, +\infty\}$ denote the set of effort levels chosen by individual $i$’s peers, augmented with $-\infty$ and $+\infty$. Let  $a_1, \dots, a_{q}$ denote the ordered elements of $\mathcal{A}_i$; that is, $a_1 < a_2< \dots < a_q$, with $a_1 = -\infty$ and $a_q = +\infty$. 

We first state and show the following lemmas.
\begin{lemma}\label{lemma:differentiable}
 The utility function \eqref{eq:utility} is continuously differentiable in $y_i$.
\end{lemma}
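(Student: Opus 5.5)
The plan is to reduce the claim to the continuous differentiability of the social cost $\mathcal{S}_i(y_i,\mathbf y_{-i})$ in $y_i$, since the private part $\alpha_i y_i - y_i^2/2$ is a polynomial. I would rewrite the social cost as a sum over peers of a single scalar function of the deviation $t = y_i - y_j$. Specifically, $\mathcal{S}_i(y_i,\mathbf y_{-i}) = \sum_{j\ne i} g_{ij}\,\phi(y_i - y_j)$, where
\[
\phi(t) = \frac{\beta^l}{2}\,t^2\,\mathbbm{1}\{t\ge 0\} + \frac{\beta^h}{2}\,t^2\,\mathbbm{1}\{t<0\}.
\]
This matches Equation~\eqref{eq:asycost}, because $y_j\le y_i$ is equivalent to $t\ge 0$. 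With this representation it suffices to show that $\phi$ is $C^1$ on $\mathbb R$, since a finite weighted sum of compositions of $C^1$ functions with the affine maps $y_i\mapsto y_i-y_j$ is $C^1$.

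Next I would check $\phi$ piece by piece, using the partition of the real line by the ordered points $a_1<\dots<a_q$ of $\mathcal A_i$. On each open interval $(a_k,a_{k+1})$, the status of every peer is fixed, so $U_i$ is a quadratic polynomial in $y_i$ and hence smooth. Its derivative there is
\[
\alpha_i - y_i - \beta^l\sum_{j:\,y_j\le y_i} g_{ij}(y_i-y_j) - \beta^h\sum_{j:\,y_j>y_i} g_{ij}(y_i-y_j).
\]

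The only delicate step is at the breakpoints $y_i = a_k = y_j$ for some peer $j$, and I do not expect it to be hard. At such a point the term for peer $j$ switches from $\frac{\beta^h}{2}(y_i-y_j)^2$ to $\frac{\beta^l}{2}(y_i-y_j)^2$. Both pieces vanish at $t=0$, so $\phi$ is continuous there. Both derivatives $\beta^h t$ and $\beta^l t$ also vanish at $t=0$, so $\phi'(0)=0$ by comparing the one-sided difference quotients $\phi(t)/t\to 0$. Thus $\phi'(t) = \beta^l t\,\mathbbm{1}\{t\ge0\} + \beta^h t\,\mathbbm{1}\{t<0\}$ is continuous everywhere.

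Several peers may share the same effort level, but each contributes its own $C^1$ term, so the argument is unaffected. Combining these steps, $\partial U_i/\partial y_i$ exists and is continuous on $\mathbb R$. It is given by the displayed formula, with the indicator convention at ties being irrelevant because the tied terms are zero. This closed form is exactly what the later best-response characterization in Equation~\eqref{eq:yi1} requires.
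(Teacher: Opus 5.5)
Your proof is correct, but it decomposes the problem differently from the paper. The paper works with $U_i$ as a whole. It partitions the real line by the ordered peer efforts $a_1<\dots<a_q$ and rewrites $U_i$ on $(a_{\ell-1},a_\ell)$ and $(a_\ell,a_{\ell+1})$, isolating the peers tied at $a_\ell$ through the weight $n_i^{\ast}$. It then checks that the left and right limits of $\partial U_i/\partial y_i$ at each $a_\ell$ coincide. Differentiability at $a_\ell$ is inferred from this equality of one-sided limits, which implicitly uses continuity of $U_i$ and the mean value theorem. You instead decompose peer by peer, writing $\mathcal S_i=\sum_{j\ne i}g_{ij}\phi(y_i-y_j)$. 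Everything then reduces to one scalar function with a single kink at $t=0$, where you verify the derivative directly from $\phi(t)/t\to 0$ and then check continuity of $\phi'$. Your route buys three things. Ties among several peers need no separate bookkeeping. Differentiability at the kink is proved rather than deduced from limits of the derivative. And it is transparent that no restriction on $\beta^l,\beta^h$ is needed. The paper's interval-based route sets up the piecewise representation on $(a_\ell,a_{\ell+1})$ that it reuses in the concavity and best-response lemmas. Your formula $\partial U_i/\partial y_i=\alpha_i-y_i-\sum_{j\ne i}g_{ij}\phi'(y_i-y_j)$ yields the same expression as Equation~\eqref{eq:derivative:withouta}, so nothing is lost. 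Your middle paragraph redoing the interval analysis for $U_i$ is redundant once $\phi\in C^1$ is established.
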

\begin{proof}
The utility function can be written as:
\begin{equation}\label{Eq:Utility2}
    U_i(y_i, \mathbf{y}_{-i})=\alpha_iy_i - \frac{y_i^2}{2} -\dfrac{1}{2}\sum_{j \ne i} \left(\beta^l\mathbbm{1}\{y_j\leq y_i\} + \beta^h\mathbbm{1}\{y_j> y_i\}\right)g_{ij}(y_i - y_j)^2,
\end{equation}
where $\mathbbm{1}\{\cdot\}$ denotes the indicator function. 

The terms $\mathbbm{1}\{y_j \leq y_i\}$ and $\mathbbm{1}\{y_j > y_i\}$ are continuously differentiable in $y_i$ for all $y_i \ne y_j$. Thus, the function $U_i(y_i, \mathbf{y}_{-i})$ is continuously differentiable in $y_i$ for all $y_i \in \mathbb{R} \setminus \mathcal{A}_i$. It is therefore sufficient to show that that $U_i(y_i, \mathbf{y}_{-i})$ is also continuously differentiable at any finite $y_i \in \mathcal{A}_i$; that is, $y_i = a_\ell$ for $\ell \in \{2, \,\dots, \, q-1\}$.\footnote{Such a $y_i$ exists only for individuals with at least one friend. For individuals without friends, the lemma holds because $U_i(y_i, \mathbf{y}_{-i}) = \alpha_i y_i - \frac{y_i^2}{2}$, which is continuously differentiable in $y_i$ for all $y_i \in \mathbb{R}$.}

For any $\ell \in \{2, \,\dots, \, q-1\}$, if $y_i \in (a_{\ell-1}, a_\ell)$, then $y_i$, $\mathbbm{1}\{y_j\leq y_i\} = \mathbbm{1}\{y_j< a_\ell\}$ and $\mathbbm{1}\{y_j> y_i\}=\mathbbm{1}\{y_j\geq a_\ell\}$. The utility function can thus be written as:
\begingroup
\allowdisplaybreaks
\begin{align*}
    U_i(y_i, \mathbf{y}_{-i})&=\alpha_iy_i - \frac{y_i^2}{2} -\dfrac{1}{2}\sum_{j \ne i} \left(\beta^l\mathbbm{1}\{y_j< a_{\ell}\} + \beta^h\mathbbm{1}\{y_j\geq a_\ell\}\right)g_{ij}(y_i - y_j)^2,\\
    \begin{split}
    U_i(y_i, \mathbf{y}_{-i})&= \alpha_iy_i - \frac{y_i^2}{2} -\dfrac{\beta^h}{2}n_i^{\ast}(y_i - a_\ell)^2 -\\
    &\quad\quad \dfrac{1}{2}\sum_{j \ne i} \left(\beta^l\mathbbm{1}\{y_j< a_{\ell}\} + \beta^h\mathbbm{1}\{y_j> a_\ell\}\right)g_{ij}(y_i - y_j)^2,
    \end{split}
\end{align*}
\endgroup
where $n_i^{\ast} = \sum_{j \ne i} g_{ij} \mathbbm{1}\{y_j = a_\ell\}$. The function $U_i(y_i, \mathbf{y}_{-i})$ is differentiable in $y_i$ on the interval $(a_{\ell-1}, a_\ell)$, since this intervals do not contain any point of $\mathcal{A}_i$. It is easy to verify that the limit of its derivative as $y_i$ approaches $a_\ell$ from the left is:
\begin{equation*}
    \lim_{y_i \underset{<}{\to} a_\ell} \frac{\partial U_i(y_i, \mathbf{y}_{-i})}{\partial y_i}=\alpha_i-a_\ell-\sum_{j \ne i} \left(\beta^l\mathbbm{1}\{y_j< a_{\ell}\} + \beta^h\mathbbm{1}\{y_j> a_\ell\}\right)g_{ij}(a_\ell - y_j).
\end{equation*}

Similarly, if $y_i \in (a_\ell, a_{\ell+1})$, the utility function can be expressed as:
\begingroup
\allowdisplaybreaks
\begin{align*}
    U_i(y_i, \mathbf{y}_{-i})&=\alpha_iy_i - \frac{y_i^2}{2} -\dfrac{1}{2}\sum_{j \ne i} \left(\beta^l\mathbbm{1}\{y_j\leq a_{\ell}\} + \beta^h\mathbbm{1}\{y_j > a_{\ell}\}\right)g_{ij}(y_i - y_j)^2,\\
    \begin{split}
    U_i(y_i, \mathbf{y}_{-i})&=-\dfrac{1}{2}\sum_{j \ne i} \left(\beta^l\mathbbm{1}\{y_j< a_{\ell}\} + \beta^h\mathbbm{1}\{y_j> a_\ell\}\right)g_{ij}(y_i - y_j)^2 - \\
    &\quad\quad \dfrac{\beta^l}{2}n_i^{\ast}(y_i - a_\ell)^2 + \alpha_iy_i - \frac{y_i^2}{2}.
    \end{split}\end{align*}
\endgroup
Thus the limit of $\frac{\partial U_i(y_i, \mathbf{y}_{-i})}{\partial y_i}$ as $y_i$ approaches $a_\ell$ from the right is:
\begin{equation*}
    \lim_{y_i \underset{>}{\to} a_\ell} \frac{\partial U_i(y_i, \mathbf{y}_{-i})}{\partial y_i}=\alpha_i-a_\ell-\sum_{j \ne i} \left(\beta^l\mathbbm{1}\{y_j< a_{\ell}\} + \beta^h\mathbbm{1}\{y_j> a_\ell\}\right)g_{ij}(a_\ell - y_j).
\end{equation*}

Since the left-hand and right-hand limits of $\frac{\partial U_i(y_i, \mathbf{y}_{-i})}{\partial y_i}$ at $y_i = a_\ell$ are equal, it follows that $U_i(y_i, \mathbf{y}_{-i})$ is continuously differentiable at every $y_i \in \mathcal{A}_i$. Therefore, $U_i(y_i, \mathbf{y}_{-i})$ is continuously differentiable in $y_i$ over $\mathbb{R}$.
\end{proof}

\begin{lemma}\label{lemma:concave}
Under Assumption \eqref{assumption:minbeta}, the utility function \eqref{eq:utility} is strictly concave in $y_i$.
\end{lemma}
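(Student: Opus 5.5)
We show that $U_i(\cdot,\mathbf y_{-i})$ is strictly concave by showing that its derivative, which exists and is continuous by Lemma~\ref{lemma:differentiable}, is strictly decreasing in $y_i$. We use the same partition of the real line induced by $\mathcal A_i$, namely the intervals $(a_{\ell-1},a_\ell)$ for $\ell=2,\dots,q$.

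\textbf{Step 1: the derivative is strictly decreasing on each open interval.} On a fixed interval $(a_{\ell-1},a_\ell)$ the peer statuses are constant. From \eqref{Eq:Utility2}, $U_i$ is therefore a quadratic in $y_i$, with second derivative
\begin{equation*}
\frac{\partial^2 U_i}{\partial y_i^2} = -\bigl(1+\beta^l g_i^l+\beta^h g_i^h\bigr),
\end{equation*}
where $g_i^l+g_i^h=1$ for a non-isolated $i$ (and $g_i^l=g_i^h=0$ for an isolated $i$). The bracket is an affine combination of $1+\beta^l$ and $1+\beta^h$ with nonnegative weights summing to one, so it is at least
\begin{equation*}
\min\{1+\beta^l,\,1+\beta^h\}.
\end{equation*}
Assumption~\ref{assumption:minbeta} gives $\beta^l,\beta^h>-\tfrac12$, so this minimum exceeds $\tfrac12>0$. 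Hence the second derivative is bounded above by a negative constant on every cell. (Only $\beta>-1$ is actually needed here; the stronger bound in the assumption serves the later contraction argument.) For isolated individuals $U_i=\alpha_i y_i-y_i^2/2$, which is trivially strictly concave.

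\textbf{Step 2: glue across breakpoints.} Take $y<y'$ in $\mathbb R$. The finitely many finite breakpoints $a_\ell$ lying between them split $[y,y']$ into subintervals. On each subinterval, Step 1 together with continuity of $U_i'$ at the endpoints (Lemma~\ref{lemma:differentiable}) makes $U_i'$ strictly decreasing. Chaining the subintervals gives $U_i'(y')<U_i'(y)$.

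A function that is differentiable with strictly decreasing derivative on $\mathbb R$ is strictly concave. One can see this from the mean value theorem, or by writing $U_i(y')-U_i(y)=\int_y^{y'}U_i'$, since $U_i'$ is piecewise $C^1$ and therefore absolutely continuous.

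The main obstacle is Step 2. At the breakpoints the second derivative jumps and is undefined, so a naive "negative second derivative everywhere" argument fails. Continuity of the first derivative across the breakpoints, from Lemma~\ref{lemma:differentiable}, is exactly what closes this gap.
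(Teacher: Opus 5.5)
Your proof is correct and follows essentially the same route as the paper: the second derivative is negative on each interval where peer statuses are constant (bounded above by $-\min\{1+\beta^l,1+\beta^h\}$, which is the paper's two-case argument), and continuity of $U_i'$ from Lemma~\ref{lemma:differentiable} then makes $U_i'$ strictly decreasing on all of $\mathbb R$. Your explicit gluing across breakpoints and your remark that $\beta^l,\beta^h>-1$ already suffice for concavity are accurate refinements of steps the paper states more tersely.
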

\begin{proof}
For any interval $(a_\ell, a_{\ell+1})$ and any $y_i \in (a_{\ell}, a_{\ell+1})$, we have $\mathbbm{1}\{y_j \leq y_i\} = \mathbbm{1}\{y_j \leq a_\ell\}$ and $\mathbbm{1}\{y_j > y_i\} = \mathbbm{1}\{y_j > a_\ell\}$. Substituting these equalities into the expression of $U_i(y_i, \mathbf y_{-i})$ given in Equation~\eqref{Eq:Utility2} and differentiating with respect to $y_i$ yields:
\begin{equation}\label{eq:derivative}
    \dfrac{\partial U_i(y_i, \mathbf y_{-i})}{\partial y_i} = \alpha_i - y_i - \textstyle\sum_{j \ne i} \left(\beta^l\mathbbm{1}\{y_j \leq a_{\ell}\} + \beta^h\mathbbm{1}\{y_j > a_\ell\}\right)g_{ij}(y_i - y_j).
\end{equation}
One can notice that $\frac{\partial U_i(y_i, \mathbf{y}_{-i})}{\partial y_i}$ is linear in $y_i$ on each $(a_\ell, a_{\ell+1})$. However, its slope is likely to differ across the intervals $(a_\ell, a_{\ell+1})$. Thus, $U_i(y_i, \mathbf{y}_{-i})$ is \textit{not} twice differentiable at $y_i \in \mathcal{A}_i$. Nonetheless, since $\frac{\partial U_i(y_i, \mathbf{y}_{-i})}{\partial y_i}$ is continuous (Lemma~\ref{lemma:differentiable}), it suffices to show that $\frac{\partial^2 U_i(y_i, \mathbf{y}_{-i})}{\partial y_i^2} < 0$ for all $y_i \in (a_{\ell}, a_{\ell+1})$. This would imply that $\frac{\partial U_i(y_i, \mathbf{y}_{-i})}{\partial y_i}$ is strictly decreasing on each $(a_\ell, a_{\ell+1})$. By continuity, it is then strictly decreasing for all $y_i \in \mathbb{R}$, implying that $U_i(y_i, \mathbf{y}_{-i})$ is strictly concave in $y_i$. 

For any finite $y_i \in (a_{\ell}, a_{\ell+1})$, the second derivative is given by:
\begin{align*}
    \dfrac{\partial^2 U_i(y_i, \mathbf y_{-i})}{\partial y_i^2} = -1 - \sum_{j \ne i} \left(\beta^l\mathbbm{1}\{y_j \leq a_{\ell}\} + \beta^h\mathbbm{1}\{y_j > a_\ell\}\right)g_{ij}.
\end{align*}
If $\beta^h \geq \beta^l$, then $\beta^l\mathbbm{1}\{y_j \leq a_{\ell}\} + \beta^h\mathbbm{1}\{y_j > a_\ell\} \geq \beta^l$ because $\mathbbm{1}\{y_j \leq a_\ell\} + \mathbbm{1}\{y_j > a_\ell\} = 1$. Thus,
$\frac{\partial^2 U_i(y_i, \mathbf y_{-i})}{\partial y_i^2} \leq -1 - \beta^l \sum_{j \ne i} g_{ij}$. Similarly, if $\beta^h \leq \beta^l$, then
$\frac{\partial^2 U_i(y_i, \mathbf y_{-i})}{\partial y_i^2} \leq -1 - \beta^h \sum_{j \ne i} g_{ij}$. Since $\beta^l > -\frac{1}{2}$ and $\beta^h > -\frac{1}{2}$ by Assumption~\ref{assumption:minbeta}, and $\sum_{j \ne i} g_{ij} = 1$, it follows that $\frac{\partial^2 U_i(y_i, \mathbf y_{-i})}{\partial y_i^2} < 0$. As a result, $U_i(y_i, \mathbf y_{-i})$ is strictly concave in $y_i$.
\end{proof}

\begin{lemma}\label{lemma:BRFcontinuous}
Let $\displaystyle \bar{y}_i^l = \sum_{j \ne i} \mathbbm{1}\{y_j \le y_i\} g_{ij} y_j$, 
$\displaystyle \bar{y}_i^h = \sum_{j \ne i} \mathbbm{1}\{y_j > y_i\} g_{ij} y_j$, 
$\displaystyle g_i^l = \sum_{j \ne i} \mathbbm{1}\{y_j \le y_i\} g_{ij}$, 
and 
$\displaystyle g_i^h = \sum_{j \ne i} \mathbbm{1}\{y_j > y_i\} g_{ij}$. The following results hold under Assumption~\ref{assumption:minbeta}. \\
\begin{inparaenum}[(i)]
\item The best-response function of $i$ is $b_i(\mathbf y_{-i}) = y_i$, where $y_i$ is implicitly defined by:
$$
y_i = \frac{\alpha_i + \beta^l \bar{y}_i^l + \beta^h \bar{y}_i^h}{1 + \beta^l g_i^l + \beta^h g_i^h}.
$$
\label{lemma:BRFcontinuous:i}
\item For any $j \ne i$, $b_i(\mathbf y_{-i})$ is continuous in $y_j \in \mathbb{R}$. \label{lemma:BRFcontinuous:ii}
\end{inparaenum}
\end{lemma}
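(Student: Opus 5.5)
Part (i) follows from Lemma~\ref{lemma:differentiable} and Lemma~\ref{lemma:concave}. Since $U_i(\cdot,\mathbf y_{-i})$ is continuously differentiable and strictly concave, it has at most one maximizer, and the maximizer is characterized by the first-order condition $\partial U_i/\partial y_i = 0$. To show existence, I will prove coercivity. For $y_i$ below $a_2$, every peer is high-performing, so the derivative equals $\alpha_i - y_i - \beta^h\sum_j g_{ij}(y_i-y_j)$, which is affine with slope $-(1+\beta^h)<0$. It therefore tends to $+\infty$ as $y_i\to-\infty$. Symmetrically, for $y_i > a_{q-1}$ the slope is $-(1+\beta^l)<0$, so the derivative tends to $-\infty$ as $y_i \to +\infty$. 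The derivative is continuous (Lemma~\ref{lemma:differentiable}) and strictly decreasing (shown in the proof of Lemma~\ref{lemma:concave}), so the intermediate value theorem gives a unique root, which is the unique maximizer.

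To obtain the displayed formula, I evaluate the first-order condition at the root $y_i$. On an open interval $(a_\ell,a_{\ell+1})$ the condition is \eqref{eq:derivative}, and there the indicators coincide with $\mathbbm 1\{y_j\le y_i\}$ and $\mathbbm 1\{y_j>y_i\}$. If the root lies exactly at some $a_\ell$, I use the common one-sided limit from Lemma~\ref{lemma:differentiable}. In that case the terms with $y_j=a_\ell$ contribute $(a_\ell-y_j)=0$, so assigning them to the low group, as $\mathbbm 1\{y_j\le y_i\}$ does, changes nothing. Setting the derivative to zero gives $\alpha_i - y_i - \beta^l(g_i^l y_i - \bar y_i^l) - \beta^h(g_i^h y_i-\bar y_i^h)=0$, and rearranging yields the implicit equation. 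The denominator is positive because $1+\beta^l g_i^l+\beta^h g_i^h \ge 1+\min(\beta^l,\beta^h) > 1/2$.

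For part (ii), let $F(y_i;\mathbf y_{-i})$ denote the marginal utility $\partial U_i/\partial y_i$. I will write it without reference to the cells as
$$F(y_i;\mathbf y_{-i}) = \alpha_i - y_i - \sum_{j\ne i} g_{ij}\,\phi(y_i-y_j),$$
where $\phi(t)=\beta^l t$ for $t\ge0$ and $\phi(t)=\beta^h t$ for $t<0$. The function $\phi$ is continuous, so $F$ is jointly continuous in $(y_i,\mathbf y_{-i})$. Moreover, $\phi$ is piecewise linear with slopes $\beta^l$ and $\beta^h$, both greater than $-1/2$. Since $\sum_j g_{ij}=1$, this gives, for $y_i>y_i'$,
$$F(y_i)-F(y_i') \le -\bigl(1+\min(\beta^l,\beta^h)\bigr)(y_i-y_i') < -\tfrac12 (y_i-y_i').$$

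Continuity of $b_i$ then follows by a standard argument. Fix $\epsilon>0$ and let $b=b_i(\mathbf y_{-i})$. Strict monotonicity gives $F(b+\epsilon;\mathbf y_{-i})<0<F(b-\epsilon;\mathbf y_{-i})$. By joint continuity, both strict inequalities persist when $y_j$ is perturbed slightly, so the new root lies in $(b-\epsilon,b+\epsilon)$. Alternatively, $\phi$ is $\max(|\beta^l|,|\beta^h|)$-Lipschitz, so $F$ is Lipschitz in $y_j$ with constant $g_{ij}\max(|\beta^l|,|\beta^h|)$. Combined with the strong monotonicity bound above, this gives an explicit Lipschitz constant for $b_i$.

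The main obstacle is the non-smoothness at ties ($y_i=y_j$), where the status of a peer switches. Rewriting the social cost with the continuous kink function $\phi$ handles this uniformly: it avoids tracking the cells $(a_\ell,a_{\ell+1})$, whose boundaries move as $y_j$ varies.
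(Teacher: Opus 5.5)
Your proof is correct and follows essentially the same route as the paper. Part (i) goes through the first-order condition of a continuously differentiable, strictly concave utility, with ties $y_j = y_i$ harmless because they contribute a zero term. Part (ii) uses continuity of the marginal utility in $y_j$ together with strict monotonicity in $y_i$, and your kink function $\phi$ is a clean packaging of the paper's observation that the jumping coefficient multiplies the vanishing factor $y_i - y_j$. Your tail-slope argument for existence of the maximizer, and your explicit $\epsilon$ (or Lipschitz) step, make precise two points the paper only asserts; in particular, strict concavity alone does not guarantee that a maximizer exists.
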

\begin{proof}
For any $y_i \in \mathbb R$, there exists $a_{\ell}, ~a_{\ell + 1} \in \mathcal{A}_i$ such that $y_i \in (a_\ell, ~ a_{\ell + 1})$ and $\frac{\partial U_i(y_i, \mathbf y_{-i})}{\partial y_i}$ is given by Equation \eqref{eq:derivative}. Since $y_i \in (a_\ell, ~ a_{\ell + 1})$ implies that $\mathbbm{1}\{y_j \leq a_{\ell}\} = \mathbbm{1}\{y_j \leq y_i\}$ and that $\mathbbm{1}\{y_j > a_\ell\} = \mathbbm{1}\{y_j > y_i\}$, we can also write $\frac{\partial U_i(y_i, \mathbf y_{-i})}{\partial y_i}$ as:
\begin{equation}\label{eq:derivative:withouta}
    \dfrac{\partial U_i(y_i, \mathbf y_{-i})}{\partial y_i} = \alpha_i- y_i - \textstyle\sum_{j \ne i} \left(\beta^l\mathbbm{1}\{y_j \leq y_i\} + \beta^h\mathbbm{1}\{y_j > y_i\}\right)g_{ij}(y_i - y_j).
\end{equation}

By continuity of $\frac{\partial U_i(y_i, \mathbf y_{-i})}{\partial y_i}$, this expression also holds on any closed (or half-closed) interval $[a_\ell, a_{\ell+1}]$, provided that both $a_\ell$ and $a_{\ell+1}$ are finite. Since the utility function is continuously differentiable and strictly concave, it admits a unique maximizer $b_i(\mathbf y_{-i}) = y_i$, which satisfies the first-order condition $\frac{\partial U_i(y_i, \mathbf y_{-i})}{\partial y_i} = 0$. From Equation \eqref{eq:derivative:withouta}, $y_i$ is implicitly given by:
$$y_i = \frac{\alpha_i + \beta^l \bar{y}_i^l + \beta^h \bar{y}_i^h}{1 + \beta^l g_i^l + \beta^h g_i^h}.
$$

Furthermore, given that $\frac{\partial U_i(y_i, \mathbf y_{-i})}{\partial y_i}$ is continuous and strictly decreasing in $y_i$ (Lemmas~\ref{lemma:differentiable} and \ref{lemma:concave} ), continuity of the best response follows once we show that $\frac{\partial U_i(y_i, \mathbf y_{-i})}{\partial y_i}$ is continuous in $y_j$, for any $j\ne i$. 

From Equation \eqref{eq:derivative:withouta}, it is clear that $\frac{\partial U_i(y_i, \mathbf{y}_{-i})}{\partial y_i}$ is continuous in $y_j$ for all $y_j \ne y_i$. To ensure continuity at $y_j = y_i$, we verify that the left and right limits of $\frac{\partial U_i(y_i, \mathbf{y}_{-i})}{\partial y_i}$ as $y_j$ approaches $y_i$ are equal. This holds because, although $\beta^l \mathbbm{1}\{y_j \leq y_i\} + \beta^h \mathbbm{1}\{y_j > y_i\}$ may be discontinuous at $y_j = y_i$, it is multiplied by $(y_i - y_j)$, which tends to zero. Therefore, the term
$\left( \beta^l \mathbbm{1}\{y_j \leq y_i\} + \beta^h \mathbbm{1}\{y_j > y_i\} \right) g_{ij} (y_i - y_j)$
tends to zero as $y_j$ approaches $y_i$ from either side. As a result, $\frac{\partial U_i(y_i, \mathbf{y}_{-i})}{\partial y_i}$ is continuous in $y_j$. This completes the proof of the lemma.\end{proof}

\subsubsection{Proof of Proposition \eqref{propo:equilibrium}} \label{proof:uniqueness:theo}
We employ the contraction mapping theorem. For any two strategy profiles $\mathbf{y} = (y_1, \dots, y_n)^{\prime}$ and $\tilde{\mathbf{y}} = (\tilde{y}_1, \dots, \tilde{y}_n)^{\prime}$ in $\mathbb{R}^n$, we show that $\lvert b_i(\mathbf{y}_{-i}) - b_i(\tilde{\mathbf{y}}_{-i})\rvert \leq \mu \lVert \mathbf{y} - \tilde{\mathbf{y}} \rVert_{\infty}$ where $\mu < 1$.\footnote{For any $\mathbf a = (a_1, ~\dots, ~ a_n)^{\prime} \in \mathbb R^n$, the infinity norm of $\mathbf a$ is defined as $\displaystyle \lVert \mathbf a \rVert_{\infty} = \max_{i} \lvert a_i \rvert$.} However, the best-response function $b_i(\mathbf{y}_{-i})$ is not always differentiable in $y_j$ for a friend $j$. Let
$$\mathcal{D} = \left\{\mathbf{y}_{-i}\in \mathbb{R}^{n-1}:\, b_i(\mathbf{y}_{-i})\text{ is not differentiable in } \mathbf{y}_{-i}\right\}.$$ Note that $b_i(\mathbf{y}_{-i})$ is not differentiable in $y_j$ only when $y_j = y_i$.\footnote{This is because of the term $\beta^l \mathbbm{1}\{y_j \leq y_i\} + \beta^h \mathbbm{1}\{y_j > y_i\}$ in the denominator of Equation~\eqref{eq:yi1}.} Since $y_i = b_i(\mathbf{y}_{-i})$, for any $\mathbf{y}_{-i} \in \mathcal{D}$ there exists a friend $j$ such that $b_i(\mathbf{y}_{-i}) = $ $y_j$.

\bigskip
\paragraph{Step 1: {\normalfont Characterization of the subset \texorpdfstring{$\mathcal{D}$}{TEXT}}}
\medskip

\noindent Since each friend’s status can be either a low or high performer, there are $2^{n_i}$ possible combinations of statuses. For a given combination, we have either $b_i(\mathbf{y}_{-i}) \leq y_j$ or $b_i(\mathbf{y}_{-i}) > y_j$, depending on whether $j$ is classified as a low or high performer, respectively. Given that $\mathbbm{1}\{y_j \leq y_i\}$ remains constant for each combination, the implicit expression of $b_i(\mathbf{y}_{-i})$ in Lemma \ref{lemma:BRFcontinuous} implies that the conditions $b_i(\mathbf{y}_{-i}) \leq y_j$ and $b_i(\mathbf{y}_{-i}) > y_j$ define linear inequalities in $\mathbf{y}_{-i}$. Taking the inequalities for all friends $j$ yields a system of linear inequalities in $\mathbf{y}_{-i}$. Consequently, each combination forms a polyhedron, which we refer to as a \textit{cell}. However, because these cells are defined in an $n_i$-dimensional space, at least $n_i + 1$ inequalities are required to form a bounded region. Since we only have $n_i$ such inequalities, each combination produces an unbounded cell. 

Considering all possible combinations, we can partition the space of $\mathbf{y}_{-i}$ into $2^{n_i}$ cells for each $i$. Every cell is delimited by $n_i$ facets, which are half-hyperplanes characterized by the equations $b_i(\mathbf{y}_{-i}) = y_j$ for each of the $n_i$ friends $j$. Within a cell, the status of every friend remains unchanged.

We provide an illustration in Figure~\ref{fig_equil} for an individual $i_1$ in a network of three individuals, where $i_2$ and $i_3$ are both friends of $i_1$. Since $n_{i_1} = 2$, the space is partitioned into four cells, which are angular sectors delimited by two facets. Each facet is a half-line characterized by the equation $b_{i_1}(\mathbf{y}_{-i_1}) = y_{i_2}$ or $b_{i_1}(\mathbf{y}_{-i_1}) = y_{i_3}$.

\bigskip
\begin{figure}[!ht]
	\centering
	\footnotesize
	\begin{tikzpicture}[scale=1]
        \begin{axis}[
            axis x line=bottom,    % put x-axis at the bottom
            axis y line=left,      % put y-axis at the left
            xlabel={$y_{i_2}$}, ylabel={$y_{i_3}$},
            xtick distance=10,      % keep x ticks 
            ytick distance=10,      % keep x ticks 
            ymin=-20, ymax=20,
            samples=2,
            width=12cm, height=8cm,
        ]
        \node[right, align=left, black] at (axis cs:-18,10) {\footnotesize Cell 1: $i_2$ is a lower \\\footnotesize   and $i_3$ is a higher};
        \node[right, align=left, black] at (axis cs:4,10) {\footnotesize Cell 2: $i_2$ and $i_3$ \\\footnotesize  are higher};
        \node[right, align=left, black] at (axis cs:-18,-14) {\footnotesize Cell 3: $i_2$ and $i_3$ \\\footnotesize  are lower};
        \node[right, align=left, black] at (axis cs:4,-14) {\footnotesize Cell 4: $i_2$ is a higher \\\footnotesize and $i_3$ is a lower};
            % Parameters
            \pgfmathsetmacro{\a}{1}       % alpha
            \pgfmathsetmacro{\bl}{2.1}    % betal
            \pgfmathsetmacro{\bh}{0.4}    % betah (unused here)
            \pgfmathsetmacro{\xmin}{-20}  %xmin
            \pgfmathsetmacro{\xmax}{15}   %xmax
            \pgfmathsetmacro{\pxo}{-10}   % x for the point w0
            \pgfmathsetmacro{\pyo}{3}     % y for the point w0
            \pgfmathsetmacro{\pxt}{1.5}   % x for the point w3
            \pgfmathsetmacro{\pyt}{-6}    % y for the point w3
            % Compute slope and intercept for line through w0 w3
            \pgfmathsetmacro{\aww}{(\pyo - \pyt) / (\pxo - \pxt)}             
            \pgfmathsetmacro{\bww}{(\pyo - \aww * \pxo)} 
            % Compute w1 = (w0x,woy) 
            \pgfmathsetmacro{\wox}{(\bww*(2 + \bl) - 2*\a) / (\bl - \aww*(2 + \bl))}
            \pgfmathsetmacro{\woy}{\aww * \wox + \bww}
             % Compute w2 = (w2x,w2y) 
            \pgfmathsetmacro{\wtx}{(2*\a + \bl * \bww)/(2 + \bl*(1 - \aww))}
            \pgfmathsetmacro{\wty}{(\aww * \wtx + \bww)}
            % yi3 < yi2 and yi1 = yi3
            \addplot[blue, line width=0.6pt, domain=\a:\xmax] {(2*\a + \bh*x)/(2+\bh)}
            node[pos=1, sloped, below left] {$b_{i_1}(\mathbf{y}_{-i_1}) = y_{i_3}$};
            % yi3 < yi2 and yi1 = yi2
            \addplot[red, line width=0.6pt, domain=\xmin:\a] {(-2*\a + (2 + \bl)*x)/(\bl)};
            % yi3 > yi2 and yi1 = yi3
            \addplot[blue, line width=0.6pt, domain=\xmin:\a] {(2*\a + \bl*x)/(2 + \bl)};
            % yi3 > yi2 and yi1 = yi2
            \addplot[red, line width=0.6pt, domain=\a:\xmax] {(-2*\a + (2 + \bh)*x)/\bh}
            node[pos=.22, sloped, above left] {$b_{i_1}(\mathbf{y}_{-i_1}) = y_{i_2}$};
            % line from y to tilde y
            \draw[thick, black!20, dashed] (axis cs:\pxo, \pyo) -- (axis cs:\pxt, \pyt);
            \node[left, align=left, gray] at (axis cs:\pxo, \pyo){\footnotesize $(y_{i_2}, y_{i_3})$};
            \node[right, align=right, gray] at (axis cs:\pxt, \pyt){\footnotesize $(\tilde y_{i_2}, \tilde y_{i_3})$};
            \node[left, gray] at (axis cs:\wox - 0.1,\woy){\footnotesize $\boldsymbol{\omega}^{(1)}$};
            \node[right, gray] at (axis cs:\wtx,\wty + 1){\footnotesize $\boldsymbol{\omega}^{(2)}$};
            \addplot[
                thick, black!60,
                mark=x,
                mark options={scale=1},
                only marks=true
            ] coordinates {
                (\pxo, \pyo)
                (\pxt, \pyt)
                (\wox, \woy)
                (\wtx, \wty)
            };
        \end{axis}

    \end{tikzpicture}
	\caption{Statuses of $i_2$ and $i_3$ and the space of their outcome}
	\label{fig_equil}
    \justifying
    \noindent \footnotesize{Note: this figure illustrates the statuses of friends $i_2$ and $i_3$ in the space of $(y_{i_2}, y_{i_3})$. We set $\alpha_{i_1} = 1$, $\beta^l = 2.1$ and $\beta^h = 0.4$.}
\end{figure}

Since $b_i(\mathbf{y}_{-i})$ is not differentiable if and only if $b_i(\mathbf{y}_{-i}) = y_j$, it follows that the subset $\mathcal{D}$ is the union of the facets that separate the cells. When $n_i > 1$, $\mathcal{D}$ is unbounded as it includes at least one half-hyperplane or half-line. Therefore, $b_i(\mathbf{y}_{-i})$ is not differentiable at infinitely many points.

\paragraph{Step 2: {\normalfont We show that 
\texorpdfstring{$\lvert b_i(\mathbf{y}_{-i}) - b_i(\tilde{\mathbf{y}}_{-i})\rvert \leq \mu \lVert \mathbf{y} - \tilde{\mathbf{y}} \rVert_{\infty}$}{TEXT}  if \texorpdfstring{$\mathbf{y}_{-i}$ and $\tilde{\mathbf{y}}_{-i}$}{TEXT} belong to the same cell, for some constant $\mu \in (0,1)$.}}
\medskip

From Lemma \ref{lemma:BRFcontinuous}, we have 
$$b_i(\mathbf{y}_{-i}) = \dfrac{\alpha_i + \beta^l \sum_{j \ne i} \mathbbm{1}\{y_j \le y_i\} g_{ij} y_j + \beta^h \sum_{j \ne i} \mathbbm{1}\{y_j > y_i\} g_{ij} y_j}{1 + \beta^l \sum_{j \ne i} \mathbbm{1}\{y_j \le y_i\} g_{ij} + \beta^h \sum_{j \ne i} \mathbbm{1}\{y_j > y_i\} g_{ij}},$$
where $y_i = b_i(\mathbf{y}_{-i})$. If $\mathbf{y}_{-i}$ and $\tilde{\mathbf{y}}_{-i}$ are from the same cell, then $\mathbbm{1}\{y_j \le y_i\} g_{ij} = \mathbbm{1}\{\tilde y_j \le \tilde y_i\} g_{ij}$ for all $j \ne i$. Consequently,
\begingroup
\allowdisplaybreaks
\begin{align*}
    b_i(\mathbf{y}_{-i}) - b_i(\tilde{\mathbf{y}}_{-i}) &= \dfrac{\beta^l \sum_{j \ne i} \mathbbm{1}\{y_j \le y_i\} g_{ij} (y_j - \tilde y_j) + \beta^h \sum_{j \ne i} \mathbbm{1}\{y_j > y_i\} g_{ij} (y_j - \tilde y_j)}{1 + \beta^l \sum_{j \ne i} \mathbbm{1}\{y_j \le y_i\} g_{ij} + \beta^h \sum_{j \ne i} \mathbbm{1}\{y_j > y_i\} g_{ij}},\\
    \lvert b_i(\mathbf{y}_{-i}) - b_i(\tilde{\mathbf{y}}_{-i}) \rvert &\leq \dfrac{\lvert \beta^l \rvert \sum_{j \ne i} \mathbbm{1}\{y_j \le y_i\} g_{ij} + \lvert \beta^h \rvert \sum_{j \ne i} \mathbbm{1}\{y_j > y_i\} g_{ij}}{\lvert 1 + \beta^l \sum_{j \ne i} \mathbbm{1}\{y_j \le y_i\} g_{ij} + \beta^h \sum_{j \ne i} \mathbbm{1}\{y_j > y_i\} g_{ij} \rvert} \max_{j\ne i} \lvert y_j - \tilde y_j \rvert,\\
    \lvert b_i(\mathbf{y}_{-i}) - b_i(\tilde{\mathbf{y}}_{-i}) \rvert &\leq \dfrac{\lvert \beta^l \rvert g_i^l + \lvert \beta^h \rvert g_i^h}{\lvert 1 + \beta^l g_i^l + \beta^h g_i^h \rvert} \lVert \mathbf{y} - \tilde{\mathbf{y}} \rVert_{\infty} = \dfrac{(\lvert \beta^h \rvert - \lvert \beta^l \rvert) g_i^h + \lvert \beta^l \rvert}{\lvert 1 + (\beta^h  - \beta^l) g_i^h + \beta^l \rvert} \lVert \mathbf{y} - \tilde{\mathbf{y}} \rVert_{\infty}.
\end{align*}
\endgroup

The last equality holds because $g_i^l + g_i^h  = 1$. As $\beta^l > -\frac{1}{2}$ and $\beta^h > -\frac{1}{2}$, then $1 + \beta^l g_i^l + \beta^h g_i^h$ is positive. Moreover, $\frac{(\lvert \beta^h \rvert - \lvert \beta^l \rvert) g_i^h + \lvert \beta^l \rvert}{ 1 + (\beta^h  - \beta^l) g_i^h + \beta^l }$ is monotone in $g_i^h$ and is thus bounded above by either $\frac{\lvert \beta^h \rvert}{1 + \beta^h}$ or $\frac{\lvert \beta^l \rvert}{1 + \beta^l}$. As a result, for $\mu = \displaystyle\max_{k \in \{l,h\}} \textstyle\frac{\lvert \beta^k \rvert}{1 + \beta^k}$, we have $\lvert b_i(\mathbf{y}_{-i}) - b_i(\tilde{\mathbf{y}}_{-i}) \rvert \leq \mu \lVert \mathbf{y} - \tilde{\mathbf{y}} \rVert_{\infty}$, where $\mu < 1$ because $\beta^l > -\frac{1}{2}$ and $\beta^h > -\frac{1}{2}$.

Importantly, the condition $\lvert b_i(\mathbf{y}_{-i}) - b_i(\tilde{\mathbf{y}}_{-i}) \rvert \leq \mu \lVert \mathbf{y} - \tilde{\mathbf{y}} \rVert_{\infty}$ holds even if $\mathbf{y}_{-i}$ and $\tilde{\mathbf{y}}_{-i}$ belong to facets delimiting a cell (e.g., $\boldsymbol{\omega}^{(1)}$ and $\boldsymbol{\omega}^{(2)}$ in Figure~\ref{fig_equil}). Indeed, such  $\mathbf{y}_{-i}$ and $\tilde{\mathbf{y}}_{-i}$ can be approached as limits of sequences in the interior of the cell. As $b_i$ is continuous (Lemma \ref{lemma:BRFcontinuous}), the condition holds for the sequences and their limits.

\bigskip
\paragraph{Step 3: {\normalfont We show that \texorpdfstring{$\lvert b_i(\mathbf{y}_{-i}) - b_i(\tilde{\mathbf{y}}_{-i})\rvert \leq \mu \lVert \mathbf{y} - \tilde{\mathbf{y}} \rVert_{\infty}$}{TEXT} for any \texorpdfstring{$\mathbf{y}_{-i}$ and $\tilde{\mathbf{y}}_{-i}$}{TEXT}.}}
\medskip

Given the result of Step 2, it is sufficient to show that $\lvert b_i(\mathbf{y}_{-i}) - b_i(\tilde{\mathbf{y}}_{-i})\rvert \leq \mu \lVert \mathbf{y} - \tilde{\mathbf{y}} \rVert_{\infty}$ when $\mathbf{y}_{-i}$ and $\tilde{\mathbf{y}}_{-i}$ are from different cells. The challenge here is that $\mathbbm{1}\{y_j \le y_i\} g_{ij}$ may differ from $\mathbbm{1}\{\tilde y_j \le \tilde y_i\} g_{ij}$ and, therefore, $b_i(\mathbf{y}_{-i}) - b_i(\tilde{\mathbf{y}}_{-i})$ cannot be easily simplified.

For any $\mathbf{y}_{-i}$ and $\tilde{\mathbf{y}}_{-i}$ from different cells, let $\boldsymbol{w}^{(1)}, \dots, \boldsymbol{w}^{(R)}$ denote the intersections of the segment $[\mathbf{y}_{-i},\tilde{\mathbf{y}}_{-i}]$ with cell facets, where $R$ is the number of intersections. Note that each segment $[\boldsymbol{w}^{(r)}, \boldsymbol{w}^{(r+1)}]$, for $r \leq R$, belongs to a single cell because $\boldsymbol{w}^{(r)}$ and $\boldsymbol{w}^{(r+1)}$ are consecutive intersections and thus lie on facets delimiting the same cell (see Figure~\ref{fig_equil}). Define $\boldsymbol{w}^{(0)} = \mathbf{y}_{-i}$ and $\boldsymbol{w}^{(R+1)} = \tilde{\mathbf{y}}_{-i}$. We use the telescoping decomposition
$b_i(\mathbf{y}_{-i}) - b_i(\tilde{\mathbf{y}}_{-i}) = \sum_{r = 0}^{R} \big(b_i(\boldsymbol{w}^{(r)}) - b_i(\boldsymbol{w}^{(r + 1)})\big)$,
which implies that:
\begingroup
\allowdisplaybreaks
\begin{align}
    \lvert b_i(\mathbf{y}_{-i}) - b_i(\tilde{\mathbf{y}}_{-i}) \rvert &\leq \sum_{r = 0}^{R} \lvert b_i(\boldsymbol{w}^{(r)}) - b_i(\boldsymbol{w}^{(r + 1)})\rvert \nonumber,\\
    \lvert b_i(\mathbf{y}_{-i}) - b_i(\tilde{\mathbf{y}}_{-i}) \rvert &\leq \mu\sum_{r = 0}^{R}  \lVert \boldsymbol{w}^{(r)} - \boldsymbol{w}^{(r+1)}\rVert_{\infty}.\label{eq:telescoping}
\end{align}
\endgroup
The first inequality holds by the triangular inequality, while the second inequality holds by Step~2 because $\boldsymbol{w}^{(r)}$ and $\boldsymbol{w}^{(r+1)}$ belong to the same cell (including facets).

Furthermore, since all $\boldsymbol{w}^{(r)}$ lie on the segment $[\mathbf{y}_{-i}, \tilde{\mathbf{y}}_{-i}]$, we have $\boldsymbol{w}^{(r)} - \boldsymbol{w}^{(r+1)} = t_r (\mathbf{y}_{-i} - \tilde{\mathbf{y}}_{-i})$, with $t_r > 0$ and $\sum_{r = 0}^{R} t_r = 1$. Substituting this into \eqref{eq:telescoping} yields:
$$
    \lvert b_i(\mathbf{y}_{-i}) - b_i(\tilde{\mathbf{y}}_{-i}) \rvert \leq \mu \lVert \mathbf{y}_{-i} - \tilde{\mathbf{y}}_{-i} \rVert_{\infty} \sum_{r = 0}^{R} t_r \leq \mu \lVert \mathbf{y} - \tilde{\mathbf{y}} \rVert_{\infty}.
$$
As a result, the mapping $\mathbf{b}$ is a contraction, and the Nash equilibrium is unique.

\subsection{Proof of Proposition \ref{propo:socialmultiplier}} \label{append:proof:socialmultiplier}
The equilibrium outcome of individual $i$ is given by Equation~\eqref{eq:yi1}:
\begin{equation*}
   y_i = \dfrac{\alpha_i + \beta^l \bar{y}_i^l + \beta^h \bar{y}_i^h}{1 + \beta^l g_i^l + \beta^h g_i^h}.
\end{equation*}
Assume a uniform increase in productivity by $\bar \alpha$ such that the new productivity level is $\tilde \alpha_i=\alpha_i + \bar \alpha$ for all $i$. This increase likely affects individuals' best responses and the equilibrium. Let $\tilde y_i$, $\tilde{\bar{y}}_i^l$, $\tilde{\bar{y}}_i^h$, $\tilde g_i^l$, and $\tilde g_i^h$ denote the counterparts of $y_i$, $\bar{y}_i^l$, $\bar{y}_i^h$, $g_i^l$, and $g_i^h$ under the new equilibrium. Assume that $\tilde y_i=y_i+ \bar \alpha$ for all $i$, so that $\tilde{\bar{y}}_i^l=\bar{y}_i^l+g_i^l \bar \alpha$, $\tilde{\bar{y}}_i^h=\bar{y}_i^h+g_i^h \bar \alpha$, $\tilde g_i^l=g_i^l$, and $\tilde g_i^h=g_i^h$. It follows that:
\begin{align}
      \dfrac{\tilde \alpha_i + \beta^l \tilde{\bar{y}}_i^l + \beta^h \tilde{\bar{y}}_i^h}{1 + \beta^l \tilde g_i^l + \beta^h \tilde g_i^h}
      &= \dfrac{\alpha_i + \bar \alpha + \beta^l (\bar{y}_i^l + g_i^l \bar \alpha) + \beta^h (\bar{y}_i^h + g_i^h \bar \alpha)}{1 + \beta^l g_i^l + \beta^h g_i^h}\nonumber \\
      &= \dfrac{\alpha_i + \beta^l \bar{y}_i^l + \beta^h \bar{y}_i^h}{1 + \beta^l g_i^l + \beta^h g_i^h}
      + \dfrac{\bar \alpha(1 + \beta^l g_i^l + \beta^h g_i^h)}{1 + \beta^l g_i^l + \beta^h g_i^h} \nonumber\\
      &= y_i + \bar \alpha,\nonumber \\
      &= \tilde y_i.\label{eq:yitilde}
\end{align}
Equation~\eqref{eq:yitilde} is analogous to Equation~\eqref{eq:yi1}, but written for $\tilde \alpha$ and $\tilde y_i$. Because the contraction mapping theorem guarantees that Equation~\eqref{eq:yitilde} has a unique solution, and since the equilibrium associated with the new productivity level must satisfy Equation~\eqref{eq:yitilde}, it follows that $\tilde y_i = y_i + \bar \alpha$ is the outcome of the new equilibrium.

\subsection{Proof of Proposition \ref{prop:ident}} \label{append:proof:Ident}
From Equation~\eqref{eq:yred}, we have:
$$\mathbb E_m (\mathbf y_m) = \tilde c \mathbf 1_{n_m} +  \theta_1\mathbf G_m\mathbb E_m (\mathbf y_m) + \theta_2 \mathbb E_m (\check{\mathbf y}_m) +  \mathbf X_m\boldsymbol\theta_3 + \mathbf G_m \mathbf X_m\boldsymbol\theta_4,$$ which implies
$$(\mathbf I_{n_m} - \theta_1\mathbf G_m)\mathbb E_m (\mathbf y_m) = \tilde c \mathbf 1_{n_m} + \theta_2 \mathbb E_m (\check{\mathbf y}_m) +  (\mathbf I_{n_m} - \theta_1\mathbf G_m)\mathbf X_m\boldsymbol\theta_3 + \mathbf G_m \mathbf X_m(\theta_1\boldsymbol\theta_3 + \boldsymbol\theta_4).$$
Under Assumption \ref{assumption:minbeta}, $\lvert \theta_1 \rvert < 1$, thus $\mathbf I_{n_m} - \theta_1\mathbf G_m$ is invertible. It thus follows that:
\begin{equation}
    \mathbb E_m (\mathbf y_m) = (\mathbf I_{n_m} - \theta_1\mathbf G_m)^{-1}\left(\tilde c \mathbf 1_{n_m} + \theta_2 \mathbb E_m (\check{\mathbf y}_m) +    \mathbf G_m \mathbf X_m(\theta_1\boldsymbol\theta_3 + \boldsymbol\theta_4)\right) + \mathbf X_m\boldsymbol\theta_3.\label{eq:AppendEys}
\end{equation}

We first show that the reduced-form parameter vector
$\boldsymbol{\theta} = (\theta_1, \, \theta_2, \, \tilde c, \, \boldsymbol\theta_3^{\prime}, \, \boldsymbol\theta_4^{\prime})^{\prime}$
is identified. Suppose that two parameter vectors, $\boldsymbol{\theta}$ and
$\dot{\boldsymbol{\theta}} = (\dot\theta_1, \, \dot\theta_2, \, \dot{\tilde c}, \, \dot{\boldsymbol\theta}_3^{\prime}, \, \dot{\boldsymbol\theta}_4^{\prime})^{\prime}$,
generate the same distribution of the data, that is, the same $\mathbb E_m(\mathbf y_m)$ and $\mathbb E_m(\check{\mathbf y}_m)$. Following \citet{rothenberg1971identification}, we establish identification by showing that this implies
$\boldsymbol{\theta} = \dot{\boldsymbol{\theta}}$. 

From Equation~\eqref{eq:AppendEys}, we have:

\begin{align}
\begin{split}
    &(\mathbf I_{n_m} - \theta_1\mathbf G_m)^{-1}\left(\tilde c \mathbf 1_{n_m} + \theta_2 \mathbb E_m (\check{\mathbf y}_m) +  \mathbf G_m \mathbf X_m(\theta_1\boldsymbol\theta_3 + \boldsymbol\theta_4)\right) + \mathbf X_m\boldsymbol\theta_3 = \\
    &\quad (\mathbf I_{n_m} - \dot\theta_1\mathbf G_m)^{-1}(\dot{\tilde c} \mathbf 1_{n_m} + \dot\theta_2 \mathbb E_m (\check{\mathbf y}_m) +    \mathbf G_m \mathbf X_m(\dot\theta_1\dot{\boldsymbol\theta}_3 + \dot{\boldsymbol\theta}_4)) + \mathbf X_m\dot{\boldsymbol\theta}_3
\end{split} \label{eq:observational_equivalence}
\end{align}

By premultiplying each term of Equation \eqref{eq:observational_equivalence} by $(\mathbf I_{n_m} - \theta_1\mathbf G_m)(\mathbf I_{n_m} - \dot\theta_1\mathbf G_m)$, which is also equal to $(\mathbf I_{n_m} - \dot\theta_1\mathbf G_m)(\mathbf I_{n_m} - \theta_1\mathbf G_m)$, we obtain:
\begin{align}
\begin{split}
    &(\mathbf I_{n_m} - \dot\theta_1\mathbf G_m)\big(\tilde c \mathbf 1_{n_m} + \theta_2 \mathbb E_m (\check{\mathbf y}_m) +  \mathbf G_m \mathbf X_m(\theta_1\boldsymbol\theta_3 + \boldsymbol\theta_4) + (\mathbf I_{n_m} - \theta_1\mathbf G_m) \mathbf X_m\boldsymbol\theta_3\big) = \\
    &\quad (\mathbf I_{n_m} - \theta_1\mathbf G_m)\big(\dot{\tilde c} \mathbf 1_{n_m} + \dot\theta_2 \mathbb E_m (\check{\mathbf y}_m) +  \mathbf G_m \mathbf X_m(\dot\theta_1\dot{\boldsymbol\theta}_3 + \dot{\boldsymbol\theta}_4) + (\mathbf I_{n_m} - \dot \theta_1\mathbf G_m)\mathbf X_m\dot{\boldsymbol\theta}_3\big). 
\end{split}\nonumber 
\end{align}
This implies that:
\allowdisplaybreaks
\begin{align}
\begin{split}
    &(\mathbf I_{n_m} - \dot\theta_1\mathbf G_m)\big(\tilde c \mathbf 1_{n_m} + \theta_2 \mathbb E_m (\check{\mathbf y}_m) + \mathbf X_m\boldsymbol\theta_3 +  \mathbf G_m \mathbf X_m\boldsymbol\theta_4 \big) = \\
    &\quad (\mathbf I_{n_m} - \theta_1\mathbf G_m)\big(\dot{\tilde c} \mathbf 1_{n_m} + \dot\theta_2 \mathbb E_m (\check{\mathbf y}_m)  + \mathbf X_m\dot{\boldsymbol\theta}_3 +  \mathbf G_m \mathbf X_m\dot{\boldsymbol\theta}_4\big). 
\end{split}\nonumber 
\end{align}
As we assume that there are no isolated individuals, $\mathbf G_m^k \mathbf 1_{n_m} = \mathbf 1_{n_m}$. We thus obtain:
\begin{align}
\begin{split}
    &(\tilde c -  \dot{\tilde c} - \theta_1\dot{\tilde c} + \dot\theta_1\tilde c)\mathbf 1_{n_m} + (\theta_2 - \dot\theta_2)\mathbb E_m (\check{\mathbf y}_m) + (\theta_1\dot\theta_2 - \dot\theta_1\theta_2)\mathbf G_m \mathbb E_m (\check{\mathbf y}_m) + \\
    &\quad \mathbf X_m(\boldsymbol\theta_3  - \dot{\boldsymbol\theta}_3) +  \mathbf G_m \mathbf X_m(\theta_1 \dot{\boldsymbol\theta}_3 - \dot\theta_1 \boldsymbol\theta_3 +\boldsymbol\theta_4 - \dot{\boldsymbol\theta}_4) + \mathbf G_m^2\mathbf X_m (\theta_1 \dot{\boldsymbol\theta}_4-\dot \theta_1\boldsymbol\theta_4) = \mathbf 0. 
\end{split}\label{eq:linear_combination}
\end{align}

Since $\frac{1}{M}\sum_{m = 1}^M \mathbf A_m^{\prime}\mathbf A_m$ is a full-rank matrix (Assumption \ref{assumption:ident:fullrank}), it follows that all coefficients of the linear combination in Equation \eqref{eq:linear_combination} are equal to zero. 

Setting the coefficients of $\mathbb E_m (\check{\mathbf y}_m)$ and $\mathbf X_m$ to zero implies that $\theta_2 = \dot\theta_2$ and $\boldsymbol\theta_3 = \dot{\boldsymbol\theta}_3$. Since $\boldsymbol\theta_3 = \dot{\boldsymbol\theta}_3$, setting the coefficients of $\mathbf G_m \mathbf X_m$ and $\mathbf G_m^2 \mathbf X_m$ to zero implies:
\begin{align}
    &\theta_1 \boldsymbol\theta_3 +\boldsymbol\theta_4 = \dot\theta_1 \boldsymbol\theta_3 + \dot{\boldsymbol\theta}_4 \quad \text{and} \label{equation:cond1}\\
    &\theta_1 \dot{\boldsymbol\theta}_4 =\dot \theta_1\boldsymbol\theta_4.\label{equation:cond2}
\end{align}

By multiplying Equation \eqref{equation:cond1} by $\theta_1$, we obtain $\theta_1(\theta_1\boldsymbol \theta_3 + \boldsymbol \theta_4) = \theta_1\dot \theta_1\boldsymbol \theta_3 + \theta_1\dot{\boldsymbol \theta_4}$. Substituting $\theta_1\dot{\boldsymbol \theta_4}$ with $\dot \theta_1\boldsymbol \theta_4$ (from Equation \eqref{equation:cond2}) yields $\theta_1(\theta_1\boldsymbol \theta_3 + \boldsymbol \theta_4) = \dot \theta_1(\theta_1\boldsymbol \theta_3 + \boldsymbol \theta_4)$. Since $\theta_1\boldsymbol{\theta}_3 + \boldsymbol{\theta}_4 \ne \mathbf 0$ (Assumption \ref{assumption:ident:nonzero}), it follows that $\theta_1 = \dot \theta_1$. Substituting back into Equation \eqref{equation:cond1} implies that $\boldsymbol\theta_4 = \dot{\boldsymbol\theta}_4$. 

Furthermore, setting the coefficient on $\mathbf 1_{n_m}$ equal to zero implies that $\tilde c = \dot{\tilde c}$. As a result, $\boldsymbol{\theta} = \dot{\boldsymbol{\theta}}$, and the reduced-form parameters are point identified.

The identification of the structural parameters then follows directly. In particular, identification of $\theta_1 = \frac{\beta^l}{1 + \beta^l}$ and $\tilde c = \frac{c}{1 + \beta^l}$ implies that $\beta^l$ and $c$ are identified. The identification of $\theta_2 = \frac{\beta^h - \beta^l}{1 + \beta^l}$ together with $\beta^l$ implies that $\beta^h$ is identified. Finally, identification of $\boldsymbol{\theta}_3 = \frac{\boldsymbol{\gamma}_1}{1 + \beta^l}$ and $\boldsymbol{\theta}_4 = \frac{\boldsymbol{\gamma}_2}{1 + \beta^l}$ implies that $\boldsymbol{\gamma}_1$ and $\boldsymbol{\gamma}_2$ are identified.

\subsection{Proof of Proposition \ref{prop:consistency}}\label{append:prop:consistency}
\noindent Since the model is just identified, the IV estimator can be written as
\allowdisplaybreaks
\begin{align*}
    &\textstyle\hat{\boldsymbol\theta} = \left(\frac{1}{M} \sum_{m = 1}^M \hat{\mathbf Z}^{\prime}_m \mathbf V_m\right)^{-1} \left(\frac{1}{M} \sum_{m = 1}^M \hat{\mathbf Z}_m^{\prime} \mathbf y_m\right), \nonumber\\
    &\textstyle\sqrt{M}(\hat{\boldsymbol\theta} - \boldsymbol\theta_0) = \frac{1}{1 + \beta^l_0} \left(\frac{1}{M} \sum_{m = 1}^M \hat{\mathbf Z}^{\prime}_m \mathbf V_m\right)^{-1} \left(\frac{1}{\sqrt{M}} \sum_{m = 1}^M \hat{\mathbf Z}_m^{\prime} \boldsymbol \varepsilon_m\right),
\end{align*}
where $\beta^l_0$ is the true value of $\beta^l$. The second equation follows from substituting $\mathbf y_m$ with its expression given in Equation~\eqref{eq:yred}, evaluated at the true parameter $\boldsymbol{\theta}_0$.

Assumption~\ref{assumption:iv} implies that $\plim \frac{1}{M} \sum_{m = 1}^M \hat{\mathbf Z}^{\prime}_m \mathbf V_m$ is deterministic, where $\plim$ denotes the probability limit as $M$ goes to infinity. It is thus sufficient to show that $\frac{1}{\sqrt{M}} \sum_{m = 1}^M \hat{\mathbf Z}_m^{\prime} \boldsymbol \varepsilon_m$ converges in distribution to a centered normal distribution. 

However, the central limit theorem does not apply directly to $\frac{1}{\sqrt{M}} \sum_{m = 1}^M \hat{\mathbf Z}_m^{\prime} \boldsymbol \varepsilon_m$ because $\hat{\mathbf Z}_m^{\prime} \boldsymbol \varepsilon_m$ is not independent across $m$. To address this issue, we decompose $\frac{1}{\sqrt{M}} \sum_{m = 1}^M \hat{\mathbf Z}_m^{\prime} \boldsymbol \varepsilon_m$ as follows:
\begin{align}
    \textstyle\frac{1}{\sqrt{M}} \sum_{m = 1}^M \hat{\mathbf Z}_m^{\prime} \boldsymbol \varepsilon_m
    &= \textstyle \frac{1}{\sqrt{M}} \sum_{m = 1}^M \mathbf Z_m^{\prime} \boldsymbol \varepsilon_m 
    + \frac{1}{\sqrt{M}} \sum_{m = 1}^M \left(\hat{\mathbf Z}_m - \mathbf Z_m\right)^{\prime} \boldsymbol \varepsilon_m. \nonumber
\end{align}
The first term is a sum of i.i.d.\ variables scaled by $\sqrt{M}$. Thus, it converges to a centered normal distribution. We show that the second term vanishes asymptotically. 

We randomly split the networks into $L$ folds $\mathcal F_1, \dots, \mathcal F_L$, where $L < \infty$. Define $\mathbf u_l = \frac{1}{\sqrt M} \sum_{m \in \mathcal{F}_l} \left(\hat{\mathbf Z}_m - \mathbf Z_m\right)^{\prime} \boldsymbol \varepsilon_m$, so that $\frac{1}{\sqrt{M}} \sum_{m = 1}^M \left(\hat{\mathbf Z}_m - \mathbf Z_m\right)^{\prime} \boldsymbol \varepsilon_m = \sum_{l = 1}^L \mathbf u_l$. The use of cross-fitting ensures that, for any $m, m^{\prime} \in \mathcal F_l$, $\hat{\mathbf Z}_{m^{\prime}}$ is independent of $\boldsymbol \varepsilon_m$ conditional on $\mathbf X_m$ and $\mathbf G_m$, implying that $\mathbb E(\mathbf u_l) = \mathbf 0$. Moreover, $\left(\hat{\mathbf Z}_m - \mathbf Z_m\right)^{\prime} \boldsymbol \varepsilon_m$ is independent across $m \in \mathcal F_l$, conditional on $\mathbf X_m$ and $\mathbf G_m$. Thus,
\begin{align*}
    \textstyle
    \mathbb V(\mathbf u_l) =
    \frac{1}{M} \sum_{m \in \mathcal F_l}
    \mathbb E \Big[
    (\hat{\mathbf Z}_m - \mathbf Z_m)^{\prime}
    \mathbb E_m(\boldsymbol \varepsilon_m \boldsymbol \varepsilon_m^{\prime})
    (\hat{\mathbf Z}_m - \mathbf Z_m)
    \Big],
\end{align*}
where $\mathbb V$ denotes the variance.

Since $\hat{\mathbf Z}_m - \mathbf Z_m = o_p(1)$ (Assumption~\ref{assumption:iv:ml}), then $\mathbb V(\mathbf u_l)$ is asymptotically zero. Consequently, $\mathbf u_l = o_p(1)$ by mean square convergence. As $L < \infty$, it follows that $\frac{1}{\sqrt{M}} \sum_{m = 1}^M \left(\hat{\mathbf Z}_m - \mathbf Z_m\right)^{\prime} \boldsymbol \varepsilon_m = o_p(1)$. Thus,
$$\textstyle\sqrt{M}(\hat{\boldsymbol\theta} - \boldsymbol\theta_0) = \frac{1}{1 + \beta^l_0} \left(\frac{1}{M} \sum_{m = 1}^M \mathbf Z^{\prime}_m \mathbf V_m\right)^{-1} \left(\frac{1}{\sqrt{M}} \sum_{m = 1}^M \mathbf Z_m^{\prime} \boldsymbol \varepsilon_m\right) + o_p(1).$$

As a result, $\hat{\boldsymbol\theta}$ is a consistent estimator of $\boldsymbol\theta_0$, and $\sqrt{M}(\hat{\boldsymbol\theta} - \boldsymbol{\theta}_0) \overset{d}{\to} N(\mathbf 0, \boldsymbol\Sigma)$. The asymptotic variance is given by
$$\boldsymbol\Sigma = \frac{1}{(1 + \beta^l_0)^2} \boldsymbol\Omega \left(\plim \frac{1}{M} \sum_{m = 1}^M \mathbf Z^{\prime}_m \mathbb E_m(\boldsymbol \varepsilon_m \boldsymbol \varepsilon_m^{\prime}) \mathbf Z_m \right) \boldsymbol\Omega^{\prime},$$
where $\boldsymbol\Omega = \left(\plim \frac{1}{M} \sum_{m = 1}^M \mathbf Z^{\prime}_m \mathbf V_m\right)^{-1}$.

\subsection{Proof of Proposition \ref{prop:biasSymmetric}}\label{append:prop:biasSymmetric}
The moment condition for the estimation of the symmetric model is given by: 
\begin{equation}\label{eq:momentsym}
    \mathbb E\big(\tilde{\mathbf{Z}}_m^{\prime} (\mathbf y_{m} - \delta_1 \bar{\mathbf y}_{m} - \mathbf{R}_{m} \boldsymbol{\delta}_2)\big) = \mathbf{0},
\end{equation}
where $\mathbf R_m = (\mathbf 1_{n_m}, \, \mathbf X_m, \, \mathbf G_m \mathbf X_m)$ and $\delta_1 = \dfrac{\beta}{1 + \beta}$.

We first prove that $\dfrac{\beta}{1+\beta} = \dfrac{\beta^l}{1+\beta^l} + \delta^h_1 \dfrac{\beta^h - \beta^l}{1+\beta^l}$. The expression of the outcome in the asymmetric model (Equation \eqref{eq:yi2}) can be written as:
\begin{equation}\label{eq:yi2:matrix}
    \mathbf y_m = \dfrac{\beta^l \bar{\mathbf{y}}_m + (\beta^h - \beta^l)\check{\mathbf{y}}_m^h + \mathbf R_m \boldsymbol \gamma + \boldsymbol \varepsilon_m}{1 + \beta^l},
\end{equation}
where $\boldsymbol \gamma = (c,\, \boldsymbol{\gamma}_1^{\prime}, \, \boldsymbol{\gamma}_2^{\prime})^{\prime}$. Substituting $\mathbf y_m$ in Equation \eqref{eq:momentsym} yields:
\begin{equation}\label{eq:proof5}
    \bigg(\dfrac{\beta^l}{1+\beta^l} - \delta_1 \bigg)\mathbb E(\tilde{\mathbf{Z}}_m^{\prime} \bar{\mathbf y}_{m}) 
    + \dfrac{\beta^h - \beta^l}{1+\beta^l} \mathbb E(\tilde{\mathbf{Z}}_m^{\prime} \check{\mathbf{y}}_m^h) + \mathbb E(\tilde{\mathbf{Z}}_m^{\prime} \mathbf{R}_{m})\bigg( \dfrac{\boldsymbol \gamma}{1+\beta^l} - \boldsymbol \delta_2 \bigg) = \mathbf{0},
\end{equation}
because $\mathbb E\big(\tilde{\mathbf{Z}}_m^{\prime} \boldsymbol \varepsilon_m\big)=0$ by the exogeneity assumption in Assumption \ref{assumption:data:exogeneity}.

Next, we use the moment condition $\mathbb E \left( \tilde{\mathbf{Z}}_m^{\prime} \big( \check{\mathbf y}_m^h - \delta^h_1 \bar{\mathbf y}_m - \mathbf R_m \boldsymbol{\delta}^h_2 \big) \right) = \mathbf{0}$ of the auxiliary IV \eqref{eq:auxiliaryIV}, which implies $\mathbb E \big( \tilde{\mathbf{Z}}_m^{\prime} \check{\mathbf y}_m^h \big) = \delta^h_1 \mathbb E \big( \tilde{\mathbf{Z}}_m^{\prime} \bar{\mathbf y}_m \big) - \mathbb E \big( \tilde{\mathbf{Z}}_m^{\prime} \mathbf R_m \big)\boldsymbol{\delta}^h_2$. Substituting $\mathbb E \big( \tilde{\mathbf{Z}}_m^{\prime} \check{\mathbf y}_m^h \big)$ into Equation \eqref{eq:proof5} implies that:
\begin{equation*}
    \left(\dfrac{\beta^l}{1+\beta^l} +\dfrac{\beta^h-\beta^l}{1+\beta^l}\delta^h_1-\delta_1\right)\mathbb E(\tilde{\mathbf{Z}}_m^{\prime} \bar{\mathbf y}_{m}) + \mathbb E(\tilde{\mathbf{Z}}_m^{\prime} \mathbf{R}_{m})\left( \dfrac{\boldsymbol \gamma}{1+\beta^l}+\dfrac{\beta^h-\beta^l}{1+\beta^l}\boldsymbol \delta_2^h-\boldsymbol\delta_2\right)
    = \mathbf{0}.
\end{equation*}

As $M$ tends to infinity, Assumption~\ref{assumption:iv:fullrank} implies that $\mathbb E(\tilde{\mathbf{Z}}_m^{\prime}\mathbf V_m)$ is full rank. Thus, $\mathbb E(\tilde{\mathbf{Z}}_m^{\prime} \mathbf{R}_{m})$ and $\mathbb E(\tilde{\mathbf{Z}}_m^{\prime} \bar{\mathbf y}_{m})$ are linearly independent because $\tilde{\mathbf{Z}}_m$ is composed of columns of $\hat{\mathbf{Z}}_m$, and $\mathbf{R}_{m}$ and $\bar{\mathbf y}_{m}$ are also composed of columns of $\mathbf V_m$. As a result, $\dfrac{\beta^l}{1+\beta^l} +\dfrac{\beta^h-\beta^l}{1+\beta^l}\delta^h_1-\delta_1 = 0$, and the result follows because $\delta_1 = \dfrac{\beta}{1 + \beta}$.

To prove that $\dfrac{\beta}{1+\beta} = \dfrac{\beta^h}{1+\beta^h} + \delta^l_1 \dfrac{\beta^l - \beta^h}{1+\beta^h}$, we proceed analogously, except that we substitute $\mathbf y_m$ in Equation \eqref{eq:momentsym} with the following alternative representation:
\begin{equation*}
     \mathbf y_m = \dfrac{\beta^h \bar{\mathbf{y}}_m + (\beta^l - \beta^h)\check{\mathbf{y}}_m^l + \mathbf R_m \boldsymbol \gamma + \boldsymbol \varepsilon_m}{1 + \beta^h},
\end{equation*}
As for Equation \eqref{eq:yi2:matrix}, this expression is derived from Equation \eqref{eq:yi1} by substituting $g_{m,i}^h = 1 - g_{m,i}^l$ and $\bar{y}_{m,i}^h = \bar{y}_{m,i} - \bar{y}_{m,i}^l$. 

Furthermore, in the resulting equation, which is the counterpart of \eqref{eq:proof5}, we substitute $\mathbb E \big( \tilde{\mathbf{Z}}_m^{\prime} \check{\mathbf y}_m^l \big) = \delta^l_1 \mathbb E \big( \tilde{\mathbf{Z}}_m^{\prime} \bar{\mathbf y}_m \big) - \mathbb E \big( \tilde{\mathbf{Z}}_m^{\prime} \mathbf R_m \big)\boldsymbol{\delta}^l_2$, which is derived from the second moment condition of the auxiliary IV \eqref{eq:auxiliaryIV}.

\clearpage
{\linespread{1}
\fontsize{11}{14}\selectfont
\bibliography{references}

@article{rothenberg1971identification,
  title={Identification in parametric models},
  author={Rothenberg, Thomas J},
  journal={Econometrica: Journal of the Econometric Society},
  pages={577--591},
  year={1971},
  publisher={JSTOR}
}

@article{beaman2021can,
  title={Can network theory-based targeting increase technology adoption?},
  author={Beaman, Lori and BenYishay, Ariel and Magruder, Jeremy and Mobarak, Ahmed Mushfiq},
  journal={American Economic Review},
  volume={111},
  number={6},
  pages={1918--1943},
  year={2021},
  publisher={American Economic Association 2014 Broadway, Suite 305, Nashville, TN 37203}
}

@article{de2020two,
  title={Two-way fixed effects estimators with heterogeneous treatment effects},
  author={De Chaisemartin, Cl{\'e}ment and d’Haultfoeuille, Xavier},
  journal={American Economic Review},
  volume={110},
  number={9},
  pages={2964--2996},
  year={2020},
  publisher={American Economic Association 2014 Broadway, Suite 305, Nashville, TN 37203}
}

@article{griffith2022name,
  title={Name your friends, but only five? the importance of censoring in peer effects estimates using social network data},
  author={Griffith, Alan},
  journal={Journal of Labor Economics},
  volume={40},
  number={4},
  pages={779--805},
  year={2022},
  publisher={The University of Chicago Press Chicago, IL}
}

@article{ball2013friendship,
  title={Friendship networks and social status},
  author={Ball, Brian and Newman, Mark EJ},
  journal={Network Science},
  volume={1},
  number={1},
  pages={16--30},
  year={2013},
  publisher={Cambridge University Press}
}

@article{luttmer2005neighbors,
  title={Neighbors as negatives: Relative earnings and well-being},
  author={Luttmer, Erzo FP},
  journal={The Quarterly Journal of Economics},
  volume={120},
  number={3},
  pages={963--1002},
  year={2005},
  publisher={MIT Press}
}

@article{carrell2013natural,
  title={From natural variation to optimal policy? The importance of endogenous peer group formation},
  author={Carrell, Scott E and Sacerdote, Bruce I and West, James E},
  journal={Econometrica},
  volume={81},
  number={3},
  pages={855--882},
  year={2013},
  publisher={Wiley Online Library}
}

@article{sacerdote2014experimental,
  title={Experimental and quasi-experimental analysis of peer effects: two steps forward?},
  author={Sacerdote, Bruce},
  journal={Annu. Rev. Econ.},
  volume={6},
  number={1},
  pages={253--272},
  year={2014},
  publisher={Annual Reviews}
}

@article{viviano2025policy,
  title={Policy targeting under network interference},
  author={Viviano, Davide},
  journal={Review of Economic Studies},
  volume={92},
  number={2},
  pages={1257--1292},
  year={2025},
  publisher={Oxford University Press UK}
}

@article{kitagawa2018should,
  title={Who should be treated? empirical welfare maximization methods for treatment choice},
  author={Kitagawa, Toru and Tetenov, Aleksey},
  journal={Econometrica},
  volume={86},
  number={2},
  pages={591--616},
  year={2018},
  publisher={Wiley Online Library}
}

@article{comola2025heterogeneous,
  title={Heterogeneous peer effects and gender-based interventions for teenage obesity},
  author={Comola, Margherita and Dieye, Rokhaya and Fortin, Bernard},
  journal={Journal of Health Economics},
  volume={102},
  pages={103023},
  year={2025},
  publisher={Elsevier}
}

@article{feld2017understanding,
  title={Understanding peer effects: On the nature, estimation, and channels of peer effects},
  author={Feld, Jan and Z{\"o}litz, Ulf},
  journal={Journal of Labor Economics},
  volume={35},
  number={2},
  pages={387--428},
  year={2017},
  publisher={University of Chicago Press Chicago, IL}
}

@article{banerjee2019using,
  title={Using gossips to spread information: Theory and evidence from two randomized controlled trials},
  author={Banerjee, Abhijit and Chandrasekhar, Arun G and Duflo, Esther and Jackson, Matthew O},
  journal={The Review of Economic Studies},
  volume={86},
  number={6},
  pages={2453--2490},
  year={2019},
  publisher={Oxford University Press}
}

@article{akerlof1997,
	title = {Social Distance and Social Decisions},
	volume = {65},
	doi = {10.2307/2171877},
	language = {en},
	number = {95},
		journal = {Econometrica},
	author = {Akerlof, George},
	year = {1997},
	pages = {1005},
}

@article{Bramoulle2009,
	title = {Identification of peer effects through social networks},
	volume = {150},
	journal = {journal of Econometrics},
	author = {Bramoullé, Y. and Djebbari, H. and Fortin, B.},
	year = {2009},
	pages = {41--55},
}

@article{ballester2006,
  title={Who's who in networks. Wanted: The key player},
  author={Ballester, Coralio and Calv{\'o}-Armengol, Antoni and Zenou, Yves},
  journal={Econometrica},
  volume={74},
  number={5},
  pages={1403--1417},
  year={2006},
  publisher={Wiley Online Library}
}

@article{Manski1993,
	title = {Identification of endogenous social effects: The reflection problem},
	volume = {60},
	journal = {Review of Economic Studies},
	author = {Manski, F. C.},
	year = {1993},
	pages = {531--542},
}

@inproceedings{mackey2018orthogonal,
  title={Orthogonal machine learning: Power and limitations},
  author={Mackey, Lester and Syrgkanis, Vasilis and Zadik, Ilias},
  booktitle={International Conference on Machine Learning},
  pages={3375--3383},
  year={2018},
  organization={PMLR}
}

@article{houndetoungan2026count,
    journal={Journal of Applied Econometrics},
    author={Aristide Houndetoungan},
    title={Count Data Models With Heterogeneous Peer Effects Under Rational Expectations},
    year={2026},
    month={April},
    pages={295-309},
    volume={41},
    number={3},
    doi={10.1002/jae.70045},
    url={https://ideas.repec.org/a/wly/japmet/v41y2026i3p295-309.html},
}

@inproceedings{de2017econometrics,
  title={Econometrics of network models},
  author={De Paula, Aureo},
  booktitle={Advances in economics and econometrics: Theory and applications, eleventh world congress},
  pages={268--323},
  year={2017},
  organization={Cambridge University Press Cambridge}
}

@article{Chernozhukov2018Double,
    author = {Chernozhukov, Victor and Chetverikov, Denis and Demirer, Mert and Duflo, Esther and Hansen, Christian and Newey, Whitney and Robins, James},
    title = {Double/debiased machine learning for treatment and structural parameters},
    journal = {The Econometrics Journal},
    volume = {21},
    number = {1},
    pages = {C1-C68},
    year = {2018},
    month = {02},
    issn = {1368-4221},
    doi = {10.1111/ectj.12097}
}

@article{bonhomme2026higher,
  title={Higher-Order Neyman Orthogonality in Moment-Condition Models},
  author={Bonhomme, St{\'e}phane and Jochmans, Koen and Newey, Whitney K and Weidner, Martin},
  journal={arXiv preprint arXiv:2605.10842},
  year={2026}
}

@article{carrell2010sex,
  title={Sex and Science: How Professor Gender Perpetuates the Gender Gap},
  author={Carrell, Scott E and Page, Marianne E and West, James E},
  journal={Quarterly Journal of Economics},
  volume={125},
  number={3},
  pages={1101--1144},
  year={2010},
  publisher={Oxford University Press (OUP)}
}

@article{wills1981downward,
  title={Downward comparison principles in social psychology.},
  author={Wills, Thomas A},
  journal={Psychological bulletin},
  volume={90},
  number={2},
  pages={245},
  year={1981},
  publisher={American Psychological Association}
}

@article{brock2007identification,
  title={Identification of binary choice models with social interactions},
  author={Brock, William A and Durlauf, Steven N},
  journal={journal of Econometrics},
  volume={140},
  number={1},
  pages={52--75},
  year={2007},
  publisher={Elsevier}
}

@article{card2012inequality,
  title={Inequality at work: The effect of peer salaries on job satisfaction},
  author={Card, David and Mas, Alexandre and Moretti, Enrico and Saez, Emmanuel},
  journal={American Economic Review},
  volume={102},
  number={6},
  pages={2981--3003},
  year={2012},
  publisher={American Economic Association}
}

@article{genicot2020aspirations,
  title={Aspirations and economic behavior},
  author={Genicot, Garance and Ray, Debraj},
  journal={Annual Review of Economics},
  volume={12},
  number={1},
  pages={715--746},
  year={2020},
  publisher={Annual Reviews}
}

@article{bernard2026future,
  title={The future in mind: Aspirations and long-term outcomes in rural Ethiopia},
  author={Bernard, Tanguy and Dercon, Stefan and Orkin, Kate and Schinaia, Giulio and Taffesse, Alemayehu Seyoum},
  journal={The Quarterly Journal of Economics},
  pages={qjag002},
  year={2026},
  publisher={Oxford University Press}
}

@article{lee2021,
  title={Who is the key player? A network analysis of juvenile delinquency},
  author={Lee, Lung-Fei and Liu, Xiaodong and Patacchini, Eleonora and Zenou, Yves},
  journal={Journal of Business \& Economic Statistics},
  volume={39},
  number={3},
  pages={849--857},
  year={2021},
  publisher={Taylor \& Francis}
}

@article{chamberlain1987,
  title={Asymptotic efficiency in estimation with conditional moment restrictions},
  author={Chamberlain, Gary},
  journal={journal of Econometrics},
  volume={34},
  number={3},
  pages={305--334},
  year={1987},
  publisher={Elsevier}
}

@article{boucher2025estimating,
  title={Estimating peer effects using partial network data},
  author={Boucher, Vincent and Houndetoungan, Aristide},
  journal={Review of Economics and Statistics},
  pages={1--48},
  year={2025},
  publisher={MIT Press 255 Main Street, 9th Floor, Cambridge, Massachusetts 02142, USA~…}
}

@article{lambotte2025,
  title={Heterogeneity in peer effects for binary outcomes},
  author={Lambotte, Mathieu},
  journal={arXiv preprint arXiv:2511.15891},
  year={2025}
}

@article{galeotti2020,
  title={Targeting interventions in networks},
  author={Galeotti, Andrea and Golub, Benjamin and Goyal, Sanjeev},
  journal={Econometrica},
  volume={88},
  number={6},
  pages={2445--2471},
  year={2020},
  publisher={Wiley Online Library}
}

@article{kleibergen200,
  title={Generalized reduced rank tests using the singular value decomposition},
  author={Kleibergen, Frank and Paap, Richard},
  journal={journal of Econometrics},
  volume={133},
  number={1},
  pages={97--126},
  year={2006},
  publisher={Elsevier}
}

@article{boucher2023,
  title={Toward a general theory of peer effects},
  author={Boucher, Vincent and Rendall, Michelle and Ushchev, Philip and Zenou, Yves},
  year={2024},
  journal={Econometrica},
volume={92},
issue={2},
  pages={543--565},
}

@article{blume_linear_2015,
	title = {Linear {Social} {Interactions} {Models}},
	volume = {123},
	issn = {0022-3808, 1537-534X},
	doi = {10.1086/679496},
	language = {en},
	number = {2},
	urldate = {2022-01-25},
	journal = {Journal of Political Economy},
	author = {Blume, Lawrence E. and Brock, William A. and Durlauf, Steven N. and Jayaraman, Rajshri},
	month = apr,
	year = {2015},
	pages = {444--496},
	
}

@article{boucherfortin2015,
  title={Some challenges in the empirics of the effects of networks},
  author={Boucher, Vincent and Fortin, Bernard},
  journal={Oxford Handbook on the Economics of Networks, Oxford: Oxford University Press, forthcoming},
  year={2015}
}

@article{ushchev2020,
  title={Social norms in networks},
  author={Ushchev, Philip and Zenou, Yves},
  journal={Journal of Economic Theory},
  volume={185},
  pages={104969},
  year={2020},
  publisher={Elsevier}
}

@article{lee_binary_2014,
	title = {Binary {Choice} {Models} with {Social} {Network} under {Heterogeneous} {Rational} {Expectations}},
	volume = {96},
		number = {3},
	urldate = {2022-01-25},
	journal = {Review of Economics and Statistics},
	author = {Lee, Lung-fei and Li, Ji and Lin, Xu},
		year = {2014},
	pages = {402--417},
	}

@article{festinger1954theory,
  title={A theory of social comparison processes},
  author={Festinger, Leon},
  journal={Human Relations},
  volume={7},
  number={2},
  pages={117--140},
  year={1954},
  publisher={Sage Publications Sage CA: Thousand Oaks, CA}
}

@article{zenou2025peer,
  title={Peer vs. network effects: Microfoundations, identification, and beyond},
  author={Zenou, Yves},
  journal={Network Effects: Microfoundations, Identification, and Beyond*(November 05, 2025)},
  year={2025}
}

@article{bramoulle2020peer,
  title={Peer effects in networks: A survey},
  author={Bramoull{\'e}, Yann and Djebbari, Habiba and Fortin, Bernard},
  journal={Annual Review of Economics},
  volume={12},
  number={1},
  pages={603--629},
  year={2020},
  publisher={Annual Reviews}
}

@misc{herstad2026identification,
      title={Identification of Heterogeneous Peer Effects}, 
      author={Eyo I. Herstad and Aristide Houndetoungan and Myungkou Shin},
      year={2026},
      eprint={2410.14317},
      archivePrefix={arXiv},
      primaryClass={econ.EM},
      url={https://arxiv.org/abs/2410.14317}, 
}

@article{akerlof2000economics,
  title={Economics and identity},
  author={Akerlof, George A and Kranton, Rachel E},
  journal={The Quarterly Journal of Economics},
  volume={115},
  number={3},
  pages={715--753},
  year={2000},
  publisher={MIT Press}
}

@article{bernheim1994theory,
  title={A theory of conformity},
  author={Bernheim, B Douglas},
  journal={Journal of Political Economy},
  volume={102},
  number={5},
  pages={841--877},
  year={1994},
  publisher={The University of Chicago Press}
}

@article{benabou2006incentives,
  title={Incentives and prosocial behavior},
  author={B{\'e}nabou, Roland and Tirole, Jean},
  journal={American Economic Review},
  volume={96},
  number={5},
  pages={1652--1678},
  year={2006},
  publisher={American Economic Association}
}
\bibliographystyle{ecta}}

\clearpage
\setcounter{page}{1}
\begin{mytitlepage}
\title{Supplemental Appendix to "\TITLE"}
\maketitle

\end{mytitlepage}

\section{Construction of the Instruments} \label{append:instrument}

In this section, we detail how we construct instruments for $\check{y}_{m,i}$ and $\bar{y}_{m,i}$. These instruments are predictions of $\mathbb{E}_m(\check{y}_{m,i})$ and $\mathbb{E}_m(\bar{y}_{m,i})$ obtained using flexible supervised machine learning (ML) methods.

We define some notation for convenience. Let $\mathcal{N}_m$ denote the set of individuals in network $m$. We split the sample across networks into $L \geq 2$ folds labeled $\mathcal{F}_1, \dots, \mathcal{F}_L$, where $L < \infty$. For any fold $\mathcal{F}_l$, we define
\begin{align*}
    \check{\mathcal{S}}_l 
    &= \Big\{(i, j) \in \mathcal{F}_l \times \mathcal{F}_l : \exists~m \ \text{such that}\ \mathcal{N}_m \subset \mathcal{F}_l, \ i, j \in \mathcal{N}_m, \ \text{and} \ g_{m,ij} > 0 \Big\}, \\
    \bar{\mathcal{S}}_l 
    &= \Big\{ i \in \mathcal{F}_l : \exists~m \ \text{such that}\ \mathcal{N}_m  \subset \mathcal{F}_l, \ i \in \mathcal{N}_m, \ \text{and} \ n_{m,i} > 0 \Big\}.
\end{align*}
The set $\check{\mathcal{S}}_l$ includes all pairs of individuals $(i, j)$ from the same network $\mathcal{N}_m$ in fold $\mathcal{F}_l$ such that $g_{m,ij} > 0$, whereas $\bar{\mathcal{S}}_l$ includes all individuals $i$ from some network $\mathcal{N}_m$ who have at least one friend.

We also define
$$
\check{\mathcal{S}}_{-l} = \bigcup_{l' \ne l}^L \check{\mathcal{S}}_{l'} 
\quad \text{and} \quad 
\bar{\mathcal{S}}_{-l} = \bigcup_{l' \ne l}^L \bar{\mathcal{S}}_{l'}.
$$
The set $\check{\mathcal{S}}_{-l}$ contains all pairs of connected individuals not in $\check{\mathcal{S}}_l$, and $\bar{\mathcal{S}}_{-l}$ includes all non-isolated individuals not in $\bar{\mathcal{S}}_l$.

\subsection{Instrument for \texorpdfstring{$\check{y}_{m,i}$}{check y}}\label{append:instrument:check}

For any $i \in \mathcal{N}_m$ with $\mathcal{N}_m \subset \mathcal{F}_l$, we can write $\mathbb{E}_m(\check{y}_{m,i})$ from Equation \eqref{eq:Eycheck} as
\begin{equation*}
    \mathbb{E}_m(\check{y}_{m,i}) 
    = \sum_{(i, j) \in \check{\mathcal{S}}_l} g_{m,ij}\,
    \underbrace{\mathbb{P}_m\{\Delta^y_{m,ji} > 0\}}_{\text{Extensive margin}} \,
    \underbrace{\mathbb{E}_m(\Delta^y_{m,ji} \mid \Delta^y_{m,ji} > 0)}_{\text{Intensive margin}}.
\end{equation*}

Consequently, we focus on predicting extensive and intensive margins for pairs $(i, j) \in \check{\mathcal{S}}_l$. Since $\check{\mathbf{w}}_{m,i} = \big(\mathbf{x}_{m,i}^{\prime}, \, \mathbf{g}_{m,i}\mathbf{X}_m,\, \mathbf{g}_{m,i}\mathbf{G}_m\mathbf{X}_m\big)^{\prime}$, where $\mathbf{g}_{m,i}$ is the $i$-th row of $\mathbf{G}_m$, we define $\check{\mathbf{w}}_{m,ji} = \check{\mathbf{w}}_{m,j} - \check{\mathbf{w}}_{m,i}$ as the predictor for both margins.

For any $(i, j) \in \check{\mathcal{S}}_l$, with $i, j \in \mathcal{N}_m$, we have
\begin{align*}
    \hat{\mathbb{P}}_m\{\Delta^y_{m,ji} > 0\} &= \check{h}_l^{\text{ext}}(\check{\mathbf{w}}_{m,ji}), \\
    \hat{\mathbb{E}}_m(\Delta^y_{m,ji} \mid \Delta^y_{m,ji} > 0) &= \check{h}_l^{\text{int}}(\check{\mathbf{w}}_{m,ji}),
\end{align*}
where $\hat{\mathbb{P}}_m$ and $\hat{\mathbb{E}}_m$ denote the predicted probability and expectation, respectively. The mappings $\check{h}_l^{\text{ext}}$ and $\check{h}_l^{\text{int}}$ are estimated using supervised ML methods. We rely on random forests, although other methods could also be used.

Specifically, to obtain $\check{h}_l^{\text{ext}}$, we train random forests where the target variable is $\mathbbm{1}\{\Delta^y_{m,ji} > 0\}$ and the predictor is $\check{\mathbf{w}}_{m,ji}$. This training step is performed using data from $\check{\mathcal{S}}_{-l}$ only, ensuring that $\check{h}_l^{\text{ext}}$ is independent of the data in $\check{\mathcal{S}}_l$. Consequently, $\check{h}_l^{\text{ext}}(\check{\mathbf{w}}_{m,ji})$ is independent of $\boldsymbol{\varepsilon}_m$.

Similarly, to obtain $\check{h}_l^{\text{int}}$, we train random forests where the target variable is $\Delta^y_{m,ji}$ and the predictor is $\check{\mathbf{w}}_{m,ji}$. The training is performed using data from the sample
$$\check{\mathcal S}_{-l}^{int} = \Big\{(i, j) \in \check{\mathcal S}_{-l}: ~ i, \, j \in \mathcal N_{m^\prime}, \, \text{and} \, \Delta^y_{m^\prime,ji} > 0\Big\}.$$ 
This ensures that $\check{h}_l^{\text{int}}(\check{\mathbf{w}}_{m,ji})$ is independent of $\boldsymbol{\varepsilon}_m$, while being interpretable as a prediction of a conditional expectation given $\Delta^y_{m,ji} > 0$.

The rationale for the cross-fitting approach is that $\check{h}_l^{\text{ext}}$ and $\check{h}_l^{\text{int}}$ do not depend on $l$ asymptotically. Therefore, the training can be performed excluding data from $\check{\mathcal{S}}_l$.

Finally, we construct the instrument $\hat{\mathbb{E}}_m(\check{y}_{m,i})$ as
\begin{equation*}
    \hat{\mathbb{E}}_m(\check{y}_{m,i}) 
    = \sum_{j \ne i} g_{m,ij}\,
    \hat{\mathbb{P}}_m\{\Delta^y_{m,ji} > 0\} \,
    \hat{\mathbb{E}}_m(\Delta^y_{m,ji} \mid \Delta^y_{m,ji} > 0).
\end{equation*}

\subsection{Instrument for \texorpdfstring{$\bar{y}_{m,i}$}{bar y}}
As discussed in Section \ref{sec:estim}, given the prediction $\hat{\mathbb{E}}_m(\check{y}_{m,i})$ obtained above,
$$
\bar{\mathbf{w}}_{m,i} 
= \big(
\mathbf{g}_{m,i}\mathbf{X}_m,\,
\mathbf{g}_{m,i}\mathbf{G}_m\mathbf{X}_m,\,
\mathbf{g}_{m,i} \hat{\mathbb{E}}_m(\check{\mathbf{y}}_m),\,
\mathbf{g}_{m,i}\mathbf{G}_m \hat{\mathbb{E}}_m(\check{\mathbf{y}}_m)
\big)^{\prime}
$$
is a natural predetermined predictor of $\mathbb{E}_m(\bar{y}_{m,i})$.

For any $i \in \bar{\mathcal{S}}_l$, with $i \in \mathcal{N}_m$, we thus have
\begin{equation*}
   \hat{\mathbb{E}}_m(\bar{y}_{m,i}) = \bar{h}_l(\bar{\mathbf{w}}_{m,i}).
\end{equation*}

To obtain $\bar{h}_l$, we train random forests where the target variable is $\bar{y}_{m,i}$ and the predictor is $\bar{\mathbf{w}}_{m,i}$. As in Section \ref{append:instrument:check}, this training is performed using data from $\bar{\mathcal{S}}_{-l}$, ensuring that $\bar{h}_l(\bar{\mathbf{w}}_{m,i})$ is independent of $\boldsymbol{\varepsilon}_m$.

\section{Summary Statistics and Detailed Empirical Results}\label{append:empirics}

\begin{landscape}
\begin{table}[htbp]
\footnotesize
\centering
\caption{Summary Statistics}
\label{tab:sumstats}
\begin{threeparttable}
\begin{tabular}{l cccc cccc cccc cccc}
\toprule
& \multicolumn{4}{c}{Smoke} & \multicolumn{4}{c}{Fight} & \multicolumn{4}{c}{Optimism} & \multicolumn{4}{c}{Drink} \\
\cmidrule(lr){2-5} \cmidrule(lr){6-9} \cmidrule(lr){10-13} \cmidrule(lr){14-17}
 & \multicolumn{1}{c}{Mean} & \multicolumn{1}{c}{SD} & \multicolumn{1}{c}{Min} & \multicolumn{1}{c}{Max} 
 & \multicolumn{1}{c}{Mean} & \multicolumn{1}{c}{SD} & \multicolumn{1}{c}{Min} & \multicolumn{1}{c}{Max} 
 & \multicolumn{1}{c}{Mean} & \multicolumn{1}{c}{SD} & \multicolumn{1}{c}{Min} & \multicolumn{1}{c}{Max} 
 & \multicolumn{1}{c}{Mean} & \multicolumn{1}{c}{SD} & \multicolumn{1}{c}{Min} & \multicolumn{1}{c}{Max}  \\
\midrule
Outcome            & 0.925  & 2.189 & 0   & 7   & 1.336  & 2.159 & 0   & 8   & 6.519   & 1.77   & 0   & 8   & 0.402  & 1.114 & 0   & 7   \\
Age                & 15.047 & 1.695 & 10  & 19  & 15.083 & 1.683 & 10  & 19  & 15.035  & 1.694  & 10  & 19  & 15.047 & 1.695 & 10  & 19  \\
Female             & 0.509  & 0.5   & 0   & 1   & 0.511  & 0.5   & 0   & 1   & 0.51    & 0.5    & 0   & 1   & 0.509  & 0.5   & 0   & 1   \\
Grade              & 9.665  & 1.596 & 6   & 12  & 9.71   & 1.587 & 6   & 12  & 9.655   & 1.599  & 6   & 12  & 9.665  & 1.597 & 6   & 12  \\
Hispanic           & 0.168  & 0.374 & 0   & 1   & 0.161  & 0.367 & 0   & 1   & 0.167   & 0.373  & 0   & 1   & 0.168  & 0.374 & 0   & 1   \\
\multicolumn{17}{l}{\textit{Race}} \\
\quad White        & 0.635  & 0.481 & 0   & 1   & 0.645  & 0.479 & 0   & 1   & 0.636   & 0.481  & 0   & 1   & 0.635  & 0.481 & 0   & 1   \\
\quad Black        & 0.174  & 0.379 & 0   & 1   & 0.17   & 0.376 & 0   & 1   & 0.173   & 0.378  & 0   & 1   & 0.174  & 0.379 & 0   & 1   \\
\quad Asian        & 0.069  & 0.253 & 0   & 1   & 0.069  & 0.253 & 0   & 1   & 0.068   & 0.252  & 0   & 1   & 0.069  & 0.253 & 0   & 1   \\
\multicolumn{17}{l}{\textit{Mother education}} \\
\quad $<$ High     & 0.17   & 0.375 & 0   & 1   & 0.164  & 0.371 & 0   & 1   & 0.171   & 0.376  & 0   & 1   & 0.17   & 0.375 & 0   & 1   \\
\quad $>$ High     & 0.405  & 0.491 & 0   & 1   & 0.412  & 0.492 & 0   & 1   & 0.407   & 0.491  & 0   & 1   & 0.405  & 0.491 & 0   & 1   \\
\quad Missing      & 0.127  & 0.333 & 0   & 1   & 0.124  & 0.33  & 0   & 1   & 0.124   & 0.329  & 0   & 1   & 0.127  & 0.333 & 0   & 1   \\
\multicolumn{17}{l}{\textit{Mother job}} \\
\quad Professional & 0.197  & 0.398 & 0   & 1   & 0.201  & 0.4   & 0   & 1   & 0.198   & 0.399  & 0   & 1   & 0.197  & 0.398 & 0   & 1   \\
\quad Other        & 0.422  & 0.494 & 0   & 1   & 0.423  & 0.494 & 0   & 1   & 0.424   & 0.494  & 0   & 1   & 0.422  & 0.494 & 0   & 1   \\
\quad Missing      & 0.179  & 0.383 & 0   & 1   & 0.175  & 0.38  & 0   & 1   & 0.176   & 0.381  & 0   & 1   & 0.179  & 0.383 & 0   & 1   \\
\multicolumn{17}{l}{\textit{Degree}} \\
\quad Matched      & 3.604  & 2.858 & 0   & 10  & 3.519  & 2.825 & 0   & 10  & 3.656   & 2.864  & 0   & 10  & 3.598  & 2.854 & 0   & 10  \\
\quad Unmatched    & 1.726  & 1.871 & 0   & 10  & 1.811  & 1.916 & 0   & 10  & 1.716   & 1.861  & 0   & 10  & 1.732  & 1.873 & 0   & 10  \\
\midrule
$n$ (students)     & \multicolumn{4}{c}{74{,}840}  & \multicolumn{4}{c}{71{,}544}  & \multicolumn{4}{c}{75{,}302}    & \multicolumn{4}{c}{74{,}677}  \\
$M$ (schools)      & \multicolumn{4}{c}{140}    & \multicolumn{4}{c}{140}    & \multicolumn{4}{c}{140}      & \multicolumn{4}{c}{140}   \\
\bottomrule
\end{tabular}
\begin{tablenotes}
\footnotesize
\item Notes: The sample sizes differ across outcomes because of missing outcome values. 
\end{tablenotes}
\end{threeparttable}
\end{table}
\end{landscape}

\begin{scriptsize}
\setlength{\tabcolsep}{10.18pt}
\renewcommand{\arraystretch}{0.8}
\begin{longtable}{p{2.4cm}cccccccc}
\caption{Estimation Results}\label{tab:estimates}\\ 
    \toprule
    \textbf{Outcome} & \multicolumn{2}{c}{\textbf{Smoking}} &\multicolumn{2}{c}{\textbf{Fighting}} & \multicolumn{2}{c}{\textbf{Optimism}} & \multicolumn{2}{c}{\textbf{Drinking}} \\
     Model & Asym. & Sym.  & Asym. & Sym. & Asym. & Sym. & Asym. & Sym. \\
    \midrule
    \endfirsthead
 \multicolumn{9}{r}{\tablename\ \thetable{} -- continued from previous page} \\
\midrule
\textbf{Outcome} & \multicolumn{2}{c}{\textbf{Smoking}} &\multicolumn{2}{c}{\textbf{Fighting}} & \multicolumn{2}{c}{\textbf{Optimism}} & \multicolumn{2}{c}{\textbf{Drinking}}  \\
     Model & Asym. & Sym.  & Asym. & Sym. & Asym. & Sym. & Asym. & Sym. \\
\midrule
\endhead
\midrule
    \multicolumn{9}{r}{\scriptsize \textit{Continued on next page}} \\
\endfoot
\caption*{\justifying \scriptsize Notes: Estimates are reported without parentheses, with standard errors (clustered at the network level) shown in parentheses. To generate the instruments, we use random forests with 5-fold cross-fitting (see Supplemental Appendix~\ref{append:instrument}) and 1,500 trees. For each training step, 20\% of the sample is used to tune the remaining hyperparameters. The row labeled ``KP LM test p-value” reports the p-value of the rank LM test for underidentification.}
\endlastfoot
          \multicolumn{9}{l}{\textbf{Endogenous Peer effects}}\\
$\beta^l$ & 1.219 & & -0.065 & & 3.694 & & 0.842 & \\ 
~ & (0.298) & & (0.087) & & (0.625) & & (0.184) & \\ 
 
$\beta^h$ & 3.448 & & 1.115 & & -0.201 & & 1.206 & \\ 
~ & (0.449) & & (0.179) & & (0.162) & & (0.296) & \\ 
 
$\beta$ & & 2.839 & & 0.232 & & 0.623 & & 0.897 \\ 
~ & & (0.458) & & (0.059) & & (0.093) & & (0.172) \\ 

 Test Symmetry & 2.229 & & 1.180 & & -3.896 & & 0.364 & \\ 
~ & (0.281) & & (0.232) & & (0.737) & & (0.285) & \\

          \multicolumn{9}{l}{\textbf{Own Effects}}\\
 Age & 0.278 & 0.289 & 0.161 & 0.188 & -0.278 & -0.309 & 0.144 & 0.147 \\ 
~ & (0.034) & (0.036) & (0.021) & (0.021) & (0.024) & (0.020) & (0.024) & (0.024) \\ 
 
Female & 0.013 & 0.013 & -0.950 & -0.931 & 0.309 & 0.358 & -0.334 & -0.334 \\ 
~ & (0.041) & (0.040) & (0.043) & (0.043) & (0.034) & (0.029) & (0.031) & (0.031) \\ 
 
Grade & -0.175 & -0.183 & -0.349 & -0.381 & 0.294 & 0.320 & -0.078 & -0.080 \\ 
~ & (0.041) & (0.044) & (0.027) & (0.025) & (0.029) & (0.027) & (0.029) & (0.029) \\ 
 
Hispanic & 0.013 & 0.021 & 0.150 & 0.197 & -0.444 & -0.374 & 0.199 & 0.201 \\ 
~ & (0.077) & (0.087) & (0.053) & (0.055) & (0.064) & (0.048) & (0.037) & (0.037) \\ 
 
\multicolumn{9}{l}{\textit{Race}} \\
\quad White & 0.401 & 0.450 & -0.070 & -0.054 & 0.032 & 0.051 & 0.061 & 0.065 \\ 
~ & (0.070) & (0.078) & (0.046) & (0.048) & (0.053) & (0.043) & (0.030) & (0.030) \\ 
 
\quad Black & -0.507 & -0.524 & 0.129 & 0.148 & -0.032 & -0.018 & -0.005 & -0.002 \\ 
~ & (0.075) & (0.081) & (0.066) & (0.065) & (0.065) & (0.051) & (0.048) & (0.048) \\ 
 
\quad Asian & 0.200 & 0.168 & 0.115 & 0.068 & -0.066 & 0.012 & 0.205 & 0.200 \\ 
~ & (0.077) & (0.088) & (0.056) & (0.057) & (0.082) & (0.062) & (0.056) & (0.056) \\ 
 
\multicolumn{9}{l}{Mother education} \\
\quad < High & 0.069 & 0.056 & 0.020 & 0.039 & -0.278 & -0.272 & 0.021 & 0.022 \\ 
~ & (0.051) & (0.058) & (0.037) & (0.038) & (0.051) & (0.036) & (0.027) & (0.027) \\ 
 
\quad > High & -0.064 & -0.165 & -0.042 & -0.098 & 0.402 & 0.404 & -0.009 & -0.013 \\ 
~ & (0.038) & (0.043) & (0.027) & (0.025) & (0.048) & (0.030) & (0.016) & (0.016) \\ 
 
\quad Missing & 0.317 & 0.267 & 0.183 & 0.198 & -0.287 & -0.230 & 0.175 & 0.176 \\ 
~ & (0.056) & (0.060) & (0.043) & (0.041) & (0.059) & (0.044) & (0.040) & (0.040) \\ 
 
\multicolumn{9}{l}{\textit{Mother job}} \\
\quad Professional & 0.104 & 0.119 & 0.099 & 0.097 & 0.198 & 0.185 & 0.071 & 0.073 \\ 
~ & (0.053) & (0.056) & (0.032) & (0.031) & (0.051) & (0.030) & (0.023) & (0.023) \\ 
 
\quad Other & 0.238 & 0.280 & 0.094 & 0.109 & 0.071 & 0.047 & 0.065 & 0.068 \\ 
~ & (0.054) & (0.061) & (0.028) & (0.026) & (0.037) & (0.023) & (0.020) & (0.020) \\ 
 
\quad Missing & 0.156 & 0.222 & 0.119 & 0.147 & -0.069 & -0.131 & 0.036 & 0.039 \\ 
~ & (0.058) & (0.062) & (0.044) & (0.039) & (0.052) & (0.034) & (0.024) & (0.023) \\ 
\multicolumn{9}{l}{\textit{Degree}} \\
\quad Matched & -0.106 & -0.149 & -0.037 & -0.034 & 0.079 & 0.085 & -0.011 & -0.010 \\ 
~ & (0.017) & (0.022) & (0.005) & (0.005) & (0.009) & (0.006) & (0.004) & (0.004) \\ 
\quad Unmatched & 0.054 & 0.054 & 0.048 & 0.052 & 0.009 & -0.006 & 0.035 & 0.035 \\ 
        ~ & (0.012) & (0.012) & (0.008) & (0.010) & (0.007) & (0.007) & (0.006) & (0.006) \\ 
         \multicolumn{9}{l}{\textbf{Contextual Effects}}\\
Age & -0.170 & 0.172 & -0.039 & 0.098 & 0.350 & -0.134 & 0.005 & 0.041 \\ 
~ & (0.108) & (0.097) & (0.049) & (0.033) & (0.124) & (0.036) & (0.041) & (0.028) \\ 
 
Female & 0.320 & 0.343 & 0.619 & 0.217 & -0.464 & -0.153 & 0.286 & 0.237 \\ 
~ & (0.109) & (0.117) & (0.088) & (0.055) & (0.104) & (0.041) & (0.056) & (0.043) \\ 
 
Grade & 0.109 & -0.047 & 0.173 & -0.064 & -0.398 & 0.102 & -0.000 & -0.024 \\ 
~ & (0.105) & (0.106) & (0.066) & (0.038) & (0.128) & (0.043) & (0.041) & (0.034) \\ 
 
Hispanic & -0.176 & -0.191 & -0.096 & 0.055 & 0.588 & -0.060 & -0.086 & -0.053 \\ 
~ & (0.175) & (0.223) & (0.086) & (0.075) & (0.198) & (0.083) & (0.069) & (0.060) \\ 
 
\multicolumn{9}{l}{\textit{Race}} \\
\quad White & -0.244 & 0.272 & 0.017 & 0.080 & -0.343 & -0.021 & 0.068 & 0.086 \\ 
~ & (0.167) & (0.168) & (0.079) & (0.064) & (0.166) & (0.077) & (0.056) & (0.052) \\ 
 
\quad Black & 0.681 & -0.022 & -0.113 & 0.029 & -0.070 & 0.008 & 0.032 & 0.039 \\ 
~ & (0.178) & (0.188) & (0.101) & (0.091) & (0.152) & (0.081) & (0.065) & (0.061) \\ 
 
\quad Asian & -0.240 & -0.483 & -0.317 & -0.376 & -0.096 & 0.155 & -0.224 & -0.243 \\ 
~ & (0.195) & (0.227) & (0.094) & (0.088) & (0.182) & (0.091) & (0.068) & (0.065) \\ 
 
\multicolumn{9}{l}{\textit{Mother education}} \\
\quad < High & 0.160 & 0.249 & 0.047 & 0.089 & 0.320 & -0.075 & -0.068 & -0.069 \\ 
~ & (0.179) & (0.215) & (0.068) & (0.063) & (0.157) & (0.071) & (0.046) & (0.044) \\ 
 
\quad > High & 0.078 & -0.223 & -0.111 & -0.213 & -0.525 & 0.147 & -0.051 & -0.060 \\ 
~ & (0.129) & (0.150) & (0.060) & (0.049) & (0.153) & (0.049) & (0.041) & (0.040) \\ 
 
\quad Missing & -0.032 & 0.028 & -0.015 & 0.106 & 0.209 & 0.015 & -0.018 & -0.009 \\ 
~ & (0.194) & (0.234) & (0.098) & (0.080) & (0.205) & (0.089) & (0.072) & (0.068) \\ 
 
\multicolumn{9}{l}{\textit{Mother job}} \\
\quad Professional & 0.017 & 0.126 & 0.068 & 0.013 & -0.277 & -0.079 & -0.015 & -0.001 \\ 
~ & (0.166) & (0.205) & (0.066) & (0.058) & (0.129) & (0.066) & (0.055) & (0.051) \\ 
 
\quad Other & 0.078 & 0.386 & 0.086 & 0.106 & -0.122 & -0.129 & 0.045 & 0.067 \\ 
~ & (0.134) & (0.160) & (0.064) & (0.054) & (0.107) & (0.056) & (0.049) & (0.043) \\ 
 
\quad Missing & -0.096 & 0.255 & 0.073 & 0.176 & 0.314 & -0.137 & 0.044 & 0.076 \\ 
~ & (0.207) & (0.240) & (0.094) & (0.077) & (0.181) & (0.069) & (0.071) & (0.061) \\
\multicolumn{9}{l}{\textit{Degree}} \\
\quad Matched  & 0.035 & -0.051 & -0.001 & -0.039 & -0.139 & 0.028 & 0.004 & -0.004 \\ 
        ~ & (0.025) & (0.026) & (0.009) & (0.006) & (0.034) & (0.009) & (0.009) & (0.007) \\ 
\quad Unmatched & -0.094 & 0.025 & -0.003 & 0.038 & 0.008 & -0.026 & -0.001 & 0.009 \\ 
        ~ & (0.038) & (0.035) & (0.015) & (0.012) & (0.028) & (0.013) & (0.011) & (0.008) \\  
        \midrule
F statistic ($\bar y$)	&88.899	&171.597	&42.730	&64.382	&84.574	&103.343	&25.228	&40.992 \\
F statistic 
 ($\check y$)	&62.752	& &23.597	&	&22.324&	&18.574	& \\
        KP LM test p-value & 0.000 & 0.000& 0.002 & 0.000 & 0.000 & 0.000& 0.000& 0.000\\ 
        \bottomrule
\end{longtable}
\end{scriptsize}

\clearpage
\begin{scriptsize}
\setlength{\tabcolsep}{10.18pt}
\renewcommand{\arraystretch}{0.8}
\begin{longtable}{p{2.4cm}cccccccc}
\caption{Estimation Results Without Fake Isolated}\label{tab:estimates_wofakeiso}\\ 
    \toprule
    \textbf{Outcome} & \multicolumn{2}{c}{\textbf{Smoking}} &\multicolumn{2}{c}{\textbf{Fighting}} & \multicolumn{2}{c}{\textbf{Optimism}} & \multicolumn{2}{c}{\textbf{Drinking}} \\
     Model & Asym. & Sym.  & Asym. & Sym. & Asym. & Sym. & Asym. & Sym. \\
    \midrule
    \endfirsthead
 \multicolumn{9}{r}{\tablename\ \thetable{} -- continued from previous page} \\
\midrule
\textbf{Outcome} & \multicolumn{2}{c}{\textbf{Smoking}} &\multicolumn{2}{c}{\textbf{Fighting}} & \multicolumn{2}{c}{\textbf{Optimism}} & \multicolumn{2}{c}{\textbf{Drinking}}  \\
     Model & Asym. & Sym.  & Asym. & Sym. & Asym. & Sym. & Asym. & Sym. \\
\midrule
\endhead
\midrule
    \multicolumn{9}{r}{\scriptsize \textit{Continued on next page}} \\
\endfoot
\caption*{\justifying \scriptsize Notes: Estimates are reported without parentheses, with standard errors (clustered at the network level) shown in parentheses. To generate the instruments, we use random forests with 5-fold cross-fitting (see Supplemental Appendix~\ref{append:instrument}) and 1,500 trees. For each training step, 20\% of the sample is used to tune the remaining hyperparameters. The row labeled ``KP LM test p-value” reports the p-value of the rank LM test for underidentification.}
\endlastfoot
          \multicolumn{9}{l}{\textbf{Endogenous Peer effects}}\\
$\beta^l$ & 1.290 & & -0.164 & & 3.003 & & 0.928 & \\ 
~ & (0.360) & & (0.085) & & (0.536) & & (0.202) & \\ 
 
$\beta^h$ & 3.452 & & 1.274 & & 0.003 & & 1.080 & \\ 
~ & (0.476) & & (0.191) & & (0.150) & & (0.291) & \\ 
 
$\beta$ & & 2.995 & & 0.227 & & 0.654 & & 0.951 \\ 
~ & & (0.525) & & (0.058) & & (0.110) & & (0.181) \\ 

 $\beta^h - \beta^l$ & 2.162 & & 1.437 & & -3.000 & & 0.152 & \\ 
~ & (0.281) & & (0.241) & & (0.617) & & (0.313) & \\

          \multicolumn{9}{l}{\textbf{Own Effects}}\\
 Age & 0.281 & 0.294 & 0.145 & 0.184 & -0.303 & -0.329 & 0.145 & 0.146 \\ 
~ & (0.038) & (0.041) & (0.021) & (0.021) & (0.029) & (0.025) & (0.027) & (0.026) \\ 
 
Female & -0.016 & -0.030 & -0.938 & -0.934 & 0.360 & 0.389 & -0.340 & -0.340 \\ 
~ & (0.049) & (0.051) & (0.042) & (0.041) & (0.042) & (0.037) & (0.032) & (0.032) \\ 
 
Grade & -0.179 & -0.189 & -0.318 & -0.364 & 0.301 & 0.326 & -0.074 & -0.076 \\ 
~ & (0.042) & (0.046) & (0.028) & (0.026) & (0.033) & (0.030) & (0.031) & (0.031) \\ 
 
Hispanic & -0.008 & -0.001 & 0.119 & 0.180 & -0.418 & -0.354 & 0.184 & 0.185 \\ 
~ & (0.071) & (0.078) & (0.053) & (0.054) & (0.066) & (0.049) & (0.035) & (0.034) \\ 
 
\multicolumn{9}{l}{\textit{Race}} \\
\quad White & 0.399 & 0.458 & -0.062 & -0.035 & 0.004 & 0.039 & 0.071 & 0.073 \\ 
~ & (0.074) & (0.081) & (0.046) & (0.047) & (0.057) & (0.046) & (0.033) & (0.032) \\ 
 
\quad Black & -0.487 & -0.502 & 0.124 & 0.154 & -0.078 & -0.041 & 0.012 & 0.013 \\ 
~ & (0.073) & (0.082) & (0.069) & (0.065) & (0.063) & (0.052) & (0.050) & (0.050) \\ 
 
\quad Asian & 0.257 & 0.236 & 0.129 & 0.064 & -0.066 & 0.019 & 0.227 & 0.225 \\ 
~ & (0.084) & (0.096) & (0.056) & (0.052) & (0.088) & (0.064) & (0.060) & (0.061) \\ 
 
\multicolumn{9}{l}{Mother education} \\
\quad < High & 0.089 & 0.082 & 0.017 & 0.045 & -0.250 & -0.260 & 0.022 & 0.023 \\ 
~ & (0.053) & (0.059) & (0.038) & (0.038) & (0.057) & (0.040) & (0.025) & (0.025) \\ 
 
\quad > High & -0.025 & -0.133 & -0.014 & -0.080 & 0.420 & 0.415 & -0.000 & -0.002 \\ 
~ & (0.045) & (0.052) & (0.027) & (0.025) & (0.056) & (0.035) & (0.018) & (0.018) \\ 
 
\quad Missing & 0.331 & 0.277 & 0.166 & 0.189 & -0.237 & -0.201 & 0.168 & 0.168 \\ 
~ & (0.059) & (0.065) & (0.045) & (0.043) & (0.066) & (0.046) & (0.037) & (0.036) \\ 
 
\multicolumn{9}{l}{\textit{Mother job}} \\
\quad Professional & 0.098 & 0.113 & 0.092 & 0.088 & 0.205 & 0.187 & 0.074 & 0.075 \\ 
~ & (0.062) & (0.068) & (0.033) & (0.031) & (0.057) & (0.033) & (0.027) & (0.027) \\ 
 
\quad Other & 0.214 & 0.254 & 0.085 & 0.104 & 0.051 & 0.036 & 0.063 & 0.064 \\ 
~ & (0.050) & (0.055) & (0.029) & (0.027) & (0.043) & (0.025) & (0.022) & (0.022) \\ 
 
\quad Missing & 0.114 & 0.177 & 0.103 & 0.139 & -0.071 & -0.138 & 0.033 & 0.035 \\ 
~ & (0.062) & (0.068) & (0.047) & (0.042) & (0.055) & (0.035) & (0.027) & (0.027) \\ 
\multicolumn{9}{l}{\textit{Degree}} \\
\quad Matched & -0.107 & -0.150 & -0.037 & -0.034 & 0.080 & 0.085 & -0.010 & -0.010 \\ 
~ & (0.018) & (0.024) & (0.005) & (0.005) & (0.009) & (0.007) & (0.004) & (0.004) \\ 
\quad Unmatched & 0.114 & 0.217 & 0.053 & 0.069 & -0.002 & -0.026 & 0.059 & 0.060 \\ 
        ~ & (0.027) & (0.040) & (0.009) & (0.009) & (0.012) & (0.009) & (0.009) & (0.008) \\ 
         \multicolumn{9}{l}{\textbf{Contextual Effects}}\\
Age & -0.161 & 0.159 & -0.063 & 0.100 & 0.243 & -0.127 & 0.023 & 0.039 \\ 
~ & (0.109) & (0.102) & (0.050) & (0.033) & (0.104) & (0.036) & (0.041) & (0.030) \\ 
 
Female & 0.342 & 0.380 & 0.685 & 0.217 & -0.420 & -0.176 & 0.266 & 0.246 \\ 
~ & (0.110) & (0.125) & (0.096) & (0.055) & (0.091) & (0.045) & (0.056) & (0.044) \\ 
 
Grade & 0.107 & -0.033 & 0.204 & -0.078 & -0.270 & 0.109 & -0.016 & -0.026 \\ 
~ & (0.106) & (0.112) & (0.069) & (0.039) & (0.108) & (0.043) & (0.041) & (0.036) \\ 
 
Hispanic & -0.163 & -0.184 & -0.113 & 0.065 & 0.425 & -0.067 & -0.059 & -0.046 \\ 
~ & (0.177) & (0.230) & (0.087) & (0.074) & (0.172) & (0.085) & (0.067) & (0.060) \\ 
 
\multicolumn{9}{l}{\textit{Race}} \\
\quad White & -0.222 & 0.278 & -0.004 & 0.071 & -0.260 & -0.015 & 0.075 & 0.082 \\ 
~ & (0.169) & (0.174) & (0.081) & (0.063) & (0.146) & (0.079) & (0.056) & (0.053) \\ 
 
\quad Black & 0.635 & -0.058 & -0.140 & 0.026 & -0.029 & 0.029 & 0.022 & 0.024 \\ 
~ & (0.178) & (0.191) & (0.105) & (0.091) & (0.132) & (0.082) & (0.066) & (0.065) \\ 
 
\quad Asian & -0.282 & -0.520 & -0.301 & -0.370 & -0.037 & 0.151 & -0.250 & -0.258 \\ 
~ & (0.198) & (0.233) & (0.099) & (0.087) & (0.161) & (0.094) & (0.071) & (0.069) \\ 
 
\multicolumn{9}{l}{\textit{Mother education}} \\
\quad < High & 0.171 & 0.272 & 0.040 & 0.092 & 0.224 & -0.077 & -0.065 & -0.066 \\ 
~ & (0.182) & (0.225) & (0.070) & (0.063) & (0.136) & (0.072) & (0.045) & (0.044) \\ 
 
\quad > High & 0.066 & -0.215 & -0.093 & -0.215 & -0.375 & 0.140 & -0.058 & -0.062 \\ 
~ & (0.131) & (0.155) & (0.061) & (0.048) & (0.139) & (0.051) & (0.042) & (0.040) \\ 
 
\quad Missing & -0.029 & 0.032 & -0.041 & 0.106 & 0.162 & 0.015 & -0.013 & -0.009 \\ 
~ & (0.196) & (0.244) & (0.101) & (0.080) & (0.178) & (0.090) & (0.072) & (0.070) \\ 
 
\multicolumn{9}{l}{\textit{Mother job}} \\
\quad Professional & 0.019 & 0.126 & 0.077 & 0.012 & -0.234 & -0.084 & -0.010 & -0.005 \\ 
~ & (0.169) & (0.214) & (0.069) & (0.057) & (0.111) & (0.069) & (0.056) & (0.052) \\ 
 
\quad Other & 0.086 & 0.379 & 0.081 & 0.106 & -0.123 & -0.130 & 0.055 & 0.064 \\ 
~ & (0.137) & (0.165) & (0.066) & (0.054) & (0.094) & (0.057) & (0.051) & (0.044) \\ 
 
\quad Missing & -0.091 & 0.230 & 0.053 & 0.175 & 0.212 & -0.132 & 0.059 & 0.072 \\ 
~ & (0.210) & (0.250) & (0.097) & (0.077) & (0.153) & (0.071) & (0.071) & (0.063) \\
\multicolumn{9}{l}{\textit{Degree}} \\
\quad Matched  & 0.036 & -0.039 & 0.006 & -0.039 & -0.101 & 0.026 & 0.002 & -0.002 \\ 
        ~ & (0.026) & (0.027) & (0.010) & (0.006) & (0.028) & (0.009) & (0.009) & (0.008) \\ 
\quad Unmatched & -0.110 & -0.032 & -0.013 & 0.033 & 0.003 & -0.019 & -0.004 & 0.000 \\ 
        ~ & (0.040) & (0.042) & (0.015) & (0.011) & (0.024) & (0.014) & (0.011) & (0.009) \\  
        \midrule
F statistic ($\bar y$)	&88.899	&171.597	&42.730	&64.382	&84.574	&103.343	&25.228	&40.992 \\
F statistic 
 ($\check y$)	&62.752	& &23.597	&	&22.324&	&18.574	& \\
        KP LM test p-value & 0.000 & 0.000& 0.002 & 0.000 & 0.000 & 0.000& 0.000& 0.000\\ 
        \bottomrule
\end{longtable}
\end{scriptsize}

\end{document}